\documentclass[11pt,a4paper]{article}

\usepackage{times}
\usepackage{color}
\usepackage{amsmath,amstext,amsthm,amssymb,amsfonts}
\usepackage{a4wide}
\usepackage{comment}
\usepackage{float}
\usepackage{graphicx}
\usepackage{psfrag,subfig}
\usepackage{xspace}
\usepackage{algorithmic}
\usepackage{algorithm}
\usepackage{bbold}
\usepackage{enumerate}

\newcommand{\np}{{\em NP}\xspace}
\newcommand{\nphard}{\np-hard\xspace}

\newtheorem{theorem}{Theorem}[section]
\newtheorem{corollary}[theorem]{Corollary}
\newtheorem{lemma}[theorem]{Lemma}
\newtheorem{claim}[theorem]{Claim}
\newtheorem{clm}[theorem]{Claim}

{\theoremstyle{remark} \newtheorem{remark}[theorem]{Remark}}
{\theoremstyle{definition} \newtheorem{definition}[theorem]{Definition}}

\newenvironment{proofof}[1]{\begin{proof}[Proof of #1]}{\end{proof}}
\newenvironment{proofsketch}{\begin{proof}[Proof Sketch]}{\end{proof}}

\newenvironment{labellist}[1][A]
{\begin{list}{{#1}\arabic{enumi}.}{\usecounter{enumi}\addtolength{\leftmargin}{-1.5ex}}}
{\end{list}}

\newcommand{\mcc}{\mathcal{C}}
\newcommand{\mcs}{\mathcal{S}}
\newcommand{\mcp}{\mathcal{P}}
\newcommand{\mone}{\mathbb{1}}

\newcommand{\ef}{f}

\newcommand{\R}{\ensuremath{\mathbb R}}

\newcommand{\Hc}{\ensuremath{\mathcal{H}}}
\newcommand{\I}{\ensuremath{\mathcal I}}
\newcommand{\K}{\ensuremath{\mathcal K}}

\newcommand{\Pc}{\ensuremath{\mathcal P}}

\newcommand{\sm}{\ensuremath{\setminus}}
\newcommand{\es}{\ensuremath{\emptyset}}
\newcommand{\sse}{\subseteq}

\DeclareMathOperator{\poly}{poly}

\DeclareMathOperator{\argmin}{argmin}

\DeclareMathOperator{\aint}{int}
\DeclareMathOperator{\aext}{ext}

\newcommand{\e}{\ensuremath{\epsilon}}
\newcommand{\ve}{\ensuremath{\varepsilon}}

\newcommand{\vGm}{\ensuremath{\varGamma}}

\newcommand{\al}{\ensuremath{\alpha}}

\newcommand{\dt}{\ensuremath{\delta}}
\newcommand{\Dt}{\ensuremath{\Delta}}

\newcommand{\w}{\ensuremath{\omega}}

\newcommand{\htf}{\ensuremath{\hat f}}
\newcommand{\hl}{\ensuremath{\hat l}}
\newcommand{\ha}{\ensuremath{\hat a}}
\newcommand{\hb}{\ensuremath{\hat b}}
\newcommand{\htau}{\ensuremath{\hat \tau}}

\newcommand{\val}[1]{\ensuremath{|{#1}|}}
\newcommand{\tf}{\ensuremath{\tilde f}}
\newcommand{\tO}{\ensuremath{\tilde O}}
\newcommand{\diff}{\ensuremath{\Dt}}
\newcommand{\assign}{\ensuremath{\leftarrow}}
\newcommand{\poa}{PoA\xspace}

\title{Achieving Target Equilibria in Network Routing Games \\ without Knowing the Latency Functions}

\author{
  Umang Bhaskar\thanks{Dept. of Computing and Mathematical Sciences, California Institute of Technology. Work supported in part
  by a Linde/SISL postdoctoral fellowship and NSF grants CNS-0846025, CCF-1101470 and EPAS-1307794. Email: {\tt umang@caltech.edu}.}
\and
  Katrina Ligett\thanks{Dept. of Computing and Mathematical Sciences, California Institute of Technology. Work supported in part
  by the Charles Lee Powell Foundation and a Microsoft Faculty Fellowship. 
  Email: {\tt katrina@caltech.edu}.} 
\and
  Leonard J. Schulman\thanks{Dept. of Computing and Mathematical Sciences, California Institute of Technology. Work supported in part by NSF grants 1038578 and 1319745. Work
  performed in part at the Simons Institute for the Theory of Computing at UC
  Berkeley. Email: {\tt schulman@caltech.edu}.} 
\and
  Chaitanya Swamy\thanks{Dept. of Combinatorics and Optimization, University of Waterloo. Supported in part by NSERC grant
  32760-06, an NSERC Discovery Accelerator Supplement Award, and an Ontario Early
  Researcher Award.
  Email: {\tt cswamy@math.uwaterloo.ca}.}
}

\date{}

\begin{document}

\maketitle
\def\thepage{}
\begin{abstract}
The analysis of network routing games typically assumes, right at the onset, precise and detailed information about 
the latency functions. Such information may, however, be unavailable or difficult to obtain. Moreover, one is often 
primarily interested in enforcing a desired target flow as the equilibrium by suitably influencing player behavior in 
the routing game. We ask whether one can achieve target flows as equilibria {\em without knowing the underlying 
latency functions}. 

Our main result gives a crisp positive answer to this question. We show that, under fairly general settings, one can 
efficiently compute {\em edge tolls} that induce a given target multicommodity flow in a nonatomic routing game using 
a {\em polynomial number of queries} to an {\em oracle} that takes candidate tolls as input and returns the resulting 
equilibrium flow. This result is obtained via a novel application of the ellipsoid method. Our algorithm extends 
easily to many other settings, such as (i) when certain edges cannot be tolled or there is an upper bound on the total 
toll paid by a user, and (ii) general nonatomic congestion games. We obtain tighter bounds on the query complexity for 
series-parallel networks, and single-commodity routing games with linear latency functions, and complement these with 
a query-complexity lower bound. We also obtain strong positive results for Stackelberg routing to achieve target 
equilibria in series-parallel graphs.
 
Our results build upon various new techniques that we develop pertaining to the computation of, and connections 
between, different notions of approximate equilibrium; properties of multicommodity flows and tolls in series-parallel 
graphs; and sensitivity of equilibrium flow with respect to tolls. Our results demonstrate that one can indeed 
circumvent the potentially-onerous task of modeling latency functions, and yet obtain meaningful results for the 
underlying routing game. 
\end{abstract}

\newpage
\pagenumbering{arabic}
\normalsize

\section{Introduction} \label{intro}
{\em Network routing games} are a popular means of modeling settings where a collection of  
self-interested, uncoordinated users or agents route their traffic along an
underlying network---prominent examples include communication and transportation networks---%
and have been extensively studied from various perspectives in the Transportation
Science and Computer Science literature; 
see, e.g.,~\cite{Wardrop52,BeckmannMW56,Roughgarden05,Roughgarden08,FleischerJM04,KarakostasK04,YangH04,YangZ08,Swamy12},
and the references therein. 
These games are typically described in terms of an underlying directed graph $G=(V,E)$
modeling the network, a set of commodities specified by source-sink pairs and the volume
of traffic routed between them modeling the different user-types, and latency functions or
delay functions $(l^*_e:\R_+\mapsto\R_+)_{e\in E}$ on the edges, with $l^*_e(x)$ modeling the
delay experienced on edge $e$ when volume $x$ of traffic is routed along it.
The outcome of users' strategic behavior is described by the notion of an 
{\em equilibrium} traffic pattern, wherein no user may unilaterally deviate and reduce  
her total delay. 

The typical means of mathematically investigating network routing games 
takes the above specification as input, and thus, assumes, right at the onset, that
one has precise, detailed information about the underlying latency functions. 
However, 
such precise information may be unavailable or  
hard to obtain, especially in large systems, without engaging in 
a highly non-trivial and potentially-expensive modeling task. 
In fact, the task of capturing observed delays via suitable delay functions
is a topic of much research in itself in fields such as queuing theory and transportation
science.  
Recognizing that the modeling task of obtaining suitable latency functions is often really
a means to facilitating a mathematical analysis of the underlying routing game, we ask
whether one can sidestep this potentially-demanding task and analyze the routing game 
{\em without knowing the underlying latency functions}.  
This is the question that motivates our work.

In routing games, there is often a central authority who has some limited ability to
influence agents' behavior by making suitable changes to the routing game, 
e.g., imposing tolls on the network edges.  
This influence can be used to alleviate the detrimental effects of selfish agent behavior,
which might be expressed both in terms of the agents' costs (i.e., price of anarchy)
and externalities not captured by these (e.g., pollution costs in a road network).
Thus, a natural and well-studied goal in network routing games is to 
{\em induce a desirable target traffic pattern as an equilibrium} 
by suitably influencing agents' behavior. Such a target traffic pattern may be obtained by, e.g., limiting the traffic on every edge to a fraction of its capacity, or reducing the traffic near hospitals and schools.
It is evident here that suitably modeling the latency functions is only a means to the
end goal of achieving the target traffic pattern.
Our work aims to shed light on the following question: {\em can one achieve this end without the
means?}

\subsection{Our contributions}
We initiate a systematic study of network routing games from the perspective of achieving 
target equilibria without knowing the latency functions. 
We introduce a {\em query model} for network routing games to study such questions, and
obtain 
bounds on the query complexity of various tasks in this model.

\paragraph{The query model.}
We are explicitly given the underlying network $G=(V,E)$, the set of commodities specified
by the source-sink pairs and the demands to be routed between them, and the 
{\em target multicommodity flow} $f^*$ that we seek to achieve.
We {\em do not}, however, know the underlying latency functions $(l^*_e)_{e\in E}$. 
Instead, the only information that we can glean about the latency functions is via queries
to a {\em black box} or {\em oracle} (e.g., simulation procedure) that outputs the
equilibrium flow under a specified stimulus to the routing game. 
We investigate two methods for influencing agent behavior that have been
considered extensively in the literature,
which gives rise to two types of queries. 

We primarily focus on the task of computing {\em edge tolls} to induce $f^*$
(Sections~\ref{tolls} and~\ref{lb-tolls}). 
This yields the following query model:
each query consists of a vector of tolls on the edges, and returns the equilibrium flow
that results upon imposing these tolls.  
The goal is to minimize the number of queries required to compute tolls that yield $f^*$
as the equilibrium. 

We also explore, in Sections~\ref{stackelberg} and~\ref{lb-stackelberg}, the use
of {\em Stackelberg routing} to induce $f^*$. Here, we control an $\al$ fraction of 
the total traffic volume. 
Each query is a Stackelberg routing, which is a flow of volume at most $\al$ times the
total volume, and returns the equilibrium flow under this Stackelberg 
routing. 
The goal is to minimize the number of queries required to compute a Stackelberg routing  
\nolinebreak \mbox{that induces $f^*$ as the equilibrium.}

\paragraph{Our results and techniques.}
Our main result is a crisp and rather sweeping positive result showing that 
{\em one can always obtain tolls that induce a given target flow $f^*$ with a polynomial
number of queries} (Section~\ref{tolls-ellipsoid}).
With linear latency functions, 
our algorithm computes tolls that enforce $f^*$ exactly 
(Theorem~\ref{thm:ellipsoidnonatomic}). 
With more general latency functions, such as convex polynomial functions, equilibria
may be irrational, so it is not meaningful to assume that a query returns the exact
equilibrium. Instead, we assume that each query returns a (suitably-defined) approximate
equilibrium 
and obtain tolls that enforce a flow that is component-wise close to $f^*$
(Theorem~\ref{thm:ellipsoidapprox}).  

The chief technical novelty underlying these results is an unconventional application of the 
ellipsoid method. We view the problem as one where we are searching for the (parameters of
the) true latency functions $l^*$ and tolls that induce $f^*$. 
It is information-theoretically {\em impossible}, however, to identify $l^*$ (or even get 
close to it) in the query model since,---as is the case even when $G$ is a single
edge---there may be no way of distinguishing two sets of latency functions.
The key insight is that, notwithstanding this difficulty,  
if the current candidate tolls $\tau$ do not enforce $f^*$, then one can use the
resulting equilibrium flow 
to identify a hyperplane that separates our current candidate $(l,\tau)$ from the true
tuple $(l^*,\tau^*)$. This enables one to use the machinery of the ellipsoid method to
obtain tolls enforcing $f^*$ in a polynomial number of queries. 

Our ellipsoid-method based algorithm is quite versatile and can be easily
adapted to handle various generalizations (Section~\ref{tolls-extn}). 
For instance, we can incorporate {\em any} linear 
constraints that tolls inducing $f^*$ must satisfy, which one can separate over. This
captures constraints where we disallow tolls on certain edges, or place an upper bound
on the total toll paid by an agent.
All our machinery extends seamlessly to the more-general setting of 
{\em nonatomic congestion games}. 
Finally, another notable extension is to the setting of {\em atomic routing games}
under the assumption that the equilibrium is unique.  

In Sections~\ref{tolls-sepa} and~\ref{tolls-linear}, we 
devise algorithms with substantially improved query complexity for 
(a) multicommodity routing games on series-parallel (sepa) networks, and 
(b) single-commodity routing games on general networks, 
both with linear latency functions. 
For (a), we exploit the combinatorial structure of sepa graphs to design an algorithm with 
near-linear query complexity.  
We show that any toll-vector in a sepa graph can be converted
into a simpler canonical form, which can be equivalently viewed in terms of certain
labelings of the subgraphs of the sepa graph obtained via parallel joins; leveraging this
yields an algorithm with near-linear query complexity.
Our algorithm works more generally whenever we have an oracle that returns the (exact)
equilibrium. 
For (b), 
we prove that (roughly speaking) the equilibrium flow is a linear function of
tolls, and use linear algebra to infer the constants defining this linear map in
$\tO(|E|^2)$ queries. 

Complementing these upper bounds, we prove an $\Omega(|E|)$  {\em lower bound}
(Theorem~\ref{thm:tolllb}) on the query complexity of
computing tolls that induce a
target flow, even for single-commodity routing games on parallel-link graphs with linear
delays. 
This almost matches the query complexity of our algorithm for sepa graphs.  

En route to obtaining the above results, we prove various results that provide new insights
into network routing games even in the standard non-black-box model where latency
functions are known. For instance, we obtain results on: 
(a) the computation of approximate equilibria and their properties 
(Lemmas~\ref{e-eqbmcomp} and~\ref{eqbm-close}); 
(b) structural properties of tolls and multicommodity flows in sepa graphs
(Section~\ref{tolls-sepa}); and
(c) sensitivity of equilibrium flow with respect to tolls (Theorem~\ref{lem:linearcont}).
We believe that these results and the machinery we develop to obtain them are of
independent interest and likely to find various applications.

\medskip
In Section~\ref{stackelberg}, we investigate the use of Stackelberg routing to induce a
given target flow.
Stackelberg routing turns out to be significantly harder to leverage than edge tolls in
the query model. This is perhaps not surprising given that designing effective Stackelberg
routing strategies turns out to be a much-more difficult proposition than computing
suitable edge tolls, even in the standard non-black-box setting where latency functions 
are given (see, e.g.,~\cite{Roughgarden04,BonifaciHS10}). 
Nevertheless, we build upon the machinery that we develop for sepa graphs to give a rather 
efficient and general combinatorial algorithm that finds the desired Stackelberg routing
using at most $|E|$ queries to an oracle returning equilibrium flows. This applies to any
strictly increasing latency functions, and in particular, to linear latency functions. 
(Observe that this query complexity is even better than our query-complexity bound for 
inducing flows via tolls on sepa graphs.)
Moreover, our algorithm determines the Stackelberg routing of smallest volume that 
can induce $f^*$. 

We obtain various lower bounds in Section~\ref{lb-stackelberg} 
that allude to the difficulty of computing a Stackelberg routing in general networks that
induces a target flow. 
One possible strategy for finding such a  Stackelberg routing is to use the queries to
infer an (approximately) ``equivalent'' set of delay functions $l$, in the sense that any
Stackelberg routing yields the same (or almost the same) resulting equilibrium under the
two sets of delay functions. Then, since given the latency functions, it is easy to
compute a Stackelberg routing that induces a target flow (see Lemma~\ref{stack-basic}),
one can find the desired Stackelberg routing. 
Theorem~\ref{thm:stackelbergequivalence} shows that such an approach cannot work: 
in the query model, any algorithm that learns even an approximately equivalent set of
delay functions must make an {\em exponential} number of queries. 
Theorem~\ref{thm:stackelbergeq} proves an orthogonal computational lower bound showing
that determining the equivalence of two given sets of latency functions is an \nphard
problem. As in the case of tolls, along the way, we uncover a new result about the
hardness of Stackelberg routing. We show that the problem of finding a Stackelberg routing
that minimizes the average delay of the remaining equilibrium flow is \nphard to
approximate within a factor better than ${4}/{3}$ (Theorem~\ref{thm:eqinapprox}).
The query complexity of finding a Stackelberg routing in general networks that induces a 
target flow remains an interesting open question for further research. 

\medskip
Our results on tolls and Stackelberg routing 
demonstrate that 
it is indeed possible to circumvent the potentially-onerous task of modeling latency
functions, and yet obtain meaningful results for the underlying routing game. 
Our array of upper- and  lower- bound results indicate the richness of the query model,
and suggest a promising direction for further research.

\subsection{Related work}
Network routing/congestion games with nonatomic players---where each player controls an 
infinitesimal amount of traffic and there is a continuum of players---%
were first formally studied in the context of road traffic by Wardrop~\cite{Wardrop52},
and the equilibrium notion in such games is known as Wardrop equilibrium after him. 
Network routing games have since been widely studied in the fields of Transportation
Science, Operations Research, and Computer Science; see, e.g., 
the monographs~\cite{Roughgarden05,Roughgarden08} and the references therein. We limit
ourselves to a survey of the results relevant to our work.

Equilibria are known to exist in network routing games, even with atomic players with
splittable flow~\cite{Rosen65}. Nonatomic equilibria are known to be essentially unique,
but this is not the case for atomic splittable routing games, where uniqueness criteria
were recently obtained by Bhaskar et al.~\cite{BhaskarFHH09}. 
Equilibria in routing games are known to be inefficient, and considerable research in
algorithmic game theory has focused on quantifying this inefficiency 
in terms of the \emph{price of anarchy} (\poa)~\cite{KoutsoupiasP99,Papadimitriou01} of the
game, which measures, for a given objective, the worst-case ratio between the objective
values of an equilibrium and the optimal solution.   
A celebrated result of Roughgarden~\cite{Roughgarden03}, and Roughgarden and
Tardos~\cite{RoughgardenT02} gives tight bounds on the \poa for nonatomic routing games
for the social welfare objective. 
Recently, similar results were obtained for the \poa in atomic splittable routing
games~\cite{Harks11,RoughgardenS11}.

Given the inefficiency of equilibria, researchers have investigated ways of
influencing player behavior so as to alleviate this inefficiency.
The most common techniques studied to influence player behavior in network congestion
games are the imposition of tolls on the network edges, and Stackelberg routing. 
Network tolls are a classical means of congestion control, dating back to
Pigou~\cite{Pigou20}, and various results have demonstrated their effectiveness for both
nonatomic routing~\cite{BeckmannMW56,ColeDR03,FleischerJM04,KarakostasK04,YangH04}
and atomic splittable routing~\cite{Swamy12,YangZ08} showing that any minimal flow (in
particular, an optimal flow) can be enforced via suitable efficiently-computable tolls.  
Stackelberg routing has also been well studied, and it is known that this is much-less
effective in reducing the \poa. Whereas they can help in reducing the \poa to a constant
for certain network topologies such as parallel-link graphs~\cite{Roughgarden04} and
series-parallel graphs~\cite{Swamy12}, it is known that this is not possible for general
graphs~\cite{BonifaciHS10}. Furthermore, 
it is known that it is \nphard to compute the Stackelberg routing that minimizes the total
cost at equilibrium, even for parallel-link graphs with linear delay
functions~\cite{Roughgarden04}; a 
PTAS is known~\cite{KumarM02} for parallel-link graphs.  
All of these results 
pertain to the setting where one is given the latency functions.

To our knowledge, our query model has not been studied in the literature. 
It is useful to contrast our query model with 
work in {\em empirical game theory}, which also studies games when players'
costs are not explicitly given. 
In empirical game theory, each query specifies a (pure or mixed) strategy-profile, and returns the (expected) 
cost of each player under this strategy profile. In contrast, in our query model, 
we observe the equilibrium flow instead of individual player delays. 
This is more natural in the setting of routing games: in the absence of knowledge of 
the latency functions, one may only be able to calculate player delays under a strategy 
profile by routing players along the stipulated paths 
(and then observing player delays); but this may be infeasible since one cannot in fact  
impose routes on self-interested players.
Moreover, whereas our goal is to obtain a desirable outcome as the equilibrium,
the focus in empirical game theory is to compute an (approximate) equilibrium. 
Generic approaches to generate strategy-profiles for this purpose, and examples where
these have proved useful are discussed by Wellman~\cite{Wellman06}. An oblivious algorithm
that does not depend on player utilities, and instead uses best-responses to compute a
pure Nash equilibrium in bimatrix games was given by Sureka and Wurman~\cite{SurekaW05}.  
Starting with~\cite{PapadimitriouR08}, various papers have studied the complexity of
computing an exact or approximate correlated equilibrium in multi-player games using both
pure- and mixed-strategy queries~\cite{BabichenkoBP14,HartN13,JiangB13}.
More recently, Fearnley et al.~\cite{FearnleyGGS13} study algorithms in the
empirical-game-theory model for bimatrix games, congestion games, and graphical games,
and obtain various bounds on the number of queries required for equilibrium computation.

\section{Preliminaries and notation} \label{prelim}
A {\em nonatomic routing game} (or simply a routing game) is denoted by a tuple
$\varGamma = (G,l,\K)$, where $G=(V,E)$ is a directed graph with $m$ edges and $n$ nodes,
$l = (l_e)_{e \in E}$ is a vector of latency or delay functions on the edges of the graph,
and $\K=\{(s_i, t_i,d_i)\}_{i \le k}$ is a set of $k$ triples denoting sources, sinks, and
demands for $k$ {\em commodities}. The delay function 
$l_e:\mathbb{R}_+\mapsto\mathbb{R}_+$ gives the delay on edge $e$ as a 
function of the total flow on the edge. (Here, $\R_+$ is the set of nonnegative reals.) We 
assume that $l_e$ is continuous, and strictly increasing. 
For each commodity $i$, the demand $d_i$ specifies the volume of flow that is routed from
$s_i$ to $t_i$ by self-interested agents, each of whom controls an infinitesimal amount of
flow and selects an $s_i$-$t_i$ path as her strategy. The strategies selected by the
agents thus induce a multicommodity flow $(f^i)_{i \le k}$, where each 
$f^i=(f^i_e)_{e\in E}$ is an $s_i$-$t_i$ flow of value $d_i$. That is, the vector
$f^i=(f^i_e)_e$ satisfies:  
$$
f^i\geq 0, \qquad
\sum_{(v,w) \in E} f^i_{vw}-\sum_{(u,v) \in E} f^i_{uv}=0 
\quad \forall v\in V\sm\{s_i,t_i\}, \qquad
\sum_{(s,w) \in E} f^i_{sw}-\sum_{(u,s) \in E} f^i_{us}=d_i.
$$
We call $f=(f^i)_{i\leq k}$ a feasible flow. We say that $f$ is acyclic if 
$\{e: f^i_e>0\}$ is acyclic for every commodity $i$. 
We overload notation and use $f$ to also denote the total-flow vector $f=\sum_{i\leq k}f^i$. 
For a path $P$, we use $f_P>0$ to denote $f_e>0$ for all $e\in P$. 
We sometimes refer to $\bigcup_i\{s_i,t_i\}$ as the terminals of the routing game or
multicommodity flow. Given an $s$-$t$ flow $f$, we use $\val{f}$ to denote the value of  
$f$. 

Let $\mcp^i$ denote the collection of all $s_i$-$t_i$ paths. Given a multicommodity flow
$(f^i)_{i\leq k}$ induced by the agents' strategies, the  
delay of an agent that selects an $s_i$-$t_i$ path $P$ is the total delay, 
$l_P(f):=\sum_{e \in P} l_e(f_e)$, incurred on the edges of $P$.
Each agent in a routing game seeks to minimize her own delay. 
To analyze the resulting strategic behavior, we focus on the concept of a 
{\em Nash equilibrium}, which is a profile of agents' strategies where no individual
agent can reduce her delay by changing her strategy, assuming other agents do not change
their strategies. In routing games, this is formalized by the notion of 
{\em Wardrop equilibrium}.  

\begin{definition} 
A multicommodity flow $\htf$ is a {\em Wardrop equilibrium} (or simply an
equilibrium) of a routing game $\varGamma$ if it is feasible and for every commodity $i$,
and all paths $P$, $Q \in \mcp^i$ with $\htf_P^i>0$, we have $l_P(\htf) \le l_Q(\htf)$. 
A Wardrop equilibrium can be computed by solving the following convex program:
\begin{equation}
\min \ \Phi(f):=\sum_e \int_0^{f_e} l_e(x) \, dx \quad \text{s.t.} \quad 
f = \sum_{i=1}^k f^i, \quad f^i \text{ is an $s_i$-$t_i$ flow of value $d_i$} \ \
\forall i=1,\ldots,k. \label{eqn:wardrop}
\end{equation}
\end{definition}

Given a routing game $\varGamma$ and a feasible flow $f$, define 
$D^i(l,f):=\min_{P\in\Pc^i}l_P(f)$ 
for each commodity $i$, and 
call an edge $e$ a shortest-path edge for commodity $i$ with respect to $f$ 
if $e$ lies on some path $P\in\Pc^i$ such that $l_P(f)=D^i(l,f)$.
Let $\mcs^i(l,f)$ be the set of shortest-path edges for commodity $i$ with respect to
$f$.

\paragraph{Tolls, Stackelberg routing, and our query model.} 
We investigate both the use of edge tolls and Stackelberg routing to induce a given target
flow. Tolls are additional costs on the edges that are paid by every player that uses the
edge. A vector of tolls $\tau=(\tau_e)_e\in\R_+^E$ on the network 
edges thus changes the delay function on each edge $e$ to $l^\tau_e(x):=l_e(x)+\tau_e$,
and so the delay of an agent who chooses $P$ is now $l_P(f)+\tau(P)$, where 
$\tau(P):=\sum_{e\in P}\tau_e$. We use $f(l,\tau)$ to denote the equilibrium flow obtained
with delay functions $l=(l_e)_e$ and tolls $\tau=(\tau_e)_e$.
We say that $\tau$ enforces a multicommodity flow $f$ with latency functions $l$ if the
total flow $f(l,\tau)_e=f_e$ on every edge $e$.

For Stackelberg routing, in keeping with much of the literature, we focus on
single-commodity routing games. 
Given a single-commodity routing game $\vGm=(G,l,(s,t,d))$ and a parameter $\alpha \in [0,1]$,
a central authority 
controls at most an $\al$-fraction of the total $s$-$t$ flow-volume $d$ and routes this
flow in any desired way, and then the remaining traffic routes itself selfishly. 
That is, a Stackelberg routing $g$ is an $s$-$t$ flow of value at most $\al d$, which we
call the Stackelberg demand. 
The Stackelberg routing $g$ modifies the delay function on each edge
$e$ to $\tilde{l}_e(g; x):=l_e(x+g_e)$. The remaining $(1-\al)d$ volume of traffic routes
itself according to a Wardrop equilibrium, denoted by $f(l,g)$, of the instance 
$(G,\tilde{l}, (1-\alpha)d)$. The total flow induced by a Stackelberg routing $g$ is thus
$g+f(l,g)$. 

We shorten $f(l,\tau)$ to $f(\tau)$, and $f(l,g)$ to $f(g)$ when $l$ is clear from the
context.  

In our query model, we are given the graph $G$, the commodity set 
$\K=\{(s_i,t_i,d_i)\}_{i \le k}$, and a feasible {\em target multicommodity flow} $f^*$.  
There is an underlying routing game $\varGamma = (G,l^*,\K)$, to which we are
given query access. If our method of influencing equilibria is via tolls, then the oracle 
takes a toll-vector $\tau$ as input and returns the 
equilibrium flow $f(l^*,\tau)$ or a (suitably-defined) approximate equilibrium. 
Our goal is to minimize the number of queries required to compute tolls 
$\tau^*$ such that $f(l^*,\tau^*)=f^*$. 

If our method of influencing equilibria is via Stackelberg routing, then 
we are also given the parameter $\alpha\in[0,1]$. Each query
takes a Stackelberg routing $g$ with $\val{g}\leq \al d$ as input and returns 
the flow $f(l^*,g)$. Our goal is to minimize the number of queries required to compute a
Stackelberg routing $g^*$ of value at most $\al d$ such that $f(l^*,g^*)+g^* =f^*$, or determine
that no such Stackelberg routing exists.

\paragraph{Properties of equilibria.} 
The following facts about Wardrop equilibria, network tolls, and Stackelberg
routing will be useful. Recall that the delay functions are
nonnegative, continuous, and strictly increasing. 

\begin{list}{$\bullet$}{\itemsep=0.25ex \topsep=0.5ex \addtolength{\leftmargin}{-3ex}}
\item A feasible flow $f$ is an equilibrium flow iff 
$\sum_e (f_e-g_e)l_e(f_e)\leq 0$ for every feasible flow $g$; 
see, e.g.,~\cite{Roughgarden05}. 
Thus, the total-flow vector $(f_e)_e$ induced by an equilibrium flow is unique for 
strictly increasing delay functions.  

\item Every routing game admits an acyclic Wardrop equilibrium $\htf$. 
If the delay functions are polytime computable, then one can solve \eqref{eqn:wardrop} 
and compute: 
(i) $\htf$ in polytime for linear delay functions; 
(ii) an acyclic flow $f$ such that $\Phi(f)\leq\Phi(\htf)+\e$ 
in time $\poly\bigl(\text{input size},\log(\frac{1}{\e})\bigr)$. 
See, e.g.,~\cite{Roughgarden05}, for details. 

\item Every minimal feasible flow $f$ is enforceable via
tolls~\cite{FleischerJM04,KarakostasK04,YangH04}, where $f$ is minimal if 
there is no other feasible flow $g\neq f$ such that $g_e\leq f_e$ for every edge $e$. 
Given the edge delays $\bigl(l_e(f_e)\bigr)_e$, these tolls can be computed by solving an
LP, and are rational provided the commodity demands $(d_i)_i$ and the delays
$\bigl(l_e(f_e)\bigr)_e$ are rational.   

\end{list}

\noindent
The following lemma was essentially shown in~\cite{KaporisS09}; we include a
self-contained proof in Appendix~\ref{append-prelim}.

\begin{lemma} \label{stack-basic}
Let $(G,l,(d,s,t),\alpha)$ be a Stackelberg routing instance, and $f^*$ be a feasible
flow. 
Then, $f(g) + g = f^*$ for a Stackelberg routing $g$ iff $g_e \le f_e^*$ for every
edge $e$, and $g_e=f^*_e$ for all $e\not\in\mcs(l,f^*)$.  
\end{lemma}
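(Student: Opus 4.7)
The plan is to prove both directions via the Wardrop characterization of $f(g)$ as the equilibrium of the modified game $(G,\tilde{l},d-\val{g})$ with $\tilde{l}_e(x):=l_e(x+g_e)$. The key observation throughout is that when evaluated at the flow $h:=f^*-g$, the modified delay is $\tilde{l}_e(h_e)=l_e(h_e+g_e)=l_e(f_e^*)$, so $h$ induces ``effective delays'' in the modified game that exactly match the delays in the original game at $f^*$.

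For the forward direction, assume $f(g)+g=f^*$. Nonnegativity of $f(g)$ immediately yields $g_e\le f_e^*$ on every edge. If additionally $g_e<f_e^*$, then $f(g)_e>0$, so $e$ lies on some simple $s$-$t$ path $P$ carrying positive $f(g)$-flow. The Wardrop condition then forces $P$ to be a shortest $s$-$t$ path under $\tilde{l}_e(f(g)_e)=l_e(f_e^*)$, i.e., shortest in $(l,f^*)$, so $e\in P\subseteq\mcs(l,f^*)$. Contrapositively, $e\notin\mcs(l,f^*)\Rightarrow g_e=f_e^*$.

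For the reverse direction, the hypotheses make $h=f^*-g$ a nonnegative $s$-$t$ flow of value $d-\val{g}$ supported on $\mcs(l,f^*)$. I would show that $h$ is itself an equilibrium of $(G,\tilde{l},d-\val{g})$; by uniqueness of the equilibrium total-flow vector (for strictly increasing latencies) this yields $f(g)=h$, and hence $f(g)+g=f^*$. The main structural ingredient is that every simple $s$-$t$ path $P\subseteq\mcs(l,f^*)$ is itself shortest under $l(f^*)$. This follows from the standard distance-label tightness identity $l_{(u,v)}(f_{(u,v)}^*)=\mathrm{dist}(s,v)-\mathrm{dist}(s,u)$ that every edge of $\mcs(l,f^*)$ satisfies (with $\mathrm{dist}$ computed under $l(f^*)$); summing along $P$ telescopes to $l_P(f^*)=D(l,f^*)$. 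Decomposing $h$ into simple $s$-$t$ paths then gives $\sum_e h_e\, l_e(f_e^*)=(d-\val{g})\,D(l,f^*)$, while any feasible flow $g'$ of value $d-\val{g}$ routes at cost at least $D(l,f^*)$ per unit, so the variational inequality $\sum_e (h_e-g'_e)\tilde{l}_e(h_e)\le 0$ holds, certifying $h$ as the equilibrium.

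The main obstacle is the telescoping step together with control over the decomposition of $h$: one has to verify the standard tightness condition for edges on shortest paths, and also justify that $h$ has no nontrivial cycle component (any cycle in the support of $h$ lies in $\mcs(l,f^*)$, has zero total length by the same telescoping, and therefore, by strict monotonicity of the latencies together with $g,h\ge 0$, must carry no flow). The remaining ingredients---the variational-inequality characterization of Wardrop equilibrium and uniqueness of the equilibrium total-flow vector---are standard facts already quoted earlier in the paper.
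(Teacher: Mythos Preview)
Your proposal is correct and follows essentially the same approach as the paper's proof. Both directions hinge on the observation $\tilde{l}_e(h_e)=l_e(f^*_e)$ for $h=f^*-g$, and both reduce the reverse direction to checking that $h$ is a Wardrop equilibrium of the modified game, then invoking uniqueness. The paper's write-up is terser---it simply asserts that $h$, being supported on shortest-path edges, ``satisfies the conditions for Wardrop equilibrium''---while you spell out the underlying telescoping identity and verify the variational inequality explicitly; your treatment of possible cycles in the support of $h$ is more careful than strictly needed (the cycle contribution to $\sum_e h_e l_e(f^*_e)$ is zero by the same telescoping, so the variational inequality holds regardless), but it is correct.
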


\paragraph{Standard delay functions and encoding length.}
Our results hold for a broad class of underlying delay functions, that we now formally
describe. 
Throughout, we use $\I$ denote the input size of the given routing game.
We assume that we have an estimate $U$ with $\log U=\poly(\I)$ such that the target flow
$f^*$, the parameters of the unknown true delay functions $(l^*_e)_e$, and the
quantities that we seek to compute---tolls $\tau^*$ or the Stackelberg routing $g^*$
inducing $f^*$---all have encoding length $O(\log U)$. 
So we may assume that every $f^*_e$, $\tau^*_e$, $g^*_e$ value is a multiple of
$\frac{1}{U}$, and is at most $U$.  

When considering non-linear delay functions, we assume that the $l^*_e$s are convex
polynomials of degree at most some known constant $r$. Given the $O(\log U)$
encoding length, we may assume that all coefficients lie in $[0,U]$ and and are multiples
of $\frac{1}{U}$. 
We also assume that each $\frac{dl^*_e(x)}{dx}\geq\frac{1}{U}$ for all $x\geq 0$. 
We refer to such functions as {\em standard degree-$r$ polynomials}.
Under these conditions, it is easy to show (see Lemma~\ref{lat-props}) that there is
some constant $K:=K(r)=\poly(U,\sum_i d_i)$ such that every delay function $l^*_e$
satisfies 
\begin{alignat}{3}
(x-y)\bigl(l^*_e(x)-l^*_e(y)\bigr) & \leq\tfrac{\e^2}{K} 
\implies |x-y|\leq\e \quad && \text{for all $x,y,\e\geq 0$} 
\qquad  \label{invkcont} \\
|l^*_e(x)-l^*_e(y)| & \leq K|x-y| \qquad && \text{for all $x,y\in[0,{\textstyle \sum_i d_i}]$} 
\qquad  \label{klip} \\
l^*_e(2x) & \leq Kl^*_e(x) \qquad && \text{for all $x\geq 0$} 
\qquad  \label{kgrowth}
\end{alignat}

These properties are referred to as \emph{inverse-$K$-continuity}, \emph{$K$-Lipschitz}, and \emph{$K$-growth-boundedness} respectively.

\begin{lemma} \label{lat-props}
Let $l(x)=a_0+a_1x+\ldots+a_rx^r$ be a convex degree-$r$ polynomial such that
$a_1>0$, and all $a_i$s lie in $[0,U]$ and are multiples of $\frac{1}{U}$.
Then $l$ satisfies \eqref{invkcont}--\eqref{kgrowth} with
$K=\max\{U,2^r,rU(\sum_i d_i)^{r-1}\}$.  
\end{lemma}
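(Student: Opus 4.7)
The plan is to verify the three properties \eqref{invkcont}--\eqref{kgrowth} in turn via elementary calculus estimates, then take $K$ to be the maximum of the resulting per-property constants. The crucial preliminary observation is that since $a_1>0$ is a multiple of $\frac{1}{U}$ we have $a_1\geq \frac{1}{U}$, and combined with $a_i\geq 0$ for all $i\geq 2$ this yields $l'(x)=\sum_{i=1}^r i\,a_i\,x^{i-1}\geq a_1\geq \frac{1}{U}$ for every $x\geq 0$; moreover, $l'$ is nondecreasing on $[0,\infty)$ by the same nonnegativity.

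For \eqref{invkcont}, I would apply the mean value theorem to write $l(x)-l(y)=l'(\xi)(x-y)$ for some $\xi$ between $x$ and $y$, which gives $(x-y)\bigl(l(x)-l(y)\bigr)=l'(\xi)(x-y)^2\geq (x-y)^2/U$. Thus whenever the left-hand side is at most $\e^2/K$ with $K\geq U$, one concludes $(x-y)^2\leq \e^2 U/K\leq \e^2$, yielding $|x-y|\leq \e$ as required. For \eqref{klip}, the mean value theorem again gives $|l(x)-l(y)|\leq \bigl(\sup_{\xi\in[0,D]} l'(\xi)\bigr)\cdot|x-y|$, where $D=\sum_i d_i$; since $l'$ is nondecreasing, the supremum equals $l'(D)=\sum_{i=1}^r i\,a_i\,D^{i-1}\leq U\sum_{i=1}^r i\,D^{i-1}$, which I would bound by $O\bigl(rU\max\{1,D^{r-1}\}\bigr)$ by using $i\leq r$ and $D^{i-1}\leq \max\{1,D^{r-1}\}$ in each summand. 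For \eqref{kgrowth}, a direct computation suffices: $l(2x)=\sum_{i=0}^r a_i(2x)^i=\sum_{i=0}^r 2^i a_i x^i\leq 2^r\sum_{i=0}^r a_i x^i=2^r\,l(x)$, so it is enough that $K\geq 2^r$ (noting that $l(x)\geq 0$ holds automatically from nonnegativity of the coefficients).

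Taking $K$ to be the maximum of the three resulting bounds yields all three properties simultaneously. No step presents any genuine obstacle; the proof is essentially a bookkeeping exercise to confirm that each per-property constant fits under the stated form $\max\{U,\,2^r,\,rU(\sum_i d_i)^{r-1}\}$, modulo absorbing small multiplicative factors (in $r$, which is a fixed constant) into the polynomial dependence on the inputs. The only place one has to be mildly careful is the Lipschitz bound, where the natural estimate produces a factor that is polynomial, rather than linear, in $r$; since $r$ is constant this is harmless for the downstream use of the lemma.
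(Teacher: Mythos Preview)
Your proposal is correct and follows essentially the same route as the paper: both use the lower bound $l'(x)\geq a_1\geq 1/U$ for \eqref{invkcont}, an upper bound on $l'$ over $[0,\sum_i d_i]$ for \eqref{klip}, and the coefficient-wise estimate $l(2x)\leq 2^r l(x)$ for \eqref{kgrowth}; the only cosmetic difference is that you invoke the mean value theorem whereas the paper uses the equivalent convexity inequalities $l'(\min\{x,y\})\,|x-y|\leq |l(x)-l(y)|\leq l'(\max\{x,y\})\,|x-y|$. Your caveat about the Lipschitz constant is apt: the naive termwise bound on $l'(D)$ indeed picks up an extra factor polynomial in $r$ relative to the stated $rU(\sum_i d_i)^{r-1}$, but as you note this is immaterial since $r$ is a fixed constant.
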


\begin{proof}
Let $l'(x):=\frac{dl(x)}{dx}$ denote the derivative of $l$.
Since $l$ is convex, we have 
$|l(x)-l(y)|\geq |x-y|\cdot l'(\min\{x,y\})\geq|x-y|\cdot l'(0)\geq |x-y|/U$.
Therefore, $\frac{(x-y)^2}{U}\leq (x-y)\bigl(l(x)-l(y)\bigr)\leq\frac{\e^2}{K}$ and so
$|x-y|\leq\e$. 

Again, by convexity, $|l(x)-l(y)|\leq |x-y|\cdot l'(\max\{x,y\})$ and 
$l'(z)\leq rU(\sum_i d_i)^{r-1}\leq K$ for all $z\leq \sum_i d_i$.

Finally, it is clear that $l(2x)\leq 2^rl(x)\leq Kl(x)$ for all $x\geq 0$.
\end{proof}

\section{Inducing target flows via tolls} \label{tolls}
Recall that here we seek to compute tolls that enforce a given target flow $f^*$ given
black-box access to a routing game $\vGm^*=(G,l^*,(s_i,t_i,d_i)_{i\leq k})$, i.e.,
without knowing $l^*$.
Our main result is a crisp positive result showing that we
can always achieve this end with a polynomial number of queries by leveraging the
ellipsoid method in a novel fashion (Section~\ref{tolls-ellipsoid}).  
Our algorithm computes tolls that enforce: (a) $f^*$ exactly, for standard linear latency   
functions (where it is reasonable to assume that the black box returns the exact
equilibrium); and (b) a flow that is component-wise close to $f^*$, for standard
polynomial functions, where we now assume that each query only returns an approximate
equilibrium (see Definition~\ref{newe-eqbm}).
The main idea here is to view the parameters of the latency functions and
the tolls as variables, and use the ellipsoid method to search for the tuple 
$(l^*,\tau^*)$, where $\tau^*$ is such that $f(l^*,\tau^*)=f^*$. 
The key observation is that although we cannot hope to nail down $l^*$, 
given a candidate $(l,\tau)$ such that $f(l^*,\tau)\neq f^*$, one can derive a hyperplane
separating $(l,\tau)$ from $(l^*,\tau^*)$ using $f^*$ and the equilibrium flow
$f(l^*,\tau)$ returned by our oracle.

We showcase the versatility of our algorithm 
by showing that it is easily adapted to handle various extensions
(Section~\ref{tolls-extn}). 
For instance, we can impose {\em any} linear constraints on tolls given by a separation
oracle; examples include the constraint that certain edges cannot be tolled or that the
total toll paid by a user is at most a given budget.
Other notable extensions include the extension to general nonatomic congestion games, and 
to atomic splittable routing games under the assumption that the equilibrium is unique.

Finally, we devise algorithms with significantly improved query complexity for
multicommodity routing games on series-parallel (sepa) networks
(Section~\ref{tolls-sepa}), and single-commodity routing games on general networks
(Section~\ref{tolls-linear}), both with linear latency functions. 
We exploit the combinatorial structure of sepa graphs to design an algorithm with
near-linear query complexity, which almost matches the linear lower bound shown in
Theorem~\ref{thm:tolllb} for even parallel-link graphs with linear latencies.
For single-commodity routing games on general graphs with linear latencies, we show that
flows are linear functions of tolls and infer this linear map using $\tO(m^2)$ queries.

\subsection{An ellipsoid-method based algorithm for general routing
  games} \label{tolls-ellipsoid} 
The ellipsoid method for finding a feasible point starts by containing the
feasible region within a ball and generates a sequence of ellipsoids of successively
smaller volumes. In each iteration, one examines the center of the current ellipsoid. If
this is infeasible, then one uses a violated inequality to obtain a hyperplane, called a
separating hyperplane, to separate the current ellipsoid center from the feasible
region. 
One then generates a new ellipsoid by finding the minimum-volume ellipsoid containing the
half of the current ellipsoid that includes the feasible region.
We utilize the following well-known theorem about the ellipsoid method.

\begin{theorem}[\cite{ellipsoidbook}] \label{ellipthm}
Let $X\sse\R^n$ be a polytope described by constraints having encoding length at most
$M$. Suppose that for each $y\in\R^n$, we can determine if $y\notin X$ and if so, return
a hyperplane of encoding length at most $M$ separating $y$ from $X$. Then, we can use the
ellipsoid method to find a point $x\in X$ or determine that $X=\es$ in time
$\poly(n,M)$. 
\end{theorem}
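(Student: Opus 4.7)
The plan is to instantiate the standard polynomial-time analysis of the ellipsoid method, but using the given separation oracle in place of an explicit constraint system. First, I would bound the search region: since $X$ is described by linear constraints of encoding length at most $M$, any vertex of $X$ has coordinates bounded by $2^{O(nM)}$ (by Hadamard/Cramer bounds), so $X\sse B(0,R)$ for some $R=2^{\poly(n,M)}$. Take $E_0:=B(0,R)$ as the initial ellipsoid; note that $E_0$ can be described with coefficients of encoding length $\poly(n,M)$, matching the bit-size promised by the oracle.

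Second, because $X$ need not be full-dimensional (so $X$ could have volume $0$ even when nonempty), I would work with a perturbed version $X_\ve:=\{x:Ax\leq b+\ve\bfone\}$ (where $Ax\leq b$ is the hypothetical constraint system for $X$, never materialized explicitly) with $\ve=2^{-\poly(n,M)}$ chosen so that $X\neq\es$ iff $X_\ve$ has an inscribed ball of radius $\geq 2^{-\poly(n,M)}$. A standard volume argument then gives $\mathrm{vol}(X_\ve)\geq 2^{-\poly(n,M)}$ whenever $X\neq\es$. The separation oracle for $X$ is adapted into one for $X_\ve$ by shifting each returned hyperplane by an $\ve$-scaled normal; both oracle and shift preserve the $O(M)$ encoding length up to polynomial factors.

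Third, run the ellipsoid recursion. At iteration $k$, query the oracle on the center $c_k$ of $E_k$. If $c_k\in X_\ve$, halt and return $c_k$, which rounds (via simultaneous Diophantine approximation, using the encoding-length bound on vertices of $X$) to an exact point of $X$. Otherwise, the oracle returns a hyperplane $a^Tx\leq a^Tc_k$ that separates $c_k$ from $X$ (and, after the above shift, from $X_\ve$). Let $E_{k+1}$ be the minimum-volume ellipsoid containing $E_k\cap\{x:a^Tx\leq a^Tc_k\}$; the classical computation shows $\mathrm{vol}(E_{k+1})\leq e^{-1/(2n+2)}\mathrm{vol}(E_k)$. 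After $N=O(n^2 M)$ iterations, $\mathrm{vol}(E_N)<2^{-\poly(n,M)}\leq\mathrm{vol}(X_\ve)$ were $X_\ve$ nonempty, so if no feasible center has been found we certify $X=\es$. This yields $\poly(n,M)$ oracle queries overall.

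The main obstacle is numerical precision rather than conceptual: the ellipsoid-update formulas involve square roots, so $E_k$ must be maintained by rational approximations whose bit-length stays $\poly(n,M)$, and one must verify that the accumulated roundoff does not destroy the $e^{-1/(2n+2)}$ shrinkage factor per iteration (the textbook fix is to slightly inflate each $E_{k+1}$ before rounding, giving a weaker but still geometric decay). The secondary technical subtlety is the Diophantine rounding step that converts the approximately-feasible center $c_k\in X_\ve$ into a true point of $X$; both ingredients are standard and yield the claimed $\poly(n,M)$ overall runtime.
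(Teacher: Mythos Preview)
The paper does not prove this theorem at all: it is quoted verbatim as a known result from~\cite{ellipsoidbook} (the Gr\"otschel--Lov\'asz--Schrijver monograph) and used as a black box in the subsequent arguments. Your sketch is a faithful outline of the standard textbook proof---initial bounding ball from encoding-length bounds, volume shrinkage by the factor $e^{-1/(2n+2)}$ per cut, perturbation to handle non-full-dimensional $X$, and the rounding/precision caveats---so there is nothing to compare against here beyond noting that you have correctly reproduced the classical argument the citation points to.
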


\paragraph{Linear latencies.}
We first consider the case where each latency function $l^*_e(x)$ is a standard linear
function of the form $a^*_ex+b^*_e$, and our black box returns the exact
equilibrium flow induced by the input (rational) tolls. 
Thus, for every $e$, $a^*_e\in(0,U), b^*_e\in[0,U]$, and $a^*_e,b^*_e$ are multiples of
$\frac{1}{U}$.  
In a somewhat atypical use of the ellipsoid method, we use the ellipsoid method to search
for the point $(a^*_e,b^*_e,\tau^*_e)_e$. Abusing notation slightly, for a linear latency function $l(x)=ax+b$, we use $l$
to also denote the tuple $(a,b)$. 


\begin{theorem} \label{thm:ellipsoidnonatomic}
Given a target acyclic multicommodity flow $f^*$ and query access to
$\vGm^*$, we can compute tolls that enforce $f^*$ or determine that no such tolls exist,   
in polytime using a polynomial number of queries.    
\end{theorem}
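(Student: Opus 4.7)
The plan is to invoke the ellipsoid method in the joint parameter space $(l, \tau) = ((a_e, b_e)_{e\in E}, (\tau_e)_{e \in E})$ of hypothesized linear latency coefficients and candidate tolls, maintaining the invariant that whenever an enforcing toll vector $\tau^*$ exists, the current ellipsoid contains the true tuple $(l^*, \tau^*)$. Define the polytope $X$ of tuples $(l, \tau)$ satisfying the standard box constraints $a_e \in [1/U, U]$, $b_e \in [0,U]$, $\tau_e \ge 0$, together with the Wardrop conditions for $f^*$ under the hypothesized $l$: for each commodity $i$ introduce node potentials $\pi^i_v$ with $\pi^i_w - \pi^i_v \le a_e f^*_e + b_e + \tau_e$ for every edge $e = (v,w)$, with equality whenever $f^{*i}_e > 0$. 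This is a polynomial-size linear system in $(l, \tau, \pi)$ with a trivial separation oracle, and $(l^*, \tau^*) \in X$ by definition.

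At iteration $t$, let $(l_t, \tau_t)$ be the current ellipsoid center. If $(l_t, \tau_t) \notin X$, emit a violated inequality of $X$ as the separating hyperplane. Otherwise, query the oracle once with tolls $\tau_t$ to obtain $\hat{f}_t := f(l^*, \tau_t)$: if $\hat{f}_t = f^*$, output $\tau_t$ and halt. The delicate case is $\hat{f}_t \neq f^*$, where we need a cut that excludes $(l_t, \tau_t)$ while preserving the target $(l^*, \tau^*)$.

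The crux of the argument is the derivation of this cut from a single query. Applying the variational characterization of Wardrop equilibria twice---for $\hat{f}_t$ as the equilibrium of $(G, l^*, \tau_t)$ tested against $f^*$, and for $f^*$ as the equilibrium of $(G, l^*, \tau^*)$ tested against $\hat{f}_t$---and summing the two resulting inequalities, the linearity $l^*_e(\hat{f}_{t,e}) - l^*_e(f^*_e) = a^*_e \Delta_e$ (with $\Delta_e := \hat{f}_{t,e} - f^*_e$) cancels the mixed latency term and yields
\[
\sum_e a^*_e \Delta_e^2 \;+\; \sum_e \Delta_e(\tau_{t,e} - \tau^*_e) \;\le\; 0,
\]
which rearranges into the linear inequality (in the search variables $(a,\tau)$)
\[
\sum_e \Delta_e\, \tau_e \;-\; \sum_e \Delta_e^2\, a_e \;\ge\; \sum_e \Delta_e\, \tau_{t,e}.
\]
The target $(l^*, \tau^*)$ satisfies this inequality by construction. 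Evaluated at $(l_t, \tau_t)$, its slack equals $-\sum_e a_{t,e} \Delta_e^2$, which is strictly negative because each $a_{t,e} \ge 1/U > 0$ and $\Delta \neq 0$; hence the cut is genuinely separating.

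With these separation routines in hand, the standard guarantees of Theorem~\ref{ellipthm} imply that in $\poly(n, \log U)$ iterations---and hence $\poly(n, \log U)$ queries---either some $\tau_t$ with $f(l^*, \tau_t) = f^*$ has been produced, or the ellipsoid has shrunk small enough that a standard rounding step identifies the unique rational candidate of encoding length $O(\log U)$ that could lie inside, which we verify with a single final oracle query; failure of this verification certifies that no enforcing toll vector exists. The principal technical hurdle is precisely the construction of the cut above: the standard ellipsoid method presumes access to a separation oracle for the feasible set, whereas here that set involves the unknown $l^*$, and the novelty is that the variational identity converts each inconclusive oracle response into a \emph{linear} cut in the joint $(l,\tau)$-space, despite the fact that $l^*$ itself is information-theoretically unidentifiable from such queries.
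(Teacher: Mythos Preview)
Your proposal is correct and follows the same high-level architecture as the paper: run the ellipsoid method in the joint $(l,\tau)$ space, maintaining the invariant that $(l^*,\tau^*)$ stays inside, and query the oracle only when the current center already makes $f^*$ an equilibrium under the hypothesized latencies.

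The genuine difference is in how the separating hyperplane is produced in the ``Case 2'' situation (current center $(l_t,\tau_t)\in X$ but the oracle returns $\hat f_t\neq f^*$). The paper searches the acyclic oracle flow $g=\hat f_t$ for a commodity $i$ and a pair of paths $P,Q\in\Pc^i$ with $g_P>0$ and $\hl_P(g)+\htau(P)>\hl_Q(g)+\htau(Q)$, then uses the single linear inequality $l_P(g)+\htau(P)\le l_Q(g)+\htau(Q)$, in which only the latency coefficients are variables and $\htau$ is fixed. Your cut instead sums the two variational inequalities for $\hat f_t$ and $f^*$ and exploits linearity to obtain the aggregate inequality $\sum_e\Delta_e\tau_e-\sum_e\Delta_e^2 a_e\ge\sum_e\Delta_e\tau_{t,e}$, in which both $a$ and $\tau$ are variables. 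Your cut is slicker---it avoids any path search and gives strict violation immediately from $a_{t,e}\ge 1/U$---but it leans hard on exact linearity of the latencies and on having the exact equilibrium $\hat f_t$: the cancellation $l^*_e(\hat f_{t,e})-l^*_e(f^*_e)=a^*_e\Delta_e$ and the strict positivity of $\sum_e a^*_e\Delta_e^2$ are what make it work. The paper's path-based cut, by contrast, is the one that survives into the approximate-equilibrium setting of Theorem~\ref{thm:ellipsoidapprox}, where one must locate a single path whose slack exceeds $\e\sum_i d_i$; your aggregate inequality would degrade under the $\e$-error in a way that does not obviously yield a strictly violated, bounded-encoding cut. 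So your route is a clean alternative for the exact linear case, while the paper's route is chosen with the polynomial-latency extension in mind.
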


\begin{proof}
We utilize the ellipsoid method and Theorem~\ref{ellipthm}. Given the center
$(\hl=(\ha_e,\hb_e)_e,\htau)$ of the current ellipsoid, we first check if
$\ha,\hb,\htau\geq 0$, and if not, use the violated constraint as the separating
hyperplane. Next, we use the black box to obtain
$g=f(l^*,\htau)$. If $g=f^*$, then we are done. Otherwise, we obtain a separating
hyperplane of encoding length $\poly(\I)$ as follows. 
(Note that the encoding length of $(\hl,\htau)$ is $\poly(\I)$.) 
We consider two cases.

\smallskip
\noindent {\bf Case 1: \boldmath $f(\hl,\htau)\neq f^*$.\ }
Note that we can determine this without having to compute the 
equilibrium flow $f(\hl,\htau)$. Since $f^*$ is acyclic, we can efficiently find a
commodity $i$, and $s_i$-$t_i$ paths $P, Q$ such that $f^*_P>0$ and
$\hl_P(f^*)+\htau(P)>\hl_Q(f^*)+\htau(Q)$. But since $f^*=f(l^*,\tau^*)$, we also have 
$l^*_P(f^*)+\tau^*(P)\leq l^*_Q(f^*)+\tau^*(Q)$. 
Thus, the inequality 
$$l_P(f^*)+\tau(P)\leq l_Q(f^*)+\tau(Q)$$ where the parameters of
$l$ and $\tau$ are variables yields the desired separating hyperplane.

\smallskip
\noindent {\bf Case 2: \boldmath $f(\hl,\htau)=f^*$.\ } 
Now since $g\neq f^*$ and is acyclic, we can again find
efficiently a commodity $i$ and paths $P, Q\in\Pc^i$ such that $g_P>0$ and 
$\hl_P(g)+\htau(P)>\hl_Q(g)+\htau(Q)$. Since $g=f(l^*,\htau)$, we also have 
$l^*_P(g)+\htau(P)\leq l^*_Q(g)+\htau(Q)$. Thus, the inequality 
$l_P(g)+\htau(P)\leq l_Q(g)+\htau(Q)$, where now {\em only} the $l_e$s are variables,
yields the desired separating hyperplane.
\end{proof}

\paragraph{Polynomial latency functions and approximate equilibria.}
We now consider the setting where the latency functions $(l^*_e)_e$ are standard
degree-$r$ polynomials, where $r$ is a known constant.
As before, 
we also use $l$ to denote the tuple of coefficients of the polynomial given by $l$.
Since the Wardrop equilibrium may now require irrational numbers,
it is unreasonable to assume that a query returns the equilibrium flow. 
So we assume that our black box returns an acyclic approximate equilibrium and show that
we can nevertheless compute tolls that induce an equilibrium that is component-wise close
to $f^*$.  
We first define approximate equilibria. 
Recall that $D^i(l,f)=\min_{P\in\Pc^i}l_P(f)$, and given tolls $\tau$, we define
$l^\tau_e(x):=l_e(x)+\tau_e$.

\begin{definition} \label{newe-eqbm} 
We say that a feasible flow $f$ is an {\em $\epsilon$-approximate equilibrium}, or simply 
an $\e$-equilibrium, of a routing game $(G,l,(s_i,t_i,d_i)_{i\leq k})$ if 
$\sum_e f_el_e(f_e)\leq \sum_i d_i\bigl(D^i(l,f)+\e\bigr)$. 
\end{definition}

Notice that our approximate-equilibrium notion is implied by the more-stringent (and
oft-cited) condition requiring that if $f_P>0$ for $P\in\Pc^i$ then 
$l_P(f)\leq D^i(l,f)+\e$. 
Importantly, our notion turns out to be weak enough that one can argue 
that an acyclic $\e$-equilibrium can be computed in time 
$\poly\bigl(\I,\log(\frac{1}{\e})\bigr)$ for any $\e>0$, 
{\em which lends credence to our assumption that the black box returns an acyclic
$\e$-equilibrium}, 
and yet is strong enough that one can leverage it within the framework
of the ellipsoid method (see Theorem~\ref{thm:ellipsoidapprox}).  
Unless otherwise stated, when we refer to a routing game below, we assume that the latency
functions satisfy the mild conditions \eqref{invkcont}--\eqref{kgrowth}, with $\log K$
being polynomially bounded. The following Lemma is proved in Appendix~\ref{append-tollsextn}.

\begin{lemma} \label{e-eqbmcomp}
Given a routing game with polytime-computable latency functions, one can compute
an acyclic $\e$-equilibrium in time $\poly\bigl(\I,\log(\frac{1}{\e})\bigr)$. 
\end{lemma}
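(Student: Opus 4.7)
The plan is to reduce the problem to approximately minimizing the Beckmann potential $\Phi(f)=\sum_e\int_0^{f_e}l_e(x)\,dx$ from \eqref{eqn:wardrop}. The Wardrop equilibrium $\htf$ is its minimizer, and, as noted in the preliminaries, one can compute an acyclic feasible flow $f$ with $\Phi(f)\leq\Phi(\htf)+\delta$ in time $\poly(\I,\log(1/\delta))$. I will show that, for an appropriately chosen $\delta=\delta(\epsilon)$ with $\log(1/\delta)=\poly(\I,\log(1/\epsilon))$, any such $f$ is automatically an $\epsilon$-equilibrium in the sense of Definition~\ref{newe-eqbm}, which yields the desired algorithm.

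The first ingredient is a quantitative strong-convexity estimate for $\Phi$ extracted from inverse-$K$-continuity \eqref{invkcont}. The latter is readily seen to be equivalent to the pointwise lower bound $(x-y)(l_e(x)-l_e(y))\geq(x-y)^2/K$ for all $x,y\geq 0$; hence $l_e(x)-l_e(\htf_e)\geq(x-\htf_e)/K$ for $x\geq\htf_e$, and the symmetric inequality for $x\leq\htf_e$. Integrating,
\[
\Phi(f)-\Phi(\htf)-\sum_e l_e(\htf_e)(f_e-\htf_e)\ \geq\ \frac{1}{2K}\sum_e(f_e-\htf_e)^2.
\]
The left-hand linear term is $\langle\nabla\Phi(\htf),\,f-\htf\rangle\geq 0$ by the variational inequality at the minimizer $\htf$ of $\Phi$ over the feasible polytope, so $\|f-\htf\|_2^2\leq 2K\delta$.

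The second ingredient converts this $\ell_2$-closeness into a bound on the equilibrium gap $\Delta(f):=\sum_e f_e l_e(f_e)-\sum_i d_iD^i(l,f)\geq 0$. Since $\Delta(\htf)=0$,
\[
\Delta(f)\ =\ \sum_e\bigl[f_e l_e(f_e)-\htf_e l_e(\htf_e)\bigr]\ +\ \sum_i d_i\bigl[D^i(l,\htf)-D^i(l,f)\bigr].
\]
Expanding $f_e l_e(f_e)-\htf_e l_e(\htf_e)=(f_e-\htf_e)l_e(f_e)+\htf_e\bigl(l_e(f_e)-l_e(\htf_e)\bigr)$, and using $\htf_e\leq\sum_i d_i$, $l_e(f_e)\leq B:=\max_e l_e(\sum_i d_i)$, and the $K$-Lipschitz bound \eqref{klip}, Cauchy--Schwarz controls the first sum by $\sqrt m\,\|f-\htf\|_2\cdot\bigl(B+K\sum_i d_i\bigr)$. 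For the $D^i$ terms, fixing a path $P^*$ attaining $D^i(l,\htf)$ gives $D^i(l,f)\leq l_{P^*}(f)\leq D^i(l,\htf)+K\sum_{e\in P^*}|f_e-\htf_e|\leq D^i(l,\htf)+K\sqrt m\,\|f-\htf\|_2$, and by symmetry $|D^i(l,f)-D^i(l,\htf)|\leq K\sqrt m\,\|f-\htf\|_2$. Combining, $\Delta(f)\leq C\sqrt{mK\delta}$ for some $C=\poly(K,B,\sum_i d_i)$ with $\log C=\poly(\I)$.

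Setting $\delta:=\epsilon^2(\sum_i d_i)^2/(mKC^2)$ then forces $\Delta(f)\leq\epsilon\sum_i d_i$, so $f$ is an $\epsilon$-equilibrium, while $\log(1/\delta)=\poly(\I,\log(1/\epsilon))$ delivers the required running time. I expect the main technical point to be the strong-convexity estimate: \eqref{invkcont} alone is rather weak, but the pointwise lower bound it implies is precisely what couples $\Phi$-suboptimality with $\ell_2$-closeness of flows; everything else is routine Cauchy--Schwarz and Lipschitz bookkeeping.
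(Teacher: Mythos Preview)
Your proof is correct and takes a genuinely different route from the paper's. The paper also reduces to computing a $\delta$-suboptimal solution $g$ of the Beckmann potential, but to pass from $\Phi$-suboptimality to closeness $\|g-\htf\|_\infty\leq\delta'$ it invokes a specific structural property of the ellipsoid method (Lemma~4.5 in~\cite{ShmoysS06}): the returned $g$ admits a feasible $h$ with $\|h-\htf\|_\infty$ small and $\sum_e(h_e-g_e)l_e(g_e)=0$. From this it deduces $\sum_e(g_e-\htf_e)l_e(g_e)$ is small, adds the variational inequality at $\htf$, and then applies inverse-$K$-continuity termwise to get an $\ell_\infty$ bound. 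The final step---translating closeness of $g$ to $\htf$ into a bound on the equilibrium gap via $K$-Lipschitz control of $l_P$ and $D^i$---is essentially the same as yours.

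Your argument is cleaner in that it is algorithm-agnostic: you observe that inverse-$K$-continuity is equivalent to $(x-y)(l_e(x)-l_e(y))\geq(x-y)^2/K$, integrate to get $(1/2K)$-strong convexity of $\Phi$, and conclude $\|f-\htf\|_2^2\leq 2K\delta$ directly from the suboptimality bound and the first-order optimality of $\htf$. This avoids any dependence on how the approximate minimizer is produced, whereas the paper explicitly requires the ellipsoid method so as to invoke the orthogonality property from~\cite{ShmoysS06}. The price you pay is an $\ell_2$ (rather than $\ell_\infty$) bound, which you then compensate for with a $\sqrt m$ factor via Cauchy--Schwarz; this is immaterial for the $\poly(\I,\log(1/\e))$ conclusion.
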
 

\begin{lemma} \label{eqbm-close}
Let $\htf$ be a Wardrop equilibrium and $g$ be an $\e$-equilibrium of a routing game
$(G,l,(s_i,t_i,d_i)_{i\leq k})$. 
Then, $\|g-\htf\|_\infty:=\max_e|g_e-\htf_e|\leq\sqrt{K\e\sum_i d_i}$.  
\end{lemma}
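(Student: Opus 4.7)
My plan is to bound the aggregate quantity $S := \sum_e (g_e - \htf_e)\bigl(l_e(g_e) - l_e(\htf_e)\bigr)$ above by $\e\sum_i d_i$, and then use monotonicity of the $l_e$'s together with the inverse-$K$-continuity property \eqref{invkcont} to pass to an edge-by-edge bound. Since each latency function is strictly increasing, every summand in $S$ is nonnegative, so a bound on $S$ transfers to a bound on each individual term, after which \eqref{invkcont} directly delivers $|g_e - \htf_e| \le \sqrt{K\e\sum_i d_i}$.

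To bound $S$ from above, I expand it as
\[
S = \sum_e g_e l_e(g_e) \;-\; \sum_e g_e l_e(\htf_e) \;-\; \sum_e \htf_e l_e(g_e) \;+\; \sum_e \htf_e l_e(\htf_e)
\]
and estimate each piece using the characterizations available to me. For the first term, the $\e$-equilibrium hypothesis on $g$ gives $\sum_e g_e l_e(g_e) \le \sum_i d_i\bigl(D^i(l,g)+\e\bigr)$. For the fourth term, the variational-inequality characterization of the Wardrop equilibrium $\htf$ applied to the feasible flow $g$ yields $\sum_e \htf_e l_e(\htf_e) \le \sum_e g_e l_e(\htf_e)$, which cancels the second term. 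For the third term, I decompose $\htf$ into path flows and use $\sum_{P \in \Pc^i}\htf^i_P = d_i$ together with $l_P(g) \ge D^i(l,g)$ to obtain $\sum_e \htf_e l_e(g_e) \ge \sum_i d_i D^i(l,g)$; this cancels the $\sum_i d_i D^i(l,g)$ contribution coming from the first term.

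Combining these estimates gives $S \le \e\sum_i d_i$. Since every term in the defining sum for $S$ is nonnegative by the monotonicity of $l_e$, this implies
\[
(g_e - \htf_e)\bigl(l_e(g_e) - l_e(\htf_e)\bigr) \;\le\; \e\sum_i d_i \;=\; \frac{\bigl(\sqrt{K\e\sum_i d_i}\bigr)^2}{K}
\qquad \text{for every edge } e.
\]
Applying \eqref{invkcont} with $\eta := \sqrt{K\e\sum_i d_i}$ then yields $|g_e - \htf_e| \le \eta$ for every $e$, which is exactly the claimed bound.

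The main subtlety I anticipate is ensuring that the definition of $\e$-equilibrium (Definition~\ref{newe-eqbm}) is actually strong enough to drive the argument: it only controls the \emph{aggregate} cost $\sum_e g_e l_e(g_e)$ rather than path-by-path near-optimality, so I must work with aggregate inequalities throughout and cannot invoke any per-path condition on $g$. The path structure enters only through the equilibrium $\htf$ (to get the lower bound on $\sum_e \htf_e l_e(g_e)$) and through the Wardrop variational inequality. All other ingredients---monotonicity, inverse-$K$-continuity, and the fact that each summand in $S$ has a definite sign---are standard and should go through cleanly.
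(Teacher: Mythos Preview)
Your proposal is correct and follows essentially the same route as the paper: both arguments combine the $\e$-equilibrium inequality for $g$, the lower bound $\sum_e \htf_e l_e(g_e)\ge\sum_i d_i D^i(l,g)$, and the variational inequality for $\htf$ to get $\sum_e (g_e-\htf_e)(l_e(g_e)-l_e(\htf_e))\le\e\sum_i d_i$, then invoke nonnegativity of each summand and inverse-$K$-continuity. The only cosmetic difference is that the paper groups the terms as $\sum_e(g_e-\htf_e)l_e(g_e)$ and $\sum_e(\htf_e-g_e)l_e(\htf_e)$ before adding, whereas you expand $S$ into four pieces first.
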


\begin{proof}
We have $\sum_e g_el_e(g_e)\leq\sum_{i} d_i\bigl(D^i(l,g)+\e\bigr)$ and 
$\sum_e \htf_el_e(g_e)\geq\sum_i d_iD^i(l,g)$. So 
$\sum_e (g_e-\htf_e)l_e(g_e)\leq\e\sum_i d_i$.
Also, $\sum_e (\htf_e-g_e)l_e(\htf_e)\leq 0$. So
$\sum_e (g_e-\htf_e)\bigl(l_e(g_e)-l_e(\htf_e)\bigr)\leq\e\sum_i d_i$. Each term of this
summation is nonnegative and hence, at most $\e\sum_i d_i$; therefore, 
$|g_e-\htf_e|\leq\sqrt{K\e\sum_i d_i}$ by inverse-$K$-continuity.
\end{proof}

Define an {\em $\e$-oracle for tolls} to be an oracle that receives tolls $\tau\in\R_+^E$
as input and returns an $\e$-equilibrium of the routing game
$(G,l^{*\tau},(s_i,t_i,d_i)_{i\leq k})$ having encoding length
$\poly\bigl(\I,\log(\frac{1}{\e})\bigr)$. 

\begin{theorem}
Let $f^*$ be a target acyclic multicommodity flow $f^*$ and $\dt>0$. 
Let $\e=\frac{\dt^2}{Kmk\sum_i d_i}$.  
Then, in time $\poly\bigl(\I,\log(\frac{1}{\dt})\bigr)$ and using
$\poly\bigl(\I,\log(\frac{1}{\dt})\bigr)$ $\e$-oracle queries, we can compute tolls $\tau$ 
such that $\|f(l^*,\tau) - f^*\|_\infty \le 2\dt$ or determine that no such tolls exist. 
\label{thm:ellipsoidapprox}
\end{theorem}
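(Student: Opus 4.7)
The plan is to extend the ellipsoid-method proof of Theorem~\ref{thm:ellipsoidnonatomic} to accommodate (i) standard degree-$r$ polynomial latencies, for which $l^*$ has $O(rm)$ coefficients rather than $2m$, and (ii) an oracle that returns only an $\e$-equilibrium. The decision variables are the coefficients of a candidate $l$ together with a toll vector $\tau$, and the ellipsoid implicitly searches for $(l^*,\tau^*)$. The initial ellipsoid contains the polytope defined by the standard-polynomial bounds and the a priori upper bound $U$ on $\tau^*_e$, hence contains $(l^*,\tau^*)$ whenever tolls enforcing $f^*$ exist. The calibration $\e=\dt^2/(Kmk\sum_i d_i)$ is chosen so that, by Lemma~\ref{eqbm-close}, every $\e$-equilibrium $g$ of $(G,l^{*\tau},\cdot)$ satisfies $\|g-f(l^*,\tau)\|_\infty \le \sqrt{K\e\sum_i d_i}=\dt/\sqrt{mk}\le \dt$. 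Hence it suffices to find $\htau$ for which the oracle returns $g$ with $\|g-f^*\|_\infty\le \dt$, since the triangle inequality then yields $\|f(l^*,\htau)-f^*\|_\infty\le 2\dt$.

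Given the current ellipsoid center $(\hl,\htau)$, I first enforce the standard-polynomial and nonnegativity constraints, using any violation as a separator. Otherwise I query the oracle to obtain $g$, and halt if $\|g-f^*\|_\infty\le \dt$. Else I generate a separator by a two-case dichotomy depending on whether $f^*$ is a Wardrop equilibrium of $(G,\hl^\htau,\cdot)$; this is polytime-checkable via an acyclic path decomposition of $f^*$ and shortest-path computations under $\hl^\htau$. In Case~A ($f^*$ is not a Wardrop equilibrium), any violating pair $P,Q\in\Pc^i$ with $f^*_P>0$ and $\hl_P(f^*)+\htau(P)>\hl_Q(f^*)+\htau(Q)$ gives the separator $l_P(f^*)+\tau(P)\le l_Q(f^*)+\tau(Q)$, which is linear in $(l,\tau)$ and holds at $(l^*,\tau^*)$ because $P$ is a shortest $s_i$-$t_i$ path under $l^{*\tau^*}$.

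In Case~B ($f^*$ is the Wardrop equilibrium of $(G,\hl^\htau,\cdot)$), Lemma~\ref{eqbm-close} applied to this game together with $\|g-f^*\|_\infty>\dt$ forces
\[ \sum_e g_e\hl_e(g_e)+\sum_e g_e\htau_e-\sum_i d_i D^i(\hl^\htau,g) > \dt^2/K. \]
For each commodity $i$ pick $Q_i\in\argmin_{Q\in\Pc^i}\{\hl_Q(g)+\htau(Q)\}$ and substitute $D^i(l^{*\htau},g)\le l^*_{Q_i}(g)+\htau(Q_i)$ into the $\e$-equilibrium condition for $g$ under $(l^*,\htau)$ to obtain the linear inequality
\[ \sum_e g_e l_e(g_e)-\sum_i d_i l_{Q_i}(g) \le \e\sum_j d_j-\sum_e g_e\htau_e+\sum_i d_i\htau(Q_i), \]
which holds at $l=l^*$ but is violated at $l=\hl$ by margin at least $\dt^2/K-\dt^2/(Kmk)>0$. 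Trivially lifted to $(l,\tau)$-space (the $\tau$-variables simply do not appear), this yields a valid separator that cuts $(\hl,\htau)$ off from $(l^*,\tau^*)$.

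All separators have encoding length $\poly(\I,\log(1/\dt))$; Case~B cuts carry an explicit $\Omega(\dt^2/K)$ margin from $(l^*,\tau^*)$, and Case~A cuts have margins polynomially lower-bounded via the bit-complexities of $\hl,\htau,f^*$. Theorem~\ref{ellipthm} therefore yields convergence in $\poly(\I,\log(1/\dt))$ iterations; if the ellipsoid shrinks below threshold without success, we conclude that no tolls enforce $f^*$. The main technical obstacle is twofold: (i) Case~B separators are constant in the $\tau$-coordinates, so I must argue that the interplay between Case~A and Case~B cuts nevertheless drives convergence of the full $(l,\tau)$-ellipsoid, leveraging continuity of $f(\cdot,\htau)$ in the latency parameters so that $\hl\to l^*$ indirectly tightens the $\tau$ block; and (ii) the right-hand sides of Case~B cuts depend on $g$, itself only an approximate equilibrium of polynomial encoding, so encoding lengths must be tracked carefully to remain inside the regime required by Theorem~\ref{ellipthm}.
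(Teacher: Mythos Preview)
Your approach is essentially the paper's, and it is correct. Case~A is identical, and in Case~B both you and the paper produce a separator that is linear in the $l$-coefficients with $\htau$ held fixed. The only genuine difference is the \emph{form} of the Case~B cut: you use a single aggregate inequality
\[
\sum_e g_e l_e(g_e)-\sum_i d_i l_{Q_i}(g) \le \e\sum_j d_j-\sum_e g_e\htau_e+\sum_i d_i\htau(Q_i),
\]
derived from the contrapositive of Lemma~\ref{eqbm-close} applied to $\hl^{\htau}$, whereas the paper takes a path-decomposition $x$ of $g$ with support at most $mk$, locates a single commodity $j$ and path $R$ with $x_{j,R}\bigl(\hl^{\htau}_R(g)-D^j(\hl^{\htau},g)\bigr)>\e\sum_i d_i$, and uses the cut $x_{j,R}\bigl(l_R(g)+\htau(R)-l_Q(g)-\htau(Q)\bigr)\le\e\sum_i d_i$. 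Your version is arguably more direct (no path decomposition), at the cost of requiring $\hl$ to satisfy inverse-$K$-continuity, which is fine since you enforce the slope bound $\ge 1/U$ before querying. The paper's termination test (``is $g$ an $mk\e$-equilibrium for $\hl^{\htau}$?'') and yours (``is $\|g-f^*\|_\infty\le\dt$?'') are equivalent up to constants via Lemma~\ref{eqbm-close}.

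Your ``main technical obstacle (i)'' is not an obstacle, and you should drop the proposed resolution. A separating hyperplane in $(l,\tau)$-space that happens to be constant in the $\tau$-coordinates is still a valid separator: it places $(\hl,\htau)$ strictly on one side and $(l^*,\tau^*)$ on the other, and the ellipsoid volume shrinks by the usual factor regardless of which coordinates the normal vector touches. The paper's Case~B cut has exactly the same feature (only $l$-variables appear) and no ``interplay'' argument is made or needed. In particular, you do \emph{not} need to argue that $\hl\to l^*$; the ellipsoid method gives no such guarantee, and the convergence argument is purely the standard volume bound of Theorem~\ref{ellipthm} together with the polynomial encoding length of $(l^*,\tau^*)$ and of every separator you produce. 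Obstacle~(ii), tracking encoding lengths, is the only real bookkeeping and is routine.
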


\begin{proof}
As before, we use the ellipsoid method. Let $(\hl,\htau)$ be the center of the current
ellipsoid. Assume that $\hl,\htau\geq 0$ and each function $\hl_e$ has
slope at least $\frac{1}{U}$; otherwise, we can use a violated constraint as the
separating hyperplane. 
We use the oracle with toll-vector $\htau$ to obtain an acyclic $\e$-equilibrium flow
$g$. Then, we have $\|g-f(l^*,\htau)\|_\infty\leq\sqrt{K\e\sum_i d_i}=\dt/\sqrt{mk}$ by
Lemma~\ref{eqbm-close}. 

We can efficiently determine if $f(\hl,\hat{\tau})\neq f^*$, and if so, then as in Case 1 in the
proof of Theorem~\ref{thm:ellipsoidnonatomic}, we can obtain a separating hyperplane of
encoding length $\poly\bigl(\I)$. So assume otherwise.

Now we check if $g$ is an $mk\e$-equilibrium for the latency functions $(\hl^{\htau}_e)_e$. 
If so, then $\|g-f^*\|_\infty\leq\dt$ and so $\|f(l^*,\htau)-f^*\|_\infty\leq 2\dt$ and we
are done.  
Otherwise, we find a valid path-decomposition $x=(x_{i,P})_{i,P\in\Pc^i}$ of $g$
having support of size at most $mk$. That is, we have $x\geq 0$, $\sum_{P\in\Pc^i}x_{i,P}=d_i$
for every commodity $i$, $\sum_i\sum_{P\in\Pc^i:e\in P}x_{i,P}=g_e$ for all $e$, and 
$\sum_i|\{P: x_{i,P}>0\}|\leq mk$. 
We may assume that every non-zero $x_{i,P}$ value has encoding length that is polynomial
in $\I$ and the size of $g$. Then 
$$
\sum_i\sum_{P\in\Pc^i}x_{i,P}\bigl(\hl^{\htau}_P(g)-D^i(\hl^{\htau},g)\bigr)=
\sum_e g_e\hl^{\htau}_e(g_e)-\sum_i d_iD^i(\hl^{\htau},g)>mk\e\sum_id_i
$$ 
where the last inequality follows since $g$ is not an $mk\e$-equilibrium for
$(\hl^{\htau}_e)_e$. 
Since the support of $x$ has size at most $mk$, this implies that there is some 
commodity $j$ and some path $R\in\Pc^j$ such that
$x_{j,R}\bigl(\hl^{\htau}_R(g)-D^j(\hl^{\htau},g)\bigr)>\e\sum_i d_i$. Moreover, we can find
such a $j$ and path $R\in\Pc^j$ efficiently by simply enumerating the paths in the support
of $x$. Let $Q\in\Pc^j$ be such that $\hl^{\htau}_Q(g)=D^j(\hl^{\htau},g)$. 

Since $g$ is an $\e$-equilibrium for the latency functions $(l^{*\htau}_e)_e$, again
considering the path-decomposition $x$, we have
$\sum_i\sum_{P\in\Pc^i}x_{i,P}\bigl(l^{*\htau}_P(g)-D^i(l^{*\htau},g)\bigr)\leq\e\sum_i d_i$.
Each term in this sum is nonnegative, so each term is at most 
$\e\sum_i d_i$. In particular, we have 
$x_{j,R}\bigl(l^{*\htau}_R(g)-l^{*\htau}_Q(g)\bigr)\leq
x_{j,R}\bigl(l^{*\htau}_R(g)-D^j(l^{*\htau},g)\bigr)\leq\e\sum_i d_i$.
So the inequality
$x_{j,R}\bigl(l_R(g)+\htau(R)-l_Q(g)-\htau(Q)\bigr)\leq\e\sum_i d_i$, 
with $l_e$s as the variables, is valid for $(l^*,\tau^*)$ but is violated by
$(\hl,\htau)$. 
This yields a separating hyperplane of encoding length
$\poly\bigl(\I,\log(\frac{1}{\e})\bigr)$.
\end{proof}

\subsection{Extensions} \label{tolls-extn}

\paragraph{Linear constraints on tolls given by a separation oracle.} 
Here, we require that the tolls $\tau^*$ imposing the target flow $f^*$ should lie in some 
polyhedron $X$, where $X$ is given by means of a separation oracle. This is rich enough to
model the following interesting scenarios.
\begin{list}{$\bullet$}{\topsep=0.25ex \itemsep=0ex \addtolength{\leftmargin}{-3ex}} 
\item A subset $F$ of edges cannot be tolled. This corresponds to the explicit
  constraint $\tau_e = 0$ for all $e\in F$.
\item The total toll paid by any player under the flow $f^*$ is at most a given budget
  $B$. This corresponds to the constraints $\tau(P)\leq B$ for every commodity $i$ and
  path $P\in\Pc^i$ with $f^{*i}_P>0$. One can separate over these exponentially-many
  constraints efficiently via a longest-path computation since $f^*$ is acyclic. 
\end{list}
The only change to our algorithm is that we first check if our current toll-vector $\htau$
lies in $X$. If not then the separation oracle provided yields the separating hyperplane;
otherwise, we proceed as before. The query complexity is now polynomial in the input size
and the encoding length of $X$. 

\paragraph{General nonatomic congestion games.}
This is a generalization of network routing games, where the graph is replaced by an
arbitrary set $E$ of resources, and $\Pc^i\sse 2^E$ is the strategy-set associated with
player-type $i$; a more complete definition appears in Appendix~\ref{append-tollsextn}.   
Our ellipsoid-based algorithm uses essentially no information about the underlying graph. 
We only require that given a congestion-vector $f$, we can find the maximum-delay set
$P\in\Pc^i$ for a given player-type $i$, and can find a valid decomposition of $f$ of
small support.
Both of these are trivial 
since the $\Pc^i$ sets are explicitly given in the input. Thus, our algorithms readily
extend to general nonatomic congestion games and 
Theorems~\ref{thm:ellipsoidnonatomic} and~\ref{thm:ellipsoidapprox} (with $mk$ 
replaced by $\sum_i|\Pc^i|$) continue to hold.

\paragraph{Atomic splittable routing games.}
Here, each commodity $i$ represents a {\em single} player who controls $d_i$ volume of
flow and her strategy is to choose an $s_i$-$t_i$ flow $f^i$ of value $d_i$. 
The cost incurred by a player $i$ under a feasible multicommodity flow (i.e., strategy 
profile) $f=(f^i)_{i\leq k}$ is $\sum_e f^i_el_e(f_e)$. 

Our results extend to atomic splittable routing games if 
we assume that for all valid choices of parameters of the latency functions and tolls
(as encountered during the ellipsoid method), the underlying atomic splittable
routing game has a unique Nash equilibrium. Here, by uniqueness we mean that if $f$ and
$g$ are two Nash equilibria, then $f^i_e=g^i_e$ for all commodities $i$ and edges $e$.
This is not without loss of generality, but is known to hold, for example, if all latency
functions are convex polynomials of degree at most 3, or if the graph is a generalized
nearly-parallel graph and $xl_e(x)$ is strictly convex for all $e$
(see~\cite{BhaskarFHH09}). 
When we say that tolls $\tau$ induce a flow $f^*=(f^{*i})_{i\leq k}$ here, we mean 
that the flow of every commodity $i$ on every edge $e$ is $f^{*i}_e$ in the resulting
equilibrium. 
%
Our result 
shows that the task of computing tolls that induce
specific {\em commodity-flows} 
can be reduced to the task of computing Nash equilibria (under the uniqueness assumption),
even in the black-box setting.  
Although, to our knowledge, no algorithm is known for either of these tasks, even when
latency functions are given, we believe that this reduction is of independent interest. 
The proof of Theorem~\ref{thm:atomic} is very similar to that of
Theorem~\ref{thm:ellipsoidnonatomic}: the only change is that to find the separating
hyperplane, we now consider the marginal delay functions instead of the delay functions;
see Appendix~\ref{append-tollsextn}. 

\begin{theorem} \label{thm:atomic}
In an atomic splittable routing game satisfying the aforementioned assumption, tolls that
induce a target flow $f^*=(f^{*i})_{i\leq k}$ at equilibrium, if they exist, can be
obtained with a polynomial number of queries to an oracle that returns the equilibrium
flow under tolls.  
\end{theorem}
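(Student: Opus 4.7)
The plan is to adapt the ellipsoid-method approach of Theorem~\ref{thm:ellipsoidnonatomic}, searching over tuples $(l,\tau)$ whose coordinates are the coefficients of the latency functions $l_e$ and the per-edge tolls $\tau_e$. The feasible region is the set of such tuples for which the coefficients and tolls lie in the admissible range and $\tau$ induces $f^* = (f^{*i})_{i \leq k}$ as the (by assumption, unique) Nash equilibrium of the atomic splittable game under $l$. By the uniqueness hypothesis, $(l^*,\tau^*)$ belongs to this region whenever inducing tolls exist, so the ellipsoid method will terminate once it locates any point of the region; the machinery of Theorem~\ref{ellipthm} then yields polynomially many queries provided each iteration produces a separating hyperplane of polynomial encoding length.

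The only substantive change from the nonatomic setting is the form of the equilibrium inequalities we exploit. In an atomic splittable routing game, a feasible flow $f=(f^i)_i$ is a Nash equilibrium of $(G,l,\K)$ with tolls $\tau$ iff the \emph{marginal delays} satisfy, for every commodity $i$ and every pair of paths $P,Q\in\Pc^i$ with $f^i_P>0$,
\[
m^i_P(l;f) + \tau(P) \;\leq\; m^i_Q(l;f) + \tau(Q),
\]
where $m^i_e(l;f) := l_e(f_e) + f^i_e\cdot l'_e(f_e)$ and $m^i_P(l;f) := \sum_{e\in P} m^i_e(l;f)$. For fixed $f$, the quantity $m^i_e(l;f)$ is linear in the coefficients of $l_e$ (this is easy to check for linear, and more generally for polynomial, $l_e$), so the displayed inequality is linear in $(l,\tau)$. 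Given the current ellipsoid center $(\hl,\htau)$, we first test whether $f^*$ satisfies all of these marginal inequalities under $(\hl,\htau)$; if some triple $(i,P,Q)$ with $f^{*i}_P>0$ violates the inequality at $(\hl,\htau)$, the reversed inequality must hold at $(l^*,\tau^*)$ and furnishes a separating hyperplane. Otherwise, we query the oracle to obtain $g=f(l^*,\htau)$; if $g \neq f^*$ (well-defined commodity-wise by uniqueness) we extract a commodity $j$ and paths $R,S\in\Pc^j$ with $g^j_R>0$ and $m^j_R(\hl;g) + \htau(R) > m^j_S(\hl;g) + \htau(S)$, whereas $m^j_R(l^*;g) + \htau(R) \leq m^j_S(l^*;g) + \htau(S)$ must hold at the true $l^*$, giving a valid separator (with only the $l_e$ coefficients as variables).

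The main obstacle is seeing that the uniqueness hypothesis is exactly what makes Case~2 sound: it is this hypothesis that guarantees the oracle's output $g$ is determined \emph{at the per-commodity level} by $(l^*,\htau)$, so the marginal inequality extracted at $(l^*,g)$ genuinely holds, and the resulting cut truly separates $(l^*,\tau^*)$ from $(\hl,\htau)$. Without uniqueness, distinct equilibria could witness mutually inconsistent inequalities and the cut could accidentally exclude $(l^*,\tau^*)$. Efficient extraction of a violating $(i,P,Q)$ reduces to a shortest-path computation per commodity using the marginal edge costs, which is routine as long as the relevant commodity flow ($f^{*i}$ or $g^i$) admits a polynomial-size path decomposition. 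The remaining polynomiality and encoding-length bounds carry over directly from Theorem~\ref{thm:ellipsoidnonatomic} via Theorem~\ref{ellipthm}, yielding the claimed polynomial query complexity.
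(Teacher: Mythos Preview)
Your proposal is correct and mirrors the paper's proof: both run the ellipsoid method over $(l,\tau)$ and replace the raw-delay equilibrium conditions of Theorem~\ref{thm:ellipsoidnonatomic} by the marginal-delay optimality conditions $\overline{l_{i,e}}(f;x)=l_e(x)+f^i_e l'_e(x)$ to generate separating hyperplanes in the two cases. One small sharpening of your discussion of the uniqueness hypothesis: beyond making the oracle's output well-defined, its crucial role in Case~2 is that uniqueness under the \emph{candidate} $(\hl,\htau)$ (the paper's assumption covers all parameter choices encountered, not just $l^*$) is what guarantees that $g\neq f^*$ cannot also be a Nash equilibrium there, and hence that a violated marginal inequality at $(\hl,\htau)$ actually exists to extract.
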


\subsection{An algorithm for series-parallel networks with near-linear query complexity} 
\label{tolls-sepa}
We now give an algorithm for series-parallel networks with
$\tilde{O}(m)$ query complexity. This is a significant improvement over the
ellipsoid-based algorithm, 
and almost matches the linear lower bound proved in Theorem~\ref{thm:tolllb}
for single-commodity routing games on parallel-link graphs with linear latency functions.

\begin{theorem} \label{thm:sepatolls}  
On two-terminal series-parallel graphs, one can compute in polytime tolls that  
induce a given target multicommodity flow $f^*$ using $\tilde{O}(m)$ queries 
to an oracle that returns the equilibrium flow. 
Thus, we obtain $\tO(m)$ query complexity for multicommodity routing games with 
standard linear delay functions. 
\end{theorem}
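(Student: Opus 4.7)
The plan is to exploit the recursive structure of series-parallel (sepa) graphs via their standard decomposition tree $\T$, whose leaves correspond to edges of $G$ and whose internal nodes are labeled either S (series) or P (parallel), with $O(m)$ total nodes. My first step is a structural lemma stating that any toll vector $\tau$ enforcing $f^*$ can be transformed into a \emph{canonical} vector $\bar\tau$ that (i) still enforces $f^*$, and (ii) is fully determined by a single scalar per P-node of $\T$. Concretely, at any P-node combining subgraphs $H_1, H_2$ between common boundary terminals $u, v$, adding a constant $c$ to the tolls on every edge of $H_1$ and subtracting $c$ from the tolls on every edge of $H_2$ leaves unchanged every agent's relative cost comparison between $u$-$v$ paths (and across the game as a whole), so the actual degree of freedom at each P-node collapses to one scalar offset. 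The total number of unknowns is thus $O(m)$.

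Building on this, I would design a recursive algorithm that traverses $\T$ bottom-up. The leaf case (a single edge) is trivial. At an S-node, the canonical tolls on the two serial pieces are independent, so the subproblems decouple. At a P-node merging $H_1$ and $H_2$ whose internal canonical tolls have already been computed, the only remaining unknown is the relative offset $c$ between the two branches; I would argue that the equilibrium flow split between the branches is continuous and monotone in $c$ (raising $c$ pushes flow from $H_1$ into $H_2$), which enables a binary search over the $O(\log U)$-bit range of $c$. This uses $O(\log U)$ oracle queries per P-node, for a total of $O(m\log U)=\tO(m)$ queries.

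The main obstacle will be arranging each probe so that it really isolates a single scalar, since the oracle returns the equilibrium on the entire graph rather than on one subgraph. To handle this I would, when processing a P-node, install on all other subgraphs the canonical tolls already computed bottom-up --- which provisionally enforce $f^*$ on those parts --- while using the two-terminal property of every sepa subgraph to pin down the commodity flows entering and leaving the current P-node's boundary. Extra care is needed for multicommodity instances, where sources and sinks may be internal to a sepa subgraph; here I would exploit structural properties of multicommodity flows in sepa graphs (the auxiliary results alluded to in Section~\ref{tolls-sepa}) to show that, once boundary flows are correct, the monotonicity argument remains valid. Combined with Lemma~\ref{eqbm-close} for robustness under small oracle errors, this yields a clean binary search.

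Finally, the second sentence of the theorem follows because for standard linear delay functions the Wardrop equilibrium is rational and can be computed exactly in polynomial time by solving the convex program~\eqref{eqn:wardrop} (as noted in Section~\ref{prelim}); hence the assumed exact-equilibrium oracle can be realized in polynomial time, and the whole procedure runs in polynomial time using $\tO(m)$ queries.
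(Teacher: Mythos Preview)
Your reduction to one scalar per P-node is exactly right and matches the paper's Lemma~\ref{clm:tollalphabeta}: canonical tolls are equivalent to a labeling $(L,\Delta)$ with one $\Delta_H$ per parallel subgraph $H\in\Hc$. The gap is in the bottom-up traversal. When you process a P-node $H$ with children $H_1,H_2$, the flow \emph{entering} $H$ through its terminals $s_H,t_H$ depends on the offsets at all ancestor P-nodes, which you have not yet determined. So even if the internal offsets of $H_1,H_2$ are perfect, the boundary flow into $H$ need not be the target boundary flow, and then your binary search on $c$ is comparing the observed split against the wrong benchmark. Your sentence ``once boundary flows are correct, the monotonicity argument remains valid'' names precisely the hypothesis you cannot guarantee bottom-up; the claim that the provisional tolls ``enforce $f^*$ on those parts'' is false for any part sitting above unfinished ancestor P-nodes. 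There is no obvious way to force the oracle to simulate the subgame on $H$ alone, especially in the multicommodity case with terminals internal to $H$.

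The paper sidesteps this entirely: it does \emph{not} process P-nodes in any order. It maintains an interval $[L_H,U_H]$ for every $\Delta_H$ simultaneously, queries once with the tolls induced by all current midpoints, and then uses a structural lemma (Lemma~\ref{lem:tollsepagood}) to locate an $(\hat f,f^*)$-\emph{discriminating} pair $H_1,H_2$---a pair whose parallel join is some $H\in\Hc$ with $\hat f_e>f^*_e$ on every edge of $H_1$ and $\hat f_e\le f^*_e$ on every edge of $H_2$. Claim~\ref{plusedges} then shows all $s_{H_1}$-$t_{H_1}$ paths lie on shortest paths for some commodity, which combined with Claim~\ref{clm:tollallequal} (the target tolls equalize all $s$-$t$ path lengths) lets you deduce the sign of $\Delta_H-\Delta^*_H$ for \emph{that particular} $H$, even though every other $\Delta_{H'}$ may still be wrong. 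One interval shrinks per query; since there are at most $m$ intervals and each needs $O(\log U')$ halvings, the total is $\tO(m)$. The point is that the discriminating-pair lemma gives a globally valid direction signal without requiring any subproblem to be isolated or any boundary flow to be correct---this is the idea your proposal is missing.
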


We first recall some relevant details about series-parallel graphs.
%
A {\em two-terminal directed series-parallel graph}, abbreviated series-parallel (sepa)
graph, with terminals $s$ and $t$ is defined inductively as follows. 
A basic sepa graph is a directed edge $(s,t)$. 
Given two sepa graphs $G_1=(V_1, E_1)$ and $G_2=(V_2, E_2)$, with terminals $s_1$, $t_1$
and $s_2$, $t_2$ respectively, one can create a new sepa graph $G=(V,E)$ as follows.
A \emph{series join} of $G_1$ and $G_2$ yields the graph obtained by
identifying $t_1$ and $s_2$, with terminals $s = s_1$ and $t = t_2$. 
A \emph{parallel join} of $G_1$ and $G_2$ yields the graph obtained by
identifying $s_1$ and $s_2$, and $t_1$ and $t_2$; its terminals are $s = s_1 = s_2$ and
$t= t_1 = t_2$. 

For every series-parallel graph $G=(V,E)$, the recursive construction naturally yields a
binary \emph{decomposition tree}. The leafs of the tree are edges of $G$, and each
internal node specifies a series- or a parallel- join. Each node of the tree also
represents a subgraph of the $G$ (obtained by performing the joins specified by the
subtree rooted at that node), which is also clearly a sepa graph. 
In the sequel, we fix a decomposition tree corresponding to $G$.
Whenever we say a subgraph of $G$, we mean a subgraph corresponding to a node of this
decomposition tree.
Given a subgraph $H$, we use $s_H$, $t_H$ to denote its two terminals, and $\mcp(H)$ to
denote the set of all $s_H$-$t_H$ paths. We sometimes call $s_H$ and $t_H$, the source and
sink of $H$ respectively.
Let $\mathcal{H}$ be the collection of subgraphs corresponding to the parallel-join nodes
of the decomposition tree. 
For each $H\in \mathcal{H}$ obtained via the parallel join of $H_1$ and $H_2$, we identify
one of these as the ``left'' subgraph $H_L$ and the other as the ``right'' subgraph
$H_R$. Let $\Pc$ denote the set of all $s$-$t$ paths, where $s=s_G,\ t=t_G$. 
 
\paragraph{Proof outline.} 
Before we delve into the proof of Theorem~\ref{thm:sepatolls}, we give some intuition and
give a roadmap of the proof.
%
It is useful to first consider 
the simplest case of a graph with two parallel edges.
Observe that any target flow can be obtained by varying the \emph{difference} in tolls on
these two edges. Further, the correct difference in tolls can be obtained by a binary
search. Our key insight is that this intuition can be extended to series-parallel graphs
via a suitable transformation of tolls. We show that tolls required to obtain a
target flow can actually be described by the difference in tolls for each pair of parallel 
subgraphs, and then use binary search to obtain the correct differences that yield the
target flow. 

Formally, we show that any edge tolls in a sepa graph can in fact
be transformed into certain canonical tolls that are 
defined in terms of subgraphs (Claim~\ref{clm:tollearliest}). 
Further, formalizing the intuition that what is relevant is only the difference in tolls
on parallel subgraphs, we make the novel connection that canonical tolls are in fact
equivalent to labels on subgraphs $H\in\mathcal{H}$ (Lemma~\ref{clm:tollalphabeta}), 
where the label on subgraph $H\in\Hc$ stores the difference in
the canonical tolls of subgraphs $H_L$ and $H_R$ whose parallel-join yields $H$.

Thus, our problem reduces to finding the correct labels on subgraphs $H \in \mathcal{H}$,
which we aim to find via binary search. To do so, we establish certain
structural properties of multicommodity flows in sepa graphs (Lemma~\ref{lem:tollsepagood}).  
We leverage these to argue 
that if 
the canonical edge-tolls obtained from our current labels do not enforce the target flow,
then 
we can find a subgraph $H\in\Hc$ and deduce whether its label should be increased or
decreased.  
The query complexity is thus at most $|\Hc|$ times a logarithmic term
depending on the accuracy required and the parameters of the routing game. 
A detailed description appears after Claim~\ref{plusedges}.

The presence of multiple commodities complicates things, since in the particular decomposition tree that we fix for $G$, all edges in a subgraph may be 
shortest-path edges for one commodity but not for another. Thus creates problems with the
binary search since Claim~\ref{plusedges} may not hold.
We handle this by first arguing that there always exist tolls 
enforcing $f^*$ such that {\em every} $s$-$t$ path, and hence every $s_i$-$t_i$ path is a
shortest-path under edge costs $(l^{*\tau^*}_e(f^*_e))_e$ (Claim~\ref{clm:tollallequal}).

We believe that our structural insights into tolls and multicommodity flows on sepa
graphs 
are of independent interest and likely to find other applications. In fact, our
results on flows in sepa graphs also play an important role in our algorithm for inducing
target flows via Stackelberg routing in Section~\ref{stackelberg}.


\begin{clm}
For $\vGm^*=(G,l^*,(s_i,t_i,d_i)_{i\leq k})$ and target flow $f^*$ there exist tolls $\tau^*\in\R_+^E$  such that:

\begin{enumerate}[(i)]
\item $\min_{P\in\Pc}\tau^*(P)=0$; 
\item $l^{*}_P(f^*)+\tau^*(P)=l^{*}_Q(f^*)+\tau^*(Q)$ for every $i$ and paths
$P,Q\in\Pc^i$; and therefore 
\item $f(l^*,\tau^*)=f^*$.
\end{enumerate}
\label{clm:tollallequal}
\end{clm}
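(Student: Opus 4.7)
I would construct $\tau^*$ via a simple potential-function argument. Let $\phi: V \to \R$ be defined by
\[
\phi(v) := \max \{l^*_P(f^*) : P \text{ is an } s\text{-}v \text{ path in } G\},
\]
with $\phi(s) = 0$; since every series-parallel graph is a DAG, $\phi$ is finite and computable by dynamic programming along a topological order. Then set
\[
\tau^*_e := \phi(w) - \phi(v) - l^*_e(f^*_e) \qquad \text{for every edge } e = (v, w).
\]
To see $\tau^* \ge 0$, take any $s$-$v$ path $P_v$ achieving $\phi(v)$; because $G$ is a DAG, $P_v$ does not visit $w$, so appending $e$ to $P_v$ yields a simple $s$-$w$ path, giving $\phi(w) \ge \phi(v) + l^*_e(f^*_e)$.

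For property (ii), note that under flow $f^*$ and tolls $\tau^*$ the cost of edge $e = (v, w)$ is $l^*_e(f^*_e) + \tau^*_e = \phi(w) - \phi(v)$, so by telescoping every $u$-$z$ path $P$ has cost $\phi(z) - \phi(u)$, independent of $P$. In particular, for each commodity $i$, all $P \in \Pc^i$ share the common cost $\phi(t_i) - \phi(s_i)$. Property (iii) is then immediate: the Wardrop conditions for the toll-modified routing game hold trivially for $f^*$, so $f^*$ is an equilibrium under $\tau^*$; uniqueness of the equilibrium total flow then yields $f(l^*, \tau^*) = f^*$. For property (i), any $s$-$t$ path $P \in \Pc$ satisfies
\[
\tau^*(P) = \sum_{e = (v,w) \in P} \big(\phi(w) - \phi(v) - l^*_e(f^*_e)\big) = \phi(t) - l^*_P(f^*),
\]
so $\min_{P \in \Pc} \tau^*(P) = \phi(t) - \max_{P \in \Pc} l^*_P(f^*) = 0$ by the very definition of $\phi(t)$.

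The key insight, and the main obstacle to overcome, is the right choice of $\phi$: any $\phi$ satisfying $\phi(w) - \phi(v) \ge l^*_e(f^*_e)$ for all $e = (v, w)$ would yield nonnegative tolls satisfying (ii) and (iii), but the specific longest-path choice is exactly what forces $\phi(t) = \max_{P \in \Pc} l^*_P(f^*)$ and hence property (i). Notably, the proof uses only the DAG property of $G$ and not its finer series-parallel decomposition, so the same construction works verbatim for arbitrary DAGs.
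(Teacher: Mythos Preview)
Your proof is correct and takes a genuinely different route from the paper's. The paper argues by induction on the series-parallel decomposition tree: for a single edge the toll is zero; for a series join one concatenates the inductively obtained tolls on the two pieces; for a parallel join one shifts the tolls on the cheaper side by the appropriate constant $\delta$ so that the two sides equalize, and checks that some zero-toll $s$-$t$ path survives. Your potential-function construction is both more direct and more general: it needs only that $G$ is a DAG in which every vertex is reachable from $s$, and it produces the tolls in one shot without recursion. The paper's inductive construction, by contrast, is written in the same idiom as the surrounding machinery (canonical tolls, the labeling $(L,\Delta)$ on parallel-join nodes), so it dovetails naturally with the rest of the section; your $\tau^*$ would still need to be passed through Claim~\ref{clm:tollearliest} to recover the canonical form used downstream, but that is exactly what the paper does anyway. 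One small caveat on your closing remark: for arbitrary DAGs you need every vertex reachable from $s$ for $\phi$ to be defined, which is automatic in the two-terminal series-parallel setting but worth stating explicitly if you claim the result at that generality.
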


\begin{proof}
We will show that for any edge costs $(c_e)_e$, there exist tolls $\tau$ so that every $s$-$t$ path is a shortest path under edge costs $(c_e + \tau_e)_e$, and $\min_{P \in \mcp} \tau(P) = 0$. The claim follows simply by taking edge costs $(c_e=l^*_e(f^*_e))_e$ and
setting $\tau^*=\tau$, since every $s_i$-$t_i$ path clearly belongs to some $s$-$t$
path. 

The proof is by induction on the height of the decomposition tree for $G$. In the base
case, if the decomposition tree has height 1, $G$ consists of a single edge and setting
$\tau_e = 0$ satisfies the claim. For the inductive step, suppose $G$ is formed by the
composition of 
$H_1$ and $H_2$, and let $c^1$ and $c^2$ be the edge costs 
in subgraphs $H_1$ and $H_2$ respectively. Let $\tau^1$ and $\tau^2$ be the tolls that
satisfy the claim for costs $c^1$ in subgraph $H_1$, and costs $c^2$ in subgraph $H_2$
respectively. 

If $G$ consists of $H_1$ and $H_2$ composed in series, let $\tau_e = \tau_e^1$ if 
$e \in E(H_1)$ and $\tau_e = \tau_e^2$ otherwise. 
Then since any $s$-$t$ path $P$ consists of an $s_1$-$t_1$ path and
an $s_2$-$t_2$ path, each of which is a shortest path in $H_1$ and $H_2$ respectively,
every $s$-$t$ path is a shortest path. Secondly, by the inductive hypothesis, there is a
path $P$ in $H_1$ with $\tau^1(P) = 0$, and a path $Q$ in $H_2$ with $\tau^2(Q) = 0$. The
concatenation of paths $P$ and $Q$ yields an $s$-$t$ path $R$ with $\tau(R) = 0$. 

Suppose $G$ consists of $H_1$ and $H_2$ composed in parallel. 
For any paths $P \in \mcp(H_1)$ and $Q \in \mcp(H_2)$, let 
$\delta = c(Q) + \tau^2(Q)- c(P) - \tau^1(P)$. We may assume that $\dt\geq 0$
(otherwise switch $H_1$ and $H_2$).  
Note that by the inductive hypothesis the value of $\delta$ is independent of the choice
of $P$ and $Q$. Define tolls $\tau$ for graph $G$ as follows: 

\[
\tau_{vw} = \left \{ \begin{array}{ll}
	\tau^1_{vw} + \delta, & \mbox{ if $v=s$ and $(v,w) \in E(H_1)$.} \\
	\tau^1_{vw}, &\mbox{ if $v \neq s$ and $(v,w) \in E(H_1)$.} \\
	\tau^2_{vw}, & \mbox{ if $(v,w) \in E(H_2)$.}
	\end{array} \right.
\]

Then for any $s$-$t$ path $P$, if $P \in \mcp(H_1)$ then $c(P) + \tau(P)=c(P)
+ \tau^1(P) + \delta$. If $Q \in \mcp(H_2)$ then 
$c(Q) + \tau(Q)= c(Q) +\tau^2(Q)$. By definition of $\delta$ and the induction
hypothesis, every $s$-$t$ path is thus a shortest $s$-$t$ path. 
Since the tolls on paths in $H_2$ remain the same, there is also an $s$-$t$
path $R$ with $\tau(R) = 0$. 
\end{proof}

\begin{clm}
For any tolls $\tau\in\R_+^E$ on the edges of $G$, 
there exist $\alpha\in\R_+^E$ such that:
\begin{enumerate}[(i)]
\item $\tau(P) = \alpha(P)$ for all $P \in \mcp$, and
\item for every subgraph $H$ and every edge $e=(s_H, v) \in E(H)$, $\alpha_e \ge \min_{P \in \mcp(H)} \alpha(P)$.
\end{enumerate}
\label{clm:tollearliest}
\end{clm}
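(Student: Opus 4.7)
The plan is to proceed by induction on the decomposition tree of $G$, maintaining as an invariant the slightly stronger statement that $\tau(P) = \alpha(P)$ holds for every path $P$ in every subgraph (not only for $s$-$t$ paths of $G$). The base case, where $G$ is a single edge, is immediate by setting $\alpha_e = \tau_e$.

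For a parallel join $G = G_1 \parallel G_2$, I would apply induction to $G_1$ and $G_2$ to obtain $\alpha^1$ and $\alpha^2$, and define $\alpha$ as their disjoint union on $E(G_1) \cup E(G_2)$. Since every $s$-$t$ path of $G$ lies entirely in one of the two parallel branches, property (i) transfers directly. For property (ii) at the new root $H = G$, any outgoing edge $e$ of $s_G$ belongs to some branch $G_j$, and the inductive hypothesis applied at $H = G_j$ yields $\alpha_e \geq \min_{P \in \mcp(G_j)} \alpha(P) \geq \min_{P \in \mcp(G)} \alpha(P)$; property (ii) for subgraphs strictly inside $G_1$ or $G_2$ is inherited unchanged.

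The series join $G = G_1 \cdot G_2$ is the delicate case. After obtaining $\alpha^1, \alpha^2$ by induction, a naive concatenation preserves (i) but fails (ii) at $H = G$: an outgoing edge $e$ of $s_G = s_1$ only inherits $\alpha^1_e \geq \min_{P_1} \alpha^1(P_1)$, whereas $\min_{P \in \mcp(G)} \alpha(P) = \min_{P_1} \alpha^1(P_1) + \min_{P_2} \alpha^2(P_2)$. To fix this, I would set $\mu_2 = \min_{P_2 \in \mcp(G_2)} \alpha^2(P_2)$ and rebalance by adding $\mu_2$ to $\alpha^1_e$ for every outgoing edge $e$ of $s_1$ in $E(G_1)$, while subtracting $\mu_2$ from $\alpha^2_e$ for every outgoing edge $e$ of $s_2$ in $E(G_2)$. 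The subtraction preserves nonnegativity because the inductive hypothesis applied at $H = G_2$ guarantees $\alpha^2_e \geq \mu_2$ on such edges. Since every $s$-$t$ path of $G$ uses exactly one outgoing edge of $s_1$ and exactly one outgoing edge of $s_2$, the two shifts cancel and property (i) is preserved.

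The main obstacle, and the step I would treat carefully, is verifying that property (ii) survives this rebalancing for every subgraph strictly inside $G_1$ or $G_2$. For this I would first establish the easy structural lemma that in any sepa graph $G'$, every subgraph $H$ in its decomposition tree satisfies either $s_H = s_{G'}$ or $s_{G'} \notin V(H)$ (proved by induction on the decomposition, using that in a series join the only shared vertex is $t_{G'_1} = s_{G'_2}$, which differs from $s_{G'_1}$). Consequently, for any subgraph $H$ of $G_1$ with $s_H = s_1$, every path in $\mcp(H)$ uses exactly one outgoing edge of $s_1$ lying in $E(H)$, so both the $\alpha$-values on outgoing edges of $s_H$ and the $\alpha$-values on paths in $\mcp(H)$ are shifted by the same constant $+\mu_2$, preserving the inequality in (ii); for subgraphs with $s_1 \notin V(H)$, no edge is touched at all. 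The symmetric argument handles $G_2$ with shift $-\mu_2$, and the check for $H = G$ itself reduces to the direct computation $\alpha_e = \alpha^1_e + \mu_2 \geq \min_{P_1} \alpha^1(P_1) + \mu_2 = \min_{P \in \mcp(G)} \alpha(P)$.
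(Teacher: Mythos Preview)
Your approach is essentially identical to the paper's: induction on the decomposition tree, with the same rebalancing shift $\pm\mu_2$ in the series case; your structural lemma about $s_H$ just makes explicit what the paper compresses into one line. One caveat: the ``slightly stronger invariant'' you announce---that $\tau(P)=\alpha(P)$ for \emph{all} subgraph paths---is not actually preserved by the series step (for a subgraph $H$ of $G_1$ with $s_H=s_1$ you get $\alpha(P)=\tau(P)+\mu_2$), but since your argument never uses it, this is a harmless misstatement rather than a gap.
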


\begin{proof}
The proof is again by induction on the height of the decomposition tree. If $G$ is a
single edge, then $\alpha = \tau$. If $G$ is composed of subgraphs $H_1$ and $H_2$, let
$\tau^1$ and $\tau^2$ be the projection of $\tau$ onto the subgraphs. If $H_1$ and $H_2$
are in parallel, and tolls $\alpha^1$ and $\alpha^2$ satisfy the claim for tolls $\tau^1$
and $\tau^2$ in the subgraphs, it is easy to verify that tolls $\alpha$ defined by
$\alpha_e = \alpha_e^1$ for $e \in E(H_1)$ and $\alpha_e = \alpha_e^2$ for $e \in E(H_2)$
satisfy the claim. 

If $H_1$ and $H_2$ are in series, let $\alpha^1$ and $\alpha^2$ satisfy the claim for
tolls $\tau^1$ and $\tau^2$ in the subgraphs. Define $\delta = \min_{P \in \mcp(H_2)}
\alpha^2(P)$ and define the tolls 

\[
\alpha_{vw} = \left \{ \begin{array}{ll}
	\alpha^1_{vw} + \delta, & \mbox{ if $v=s_1$ and $(v,w) \in E(H_1)$} \\
	\alpha^1_{vw}, & \mbox{ if $v \neq s_1$ and $(v,w) \in E(H_1)$} \\
	\alpha^2_{vw} - \delta, &\mbox{ if $v= s_2$ and $(v,w) \in E(H_2)$} \\
	\alpha^2_{vw}, & \mbox{ if $v \neq s_2$ and $(v,w) \in E(H_2)$.}
	\end{array} \right.
\]

\noindent Any $s$-$t$ path $P$ consists of segment $P_1$ between vertices $s=s_1$ and
$t_1$, and segment $P_2$ between $t_1 = s_2$ and $t=t_2$. Then 

\[
\alpha(P) ~=~ \alpha(P_1) + \alpha(P_2) ~=~ \alpha^1(P_1) + \delta + \alpha^2(P_2) - \delta ~=~
\tau^1(P_1) + \tau^2(P_2) ~=~ \tau(P) \, . 
\]

\noindent Thus the first part of the claim is satisfied. 

For the second part, consider any subgraph $H$. If $H = G$, then since every path 
$P \in \mcp(H)$ consists of segments $P_1 \in \mcp(H_1)$ and $P_2 \in \mcp(H_2)$, for every edge
$e=(s,v) \in E$, 
\begin{align*}
\alpha_{sv} & = \alpha^1_{sv} + \delta \\
	& \ge \min_{P \in \mcp(H_1)} \alpha^1(P) + \min_{P \in \mcp(H_2)} \alpha^2(P) &
\mbox{ (by the inductive hypothesis and definition of $\delta$)}\\ 
	& = \min_{P \in \mcp(H_1)} \alpha^1(P) + \delta + \min_{P \in \mcp(H_2)} \alpha^2(P) - \delta \\
	& = \min_{P \in \mcp(H)} \alpha(P) \, .
\end{align*}

\noindent If $H \neq G$, then since every path path $P \in \mcp(H)$ contains exactly one
edge incident to $s_H$, the toll along every path changes by exactly the same quantity
($+\delta$, $-\delta$, or zero). 
\end{proof}

We call tolls $\al\in\R_+^E$ that satisfy property (ii) of Claim~\ref{clm:tollearliest}
{\em canonical tolls}. 
Thus, any edge tolls can be modified to obtain canonical edge tolls $\alpha$. 
These in turn can be mapped 
to a {\em labeling} $(L,\diff)$, where $\diff = (\diff_H)_{H \in \mathcal{H}} \in
\mathbb{R}_+^{\mathcal{H}}$, 
by setting $L=\min_{P \in \mcp} \alpha(P)$, 
and $\diff_H = \min_{P \in \mcp(H_L)} \alpha(P) - \min_{P \in \mcp(H_R)} \alpha(P)$ for
all $H\in\Hc$.
%
Lemma~\ref{clm:tollalphabeta} shows that this mapping is in fact invertible. Given the
labeling $(L,\diff)$ we can obtain canonical edge tolls $\al$ 
by the following procedure. Note that $|\Hc|\leq m$.

{\small \vspace{8pt} \hrule 
\begin{list}{M\arabic{enumi}.}{\usecounter{enumi} \topsep=0ex \itemsep=0ex
    \addtolength{\leftmargin}{-0.5ex}} 
\item Initialize $\al_e=0$ for all $e$.
\item We traverse subgraphs in $\Hc$ in a bottom-up manner, i.e., we consider all subgraphs 
in $\Hc$ that are descendants of $H\in\Hc$ before considering $H$. 
When we consider a subgraph $H$, we set $\al_e = \al_e+\max\{0,\diff_H\}$ for all
$e=(s_H,v)\in E(H_L)$, and  
$\al_e = \al_e+\max\{0,-\diff_H\}$ for all $e=(s_H,v)\in E(H_R)$.
\item Finally, we set $\al_e = \al_e + L$ for all $e = (s,v)\in E$.
\end{list}
\hrule}

\begin{lemma}
Let $(L,\diff)$ be the labeling obtained from some canonical tolls $\alpha\in\R_+^E$, 
and $\beta$ be the tolls obtained from $(L,\diff)$ by the above procedure. Then 
$\alpha=\beta$.  
\label{clm:tollalphabeta}
\end{lemma}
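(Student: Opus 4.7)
The plan is to prove the lemma by induction on the height of the decomposition tree of $G$. In the base case $G$ is a single edge $e$; then $\Hc=\es$, $\mcp=\{e\}$, so $L=\al_e$, $\diff$ is empty, and the procedure outputs $\beta_e=0+L=\al_e$.

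For the inductive step, let $G$ be the join of two subgraphs $H_1$ and $H_2$. Write $\al^i:=\al|_{E(H_i)}$ and $\mu_H:=\min_{P\in\mcp(H)}\al(P)$ for any subgraph $H$. First I would observe two bookkeeping facts. (a) $\al^i$ is itself canonical on $H_i$, since every subgraph in the decomposition tree of $H_i$ is also a subgraph in the decomposition tree of $G$, so the canonical inequality transfers. (b) The labeling $(L_i,\diff^i)$ of $\al^i$ in $H_i$ agrees with $(L,\diff)$ on all relevant subgraphs: $\diff^i_H=\diff_H$ for every $H\in\Hc$ contained in $H_i$, while $L_i=\mu_{H_i}$. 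By the induction hypothesis, running the procedure on $(L_i,\diff^i)$ in $H_i$ recovers $\al^i$.

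Now I would compare, for each edge $e$, the total contribution added to $\beta_e$ by the procedure on $(L,\diff)$ in $G$ with the contribution added to $\beta^i_e$ by the procedure on $(L_i,\diff^i)$ in $H_i$. The only two possible discrepancies are the extra $+\max\{0,\pm\diff_G\}$ addition if $G\in\Hc$, and the different final ``$+L$'' versus ``$+L_i$'' additions on edges out of the source. For the parallel case $G=H_1\|H_2$ with, say, $H_L=H_1$, one has $L=\mu_G=\min(L_1,L_2)$ and $\diff_G=L_1-L_2$, and a trivial case check gives $L+\max\{0,\diff_G\}=L_1$ and $L+\max\{0,-\diff_G\}=L_2$, exactly matching the $+L_i$ term in the $H_i$-procedure for edges $(s_G,v)\in E(H_i)$. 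Thus $\beta_e=\beta^i_e=\al_e$ for all $e$.

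The main obstacle is the series case $G=H_1\oplus H_2$, since here $G\notin\Hc$ so the only ``correction'' available is the final $+L$. The difference $L-L_1$ on edges out of $s_{H_1}=s_G$ would be problematic unless $L_2=0$. The key step is then to show that the canonical property forces $L_2=\mu_{H_2}=0$: picking a path $P=P_1\cdot P_2$ with $\al(P_i)=\mu_{H_i}$, the first edge $e$ of $P_1$ satisfies $\al_e\ge\mu_G=\mu_{H_1}+\mu_{H_2}$ by canonicity applied to $H=G$, while also $\mu_{H_1}=\al(P_1)\ge\al_e$, forcing $\mu_{H_2}\le 0$ and hence $\mu_{H_2}=0$ by nonnegativity of $\al$. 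An analogous argument covers edges out of $s_{H_2}$ (where by canonicity applied to $H=H_2$ there is no discrepancy to account for in the first place). With this, $L=L_1$ and $L_2=0$, so $\beta_e=\beta^i_e=\al_e$ on both sides and the induction closes.
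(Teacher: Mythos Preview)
Your proof is correct and follows essentially the same route as the paper: induction on the decomposition tree, with the series case hinging on the observation that canonicity forces $\mu_{H_2}=0$. The only difference is organizational---the paper tracks the partial tolls $\beta'$ obtained before the final $+L$ step and proves an invariant about them, whereas you recursively run the procedure on each $H_i$ and compare---but the substance is the same. One small remark: your parenthetical ``by canonicity applied to $H=H_2$ there is no discrepancy'' is not quite the right justification for the $s_{H_2}$ edges; the actual reason is simply that $L_2=\mu_{H_2}=0$, which you have already established.
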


\begin{proof}
Let $\beta'$ be the tolls obtained after step 1 of the above procedure, i.e.,
before adding $L$ to the edges incident to $s$. We will show that for each edge $e$ not
incident to $s$, $\beta_e' = \alpha_e$, while for each edge $e$ incident to $s$, $\beta_e'
= \alpha_e - \min_{P \in \mcp} \alpha(P)$. 

The proof is by induction on the size of $G$. 
If $G = \{e\}$, then since there are no parallel compositions,
$\mathcal{H} = \emptyset$, and hence $\beta_e' = 0$ $ = \alpha_e - \min_{P \in \mcp}
\alpha(P)$. If $G$ is the series-join of $H_1$ and $H_2$, then for each
edge not incident to $s_{H_1}$ or $s_{H_2}$, $\beta_e' = \alpha_e$ by the inductive
hypothesis. Further, note that the minimum toll $\alpha(P)$ over all $s_{H_2}$-$t_{H_2}$
paths must be zero, since otherwise, on any edge $e=(s,v)$, $\alpha_e$ would be strictly
less than the minimum toll over $s$-$t$ paths. Hence by the inductive hypothesis
$\beta_e' = \alpha_e$ for edges that leave $s_{H_2}$. For edges incident to $s$, since any
$s$-$t$ path consists of a path between $s$ and $t_1 = s_2$ and between $s_2$ and $t$, and
by the inductive hypothesis, 
%
$$
\beta_e' = \alpha_e - \min_{P \in \mcp(H_1)} \alpha(P)  = \alpha_e - \min_{P \in \mcp}
\alpha(P)
$$
%
where the second equality follows because, as earlier observed, the minimum toll
$\alpha(P)$ over all $s_{H_2}$-$t_{H_2}$ paths must be zero. Thus the inductive hypothesis
holds in this case. 

If $G$ is the parallel-join of $H_1$ and $H_2$, then for each edge not
incident to $s$, $\beta_e' = \alpha_e$ by the inductive hypothesis. Further, assume
without loss of generality that $\min_{P \in \mcp(H_1)} \alpha(P) = \min_{P \in \mcp}
\alpha(P)$. Then by the inductive hypothesis, for each edge $e=(s,v) \in E(H_1)$, 
\[
\beta_e' = \alpha_e - \min_{P \in \mcp(H_1)} \alpha(P) = \alpha_e - \min_{P \in \mcp} \alpha(P)
\]
as stated in the claim. Let 
$\dt=\min_{P \in \mcp(H_2)}\alpha(P)-\min_{P \in \mcp(H_1)}\alpha(P)\geq 0$. 
By the procedure for computing $\beta'$, if $H_1=H_L$
and $H_2=H_R$, then $\diff_H = -\dt$, otherwise $\diff_H=\dt$. 
In both cases, when considering $G$, we only modify the tolls on edges of
$E(H_2)$ incident to $s_{H_2}$ by adding $\dt$ to these. So for each edge 
$e=(s,v)\in E(H_2)$, we have 
%
\begin{align*}
\beta_e' & = \alpha_e - \min_{P \in \mcp(H_2)} \alpha(P) + \dt \\
	& = \alpha_e - \min_{P \in \mcp(H_2)} \alpha(P) + \min_{P \in \mcp(H_2)} \alpha(P) - \min_{P \in \mcp(H_1)} \alpha(P) \\
	& = \alpha_e - \min_{P \in \mcp} \alpha(P)
\end{align*}
which completes the induction step, and hence, the proof.
\end{proof}

\begin{definition} \label{short:goodgraphs}
Given multicommodity flows $f$ and $\tf$, we call a pair $H_1$, $H_2$ of subgraphs, 
{\em $(f,\tf)$-discriminating} if:  
\begin{enumerate}[(i)]
\item the parallel-join of $H_1$ and $H_2$ is a subgraph in $\Hc$; 
and
\item $f_e > \tf_e$ for all $e \in E(H_1)$, and $f_e \le \tf_e$ for all $e \in E(H_2)$.
\end{enumerate}
\end{definition}

\begin{lemma} \label{lem:tollsepagood}
Let $f$ and $\tf$ be two feasible multicommodity flows for 
$(G,(s_i,t_i,d_i)_{i\leq k})$. If $f\neq\tf$, then there exists an
$(f,\tf)$-discriminating pair of subgraphs.
\end{lemma}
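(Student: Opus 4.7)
The plan is to work with the edgewise difference $\Delta_e := f_e - \tf_e$ and exploit the fact that $\Delta$ is a circulation on $G$: for each commodity $i$, the flow $f^i - \tf^i$ has zero net flow at every vertex (including $s_i$ and $t_i$, since $d_i$ is the same under both flows), and summing over $i$ preserves this. By hypothesis $f \ne \tf$ as total-flow vectors, so $\Delta$ is not identically zero; and since $G$ is a DAG, any circulation whose values are all of one sign must vanish, so $\Delta$ necessarily takes both strictly positive and strictly negative values on $E$.

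For each node $H$ of the fixed decomposition tree I attach a sign-label: call $H$ \emph{positive} if $\Delta_e > 0$ for every $e \in E(H)$, \emph{nonpositive} if $\Delta_e \le 0$ for every $e \in E(H)$, and \emph{mixed} otherwise. Leaves are single edges, hence always pure (positive or nonpositive), and by the observation above the root $G$ is mixed. To produce the discriminating pair I aim to locate a parallel-join node with one positive child and one nonpositive child.

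The structural fact enabling this is: \emph{no series-join node can have one positive child and one nonpositive child.} To prove it, I introduce the flow-through value $\phi(X) := \sum_{(s_X,v)\in E(X)} \Delta_e$ for every subgraph $X$ in the decomposition tree. A short inductive observation on the tree gives that every internal (non-terminal) vertex of $X$ has all of its $G$-incident edges inside $E(X)$, so $\Delta$'s circulation property on $G$ descends to flow conservation inside $X$ at all such vertices; this yields the recurrences $\phi(A;B) = \phi(A) = \phi(B)$ and $\phi(A\|B) = \phi(A) + \phi(B)$, as well as $\phi(G) = 0$. Now if $A$ is positive, then $s_A$ has at least one outgoing edge in $E(A)$ (since $A$ is connected with source $s_A$) and every such edge has strictly positive $\Delta$, so $\phi(A) > 0$; if $B$ is nonpositive, then $\phi(B) \le 0$. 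These contradict $\phi(A) = \phi(B)$ in a series join.

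With the structural fact in hand, I walk down the decomposition tree from the root: at each mixed node, descend into any mixed child. Since the tree is finite and every leaf is pure, the walk must terminate at a mixed node both of whose children are pure. Such a parent cannot have two positive children (it would then itself be positive) nor two nonpositive children, so its children must be one positive and one nonpositive. By the structural fact this parent is not a series join, so it is a parallel-join node in $\Hc$, and its two children form the required $(f,\tf)$-discriminating pair. The main conceptual step is setting up $\phi$ so that the series/parallel recurrences hold; once that is done, the contradiction and the top-down walk are both short arguments.
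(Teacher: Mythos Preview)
Your proof is correct and takes a genuinely different, more streamlined route than the paper's.

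The paper proceeds in two stages. First it proves an auxiliary trichotomy (Lemma~\ref{helper:sepagood}) via bottom-up induction on the decomposition tree: for every subgraph $H$, either the sign of $\val{f_H}-\val{\tf_H}$ forces a uniform inequality between $f$ and $\tf$ on $E(H)$, or there is a weaker ``good pair'' $H_1,H_2$ with $f\ge\tf$ on $H_1$, $f\le\tf$ on $H_2$, and strict inequality of the through-flow values. Second, starting from such a good pair, it locates inside $H_1$ a minimal subgraph $K$ that mixes strict and equality edges, and uses a node-balance computation at the series-join vertex to show $K$ cannot be a series join; the two parallel children of $K$ then form the discriminating pair.

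Your argument collapses these two stages into one top-down walk. By introducing the flow-through value $\phi(X)$ and observing the series/parallel recurrences $\phi(A;B)=\phi(A)=\phi(B)$ and $\phi(A\Vert B)=\phi(A)+\phi(B)$ (which rest on the same internal-vertex confinement property the paper uses implicitly), you rule out series joins with sign-opposite pure children in one stroke, and the descent through mixed nodes delivers the discriminating pair directly. This is cleaner for the present lemma and avoids the intermediate ``good pair'' notion entirely. The trade-off is that the paper's auxiliary lemma and the good-pair definition are reused verbatim in Section~\ref{stackelberg} for Stackelberg routing (Definition~\ref{defn:goodsubgraphs}, Claim~\ref{clm:gooduseful}), so the paper's detour earns its keep elsewhere; your argument would need a separate (though very similar) treatment for that setting.
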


\begin{proofsketch}
We use induction on the series-parallel structure to first show a slightly weaker
statement: there exist subgraphs $H_1$ and $H_2$ whose parallel join is in $\Hc$ such that: 
(a) $f_e\geq\tf_e$ for all $e\in E(H_1)$, $f_e\leq\tf_e$ for all $e\in E(H_2)$, and 
(b) $\val{f_{H_1}}$, which we define to be the total flow routed under $f$ in $H_1$ for
commodities not internal to $H_1$, 
is greater than $\val{\tf_{H_1}}$, and $\val{f_{H_2}}<\val{\tf_{H_2}}$.
Now if $f_e>\tf_e$ for all $e\in E(H_1)$ then we are done. Otherwise, we show that if we
consider the minimal subgraph $K$ of $H_1$ (under the same decomposition tree used for $G$)
that contains both $f_e>\tf_e$ and $f_e=\tf_e$ edges, then $K$ must be a parallel-join
of subgraphs that form an $(f,\tf)$-discriminating pair.
\end{proofsketch}

We defer a full proof of Lemma~\ref{lem:tollsepagood} until Appendix~\ref{append-tollssepa}.

\begin{claim} \label{plusedges}
Let $\htf=f(l^*,\tau)$. If there is a subgraph $H$ such that $\htf_e>f^*_e$ for all 
$e\in E(H)$ then there is some commodity $i$ such that every $s_H$-$t_H$ path is 
part of a shortest $s_i$-$t_i$ path under edge costs $(l^{*\tau}_e(\htf_e))_e$.
\end{claim}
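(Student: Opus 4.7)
The plan is to prove the claim by structural induction on the sepa decomposition of $H$, establishing the slightly stronger statement that (a) some commodity $i$ admits a shortest $s_i$-$t_i$ path containing an entire $s_H$-$t_H$ subpath in $H$, and (b) all $s_H$-$t_H$ paths in $H$ share a common delay under the edge costs $c_e := l^{*\tau}_e(\htf_e)$. The claim then follows by a simple swap: since (b) gives equal delays across all $s_H$-$t_H$ paths in $H$, replacing the $H$-segment of commodity $i$'s shortest path with any other $s_H$-$t_H$ path in $H$ produces another shortest $s_i$-$t_i$ path.

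The base case $H = \{e\}$ is immediate: $\htf_e > f^*_e \geq 0$ forces $\htf_e > 0$, so some commodity $i$ has $\htf^i_e > 0$, and the Wardrop condition places $e$ on a shortest $s_i$-$t_i$ path. For the inductive step, suppose $H$ is the series or parallel join of sepa subgraphs $H_1, H_2$; the hypothesis $\htf_e > f^*_e$ transfers to each, so by induction we obtain commodities $i_1, i_2$ and common delays $d_1, d_2$ across $s_{H_j}$-$t_{H_j}$ paths in $H_j$. In the series case, where the intermediate vertex $t_{H_1} = s_{H_2}$ is strictly internal to $H$, the sepa structure forces all its outgoing edges to lie in $H_2$; thus commodity $i_1$'s shortest path, after crossing $H_1$ from $s_H$ to $t_{H_1}$, must continue through $H_2$ (picking up some $s_{H_2}$-$t_{H_2}$ subpath of delay $d_2$) and exit $H$ at $t_H$, giving common delay $d_1 + d_2$ for all $s_H$-$t_H$ paths in $H$ and establishing (a) for $i = i_1$. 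In the parallel case I would prove $d_1 = d_2$ via a swap: $i_1$'s shortest path contains an $s_H$-$t_H$ subpath in $H_1$ of delay $d_1$, and substituting any $s_H$-$t_H$ path in $H_2$ of delay $d_2$ yields a valid $s_{i_1}$-$t_{i_1}$ path, so shortestness forces $d_1 \leq d_2$; symmetrically via $i_2$, $d_2 \leq d_1$.

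The main obstacle I anticipate is the edge case where commodity $i_1$'s sink lies at the internal vertex $t_{H_1}$ in the series join, which breaks the continuation argument above. In this situation I would instead invoke commodity $i_2$: because the only way to reach the internal vertex $s_{H_2}$ from outside $H$ in a sepa graph is by first traversing $H_1$ from $s_H$, the $s_{i_2}$-$t_{i_2}$ shortest path guaranteed by the inductive hypothesis for $H_2$ automatically contains a full $s_H$-$t_H$ subpath in $H$. The same sepa property—that any directed path using an edge of $H$ must enter $H$ only at $s_H$ or at a commodity endpoint inside $H$, and exit only at $t_H$ or at a commodity endpoint inside $H$—also justifies the swap arguments above when commodity endpoints happen to lie at $s_H$ or $t_H$.
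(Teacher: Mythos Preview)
Your base case and parallel case are correct and match the paper's reasoning. The gap is in the series case. You note that the inductive commodity $i_1$ from $H_1$ may fail to traverse all of $H$ when $t_{i_1}=t_{H_1}$, but this is not the only problematic configuration: $t_{i_1}$ could equally well be an \emph{internal} vertex of $H_2$, in which case $i_1$'s shortest path enters $H_2$ but terminates before reaching $t_H$. Your fallback to $i_2$ has the symmetric flaw: the inductive hypothesis for $H_2$ only guarantees that $i_2$'s shortest path passes through $s_{H_2}$ and $t_{H_2}$, so $s_{i_2}$ may be $s_{H_2}$ itself or an internal vertex of $H_1$, and in neither case does the path contain an $s_H$-$t_{H_1}$ segment. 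Both bad configurations can occur simultaneously (take commodities with endpoints at the internal join vertex on either side, together with a third ``through'' commodity that the base-case witnesses happen not to pick), so the case analysis is genuinely incomplete.

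The paper handles the series case differently. Rather than relying on $i_1$ or $i_2$, it locates the needed commodity directly via a cut argument at $s_H$: the edge set $\{(s_H,v)\in E(H)\}$ is an $s_i$-$t_i$ cut for every commodity $i$ that sends $\htf$-flow into $H$ at $s_H$ and has $t_i$ internal to $H$, so each such commodity contributes exactly $d_i$ to both $\sum_{e=(s_H,v)\in E(H)}\htf_e$ and $\sum_{e=(s_H,v)\in E(H)}f^*_e$. Since the first sum strictly exceeds the second (as $\htf_e>f^*_e$ on every such edge), some commodity $j$ with positive $\htf^j$-flow entering $H$ at $s_H$ must have $t_j\notin V(H)\setminus\{s_H,t_H\}$; this $j$'s shortest path then contains a full $s_H$-$t_H$ segment. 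The inductive hypotheses on $H_1,H_2$ are invoked only afterward, and only for part~(b) of your plan---equal delay across all $s_H$-$t_H$ paths---which then upgrades the single segment to ``every $s_H$-$t_H$ path.''
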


\begin{proof}
The proof is by induction on the size of $H$. If $H$ is an edge $e$, there is some
commodity $i$ such that $\htf^i_e>0$, so the statement holds. 
If $H$ is the parallel join of $H_1, H_2$, then it follows from the induction hypothesis
that every $s_{H}$-$t_{H}$ path must be of equal length (since there are commodities
corresponding to both $H_1$ and $H_2$); hence, there is a commodity corresponding to $H$  
and the statement follows.  
Suppose $H$ is the series composition of $H_1, H_2$. 
Let $\mathcal{K}$ be the set of commodities $i$ such that 
$\sum_{e=(s_H,v)\in E(H)}\htf^i_e>0$. 
For every $i\in\K$ such that $t_i\in V(H)\sm\{t_H\}$, the set of edges $(s_H,v)\in E(H)$
forms an $s_i$-$t_i$ cut, and so the flow across the cut must be the same in $\htf^i$ and
$f^{*i}$. However, $\sum_{e=(s_H,v)\in E(H)}\htf_e>\sum_{e=(s_H,v)\in E(H)}f^*_e$, so there
is some commodity $j\in\K$ such that $s_j,t_j\notin V(H) \setminus \{s_{H},t_{H}\}$. 
For commodity $j$, some $s_{H}$-$t_{H}$ path is part of a shortest $s_j$-$t_j$ path under
edge costs $(l^{*\tau}_e(\htf_e))_e$. Applying the induction hypothesis to $H_1, H_2$
yields that all $s_H$-$t_H$ paths are of the same length. Thus, every $s_{H}$-$t_{H}$ path
is part of a shortest $s_j$-$t_j$ paths under edge costs $(l^{*\tau}_e(\htf_e))_e$. 
\end{proof}

We now describe the algorithm for Theorem~\ref{thm:sepatolls}.
Let $\tau^*$ be tolls given by part (b) of Claim~\ref{clm:tollallequal} and $(0, \diff^*)$ be the labeling obtained from $\tau^*$. 
We may assume that $\tau^*_e\in[0,U']$ and is a multiple of $\frac{1}{U'}$ for all $e$, 
where $U'=m\poly(U,\sum_i d_i)$.
E.g., with standard linear latencies, since every $f^*_e, a^*_e, b^*_e \in [0,U]$ and is a multiple of 
$\frac{1}{U}$, we can take $U'=\max\{U^2,mK\sum_i d_i\}$.

{\small \vspace{7pt} \hrule \vspace{-1pt}
\begin{list}{T\arabic{enumi}.}{\usecounter{enumi} \topsep=0ex \itemsep=0pt
    \addtolength{\leftmargin}{-0.5ex}} 
\item Initialize, $L_H =-mU'$, $U_H = mU'$, $\diff_H=0$ for all $H \in \mathcal{H}$. 
Let $L=0$. Let $M=m\log(8mU'^2)$.
\item For $r=1,\ldots,M$, we do the following.
Map $(L,\diff)$ to canonical tolls $\al$ as described in steps M1--M3. 
Query the oracle to obtain $\htf=f(l^*,\al)$. If $\htf=f^*$, then exit the loop.
Otherwise, find an $(\htf,f^*)$-discriminating pair of subgraphs $H_1$, $H_2$ (which exists
by Lemma~\ref{lem:tollsepagood}). \\
Let $H$ be the parallel join of $H_1,H_2$. 
If $H_1 = H_L$, update $L_H\assign\diff_H$, else update $U_H\assign\diff_H$. 
If $|U_H-L_H|<\frac{1}{U'}$, set $\diff_H$ to be the multiple of $\frac{1}{U'}$ in
$[L_H,U_H]$; else update $\diff_H=(L_H+U_H)/2$. 
\item Return tolls $\al$.
\end{list}
\hrule }

\begin{proofof}{Theorem~\ref{thm:sepatolls}}
Let $\alpha^*$ be the canonical tolls obtained from $\tau^*$ via
Claim~\ref{clm:tollearliest},  
and let $(L^*,\diff^*)$ be the corresponding labeling. 
We have $L^* = 0$ due to Claims~\ref{clm:tollallequal} and~\ref{clm:tollearliest}.
The proof of Claim~\ref{clm:tollearliest} shows that, under the assumptions on
$\tau^*$, we have $\al^*_e$ is a multiple of $\frac{1}{U'}$, and is in $[0,mU']$ for all
$e$. Hence, $\diff^*_H\in[-mU',mU']$ and is a multiple of $\frac{1}{U'}$, for all
$H\in\Hc$. 

We say that the intervals $[L_H,U_H]$ assigned to $H\in\Hc$ are valid if
$\diff^*_H\in[L_H,U_H]$ for all $H\in\Hc$. We argue below that our algorithm maintains valid
intervals. Give this, in each iteration we halve the length of some interval, and this may
happen at most $\log (8mU'^2)$ times for the interval of some $H\in\Hc$ until we find
$\diff^*_H$, 
since $\diff^*_H$ is a multiple of $\frac{1}{U'}$. 
Since there are at most $m$ subgraphs in $\Hc$, after $M$ iterations (without reaching
$f^*$), we obtain $\diff^*$.  

We now prove that the algorithm maintains valid intervals.
Given tolls $\tau$ and a subgraph $H$, define $\tau_H:=\min_{P\in\Pc(H)}\tau(P)$.
So $\diff^*_H=\al^*_{H_L}-\al^*_{H_R}$.
The intervals are clearly valid at the start of the algorithm.
Suppose the intervals are valid at the start of an iteration in step T2. 
We may assume that $\htf\neq f^*$.
By Claim~\ref{plusedges}, there is some commodity
$i$ such that every $s_{H_1}$-$t_{H_1}$ path is part of a shortest $s_i$-$t_i$ path under
edge costs $(l^{*}_e(\htf_e)+\al_e)_e$. 
Let $P=\argmin_{P'\in\Pc(H_1)}\al^*(P')$ and $Q=\argmin_{Q'\in\Pc(H_2)}\al(Q)$.
Since $P$ is a segment of a shortest-path for commodity $i$, 
We have
$$
l^*_P(f^*)+\al_{H_1}
<l^*_P(\htf) + \al_{H_1}\le l^*_P(\htf) + \al(P)\le l^*_Q(\htf) + \al(Q)=l^*_Q(\htf) + \al_{H_2}
\leq l^*_Q(f^*)+\al_{H_2}.
$$
Here, the first and last inequalities follow since $H_1, H_2$ is
$(\htf,f^*)$-discriminating. The second inequality follows from the definition of
$\al_{H_1}$; the third, since $P$ is part of a shortest $s_i$-$t_i$ path; and the fourth
equality, from the definition of $Q$.
We know that every $s$-$t$ path is a shortest $s$-$t$ path under edge costs 
$(l^{*}_e(f^*_e)+\al^*_e)_e$. So we have
$$
l^*_P(f^*)+\al^*_{H_1}
= l^*_P(f^*) + \al^*(P) = l^*_Q(f^*) + \al^*(Q) 
\geq l^*_Q(f^*)+\al^*_{H_2}.
$$
Combining this with the earlier inequality gives 
$\al_{H_1}-\al_{H_2}<\al^*_{H_1}-\al^*_{H_2}$. So if $H_1=H_L$, then $\diff_H<\diff^*_H$;
otherwise, $\diff_H>\diff^*_H$. Thus, our update for $H$ preserves the validity of the
intervals.  
\end{proofof}

\begin{remark} \label{sgnoracle}
Our analysis shows that the above algorithm works 
whenever we have a ``sign oracle'' that given input tolls $\tau$ and a flow $f^*$, returns
the sign of $f(l^*,\tau)_e-f^*_e$ for all edges $e$. This is clearly weaker than having an
exact-equilibrium oracle. 
\end{remark}


\subsection{Nearly quadratic query complexity for single-commodity, linear-delay routing
  games} \label{tolls-linear}  

\begin{theorem}
For a single-commodity routing game $\varGamma$ with standard linear delay functions,
tolls that enforce $f^*$ can be obtained in at most $\tilde{O}(m^2)$ queries.  
\label{thm:linearmain}
\end{theorem}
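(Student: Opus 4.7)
The plan is to exploit the fact that, for standard linear latencies, the equilibrium flow is a piecewise-affine function of the tolls, as established in Theorem~\ref{lem:linearcont}. With $l^*_e(x)=a^*_ex+b^*_e$, the equilibrium $f(\tau)$ minimizes the strictly convex quadratic $\tfrac{1}{2}\sum_e a^*_ef_e^2+\sum_e(b^*_e+\tau_e)f_e$ subject to flow conservation. Its KKT conditions say there exist potentials $\pi_v$ such that for every edge $e=(u,v)$, $a^*_ef_e+b^*_e+\tau_e\geq \pi_v-\pi_u$ with equality whenever $f_e>0$. For a fixed ``combinatorial state''---the support $S$ of $f(\tau)$ together with the set $T$ of tight edges---these KKTs become a linear system whose solution is an affine map $f(\tau)=M\tau+c$, where $M\in\R^{m\times m}$ and $c\in\R^m$ depend on $(a^*_e,b^*_e)_e$, the graph, and the state $(S,T)$ but not on $\tau$ within the cell.

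The algorithm proceeds in rounds, maintaining a current toll vector $\tau$. In each round it spends $m+1$ queries to learn the current affine map: one query at $\tau$, then one at $\tau+\dt e_j$ for each edge $j$, where $\dt$ is chosen small enough---polynomially in $1/U$---that the perturbation does not leave the current combinatorial cell. Finite differences then exactly recover the Jacobian $M$ and the constant $c$. The algorithm then solves the linear system $M\tau'+c=f^*$ for $\tau'$, working modulo the null space of $M$ (which encodes the usual freedom of adding a cycle-constant to tolls; this can be pinned down by setting $\tau'_e=0$ on a suitably chosen spanning set). A single additional query at $\tau'$ either certifies $f(\tau')=f^*$ and the algorithm halts, or the algorithm iterates with $\tau\assign\tau'$.

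The heart of the proof is an argument that at most $\tO(m)$ rounds suffice. The intended invariant, proved using Theorem~\ref{lem:linearcont} together with the structural properties of single-commodity flows in $\vGm^*$, is that each non-terminating round ``fixes'' at least one additional edge into its correct status---either as a support edge of $f^*$ or as a properly non-tight zero-flow edge---so that the combinatorial cells visited along the trajectory form a strictly descending chain in a suitable partial order over states. Since there are only $m$ edges to settle, the sequence of cells has length $O(m)$; multiplied by the $O(m)$ queries per round, this gives $O(m^2)$ oracle queries, with polylogarithmic overhead to handle numerical precision in the finite-difference step given the encoding bound $\log U$.

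The main obstacle is the rigorous formulation of this ``progress per round'' invariant, which must handle degeneracies: multiple edges switching status simultaneously, and toll vectors differing only by cycle-constants (a symptom of $M$ being singular). These are handled by a symbolic/lexicographic tie-breaking rule when choosing $\tau'$---for instance, among the affine family of solutions to $M\tau'+c=f^*$, select the lex-smallest toll vector subject to the normalization $\tau'_e=0$ on a distinguished spanning tree---and by ensuring that $\dt$ is small enough that the finite differences correctly reveal the branch of the piecewise-affine map $f(\cdot)$ that contains the current $\tau$. With these ingredients in place, the $\tO(m^2)$ query bound follows.
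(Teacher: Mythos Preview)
Your proposal has a genuine gap at its core, and it is essentially the step you yourself flag as ``the main obstacle.''

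First, the linear system $M\tau'+c=f^*$ need not be consistent. Within a combinatorial cell whose support is $S$, the affine map satisfies $f_e(\tau)\equiv 0$ for every $e\notin S$; the corresponding rows of $M$ are identically zero and $c_e=0$. Since the paper normalizes to $f^*>0$ on every edge, whenever the current support $S$ is strict the equation $M\tau'+c=f^*$ has \emph{no} solution at all---this is not a null-space issue but an image issue. Your procedure, as written, simply cannot produce a next iterate $\tau'$ in this (typical) situation.

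Second, even granting some rule for producing $\tau'$, the ``one edge fixed per round'' invariant is asserted, not proved. Nothing in Theorem~\ref{lem:linearcont} gives you a monotone potential over the exponentially many combinatorial cells; jumping to the solution of an affine extrapolation can land you in an arbitrary cell, and there is no evident reason the trajectory cannot revisit or oscillate among states. Lexicographic tie-breaking addresses degeneracy within a cell but does nothing to bound the number of cells visited.

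The paper's proof avoids both problems by separating the two tasks. Step~1 (Lemmas~\ref{lem:linearsupport} and~\ref{lem:fullsupport}) first drives the equilibrium to \emph{full support}: one edge $r$ at a time, it lowers $\tau_r$ via binary search until $f_r$ becomes positive, while Theorem~\ref{lem:linearcont}(iii),(iv) guarantee that the flow on previously-positive edges cannot drop too far, so the support only grows. This is where the genuine progress-per-iteration invariant lives, and it costs $O(m)$ edges times $O(m\log(\cdot))$ queries per binary search, giving the $\tilde O(m^2)$. Only after full support is achieved does Step~2 use the affine map (Lemma~\ref{lem:lineartolls}): now the system $\beta\tau^{(2)}=f^*-f(\tau^{(1)})$ is guaranteed consistent, and part~(ii) of that lemma certifies that the resulting tolls really do induce $f^*$, so a single linear solve finishes the job. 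Your plan tries to merge these two steps and, in doing so, loses both the feasibility of the linear solve and the termination argument.
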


Throughout, 
we assume without loss of generality that $f^* > 0$; otherwise, we impose infinite tolls
on any edge where $f^*_e = 0$, effectively removing these edges from the
graph.\footnote{The use of infinite tolls is a notational convenience; the same effect can
  be obtained with tolls $m^2 2^ml_{\max}(d)$.} We assume 
the delay function on any edge $e$ is $l_e(x) = a_e x + b_e$. Define $l_{\max}(x) :=
\max_{e \in E} a_e x + b_e$, and $\kappa(x) = x^2/Kd$. Define the \emph{support} of a flow
$f$ to be the set of edges with strictly positive flow. We will use negative tolls in our
proof; however, by Claim~\ref{clm:nonnegativetolls} which we prove in Appendix~\ref{append-tollslinear}, this is again just a notational
convenience. Similar arguments were used in~\cite{Fleischer05} to show boundedness of tolls, but the results are not directly applicable. Note that $f^*$ is acyclic.

\begin{clm}
For a single-commodity routing game and tolls $\tau$, there exist tolls $\tau' \ge 0$
so that $f(\tau) = \ef(\tau')$ and 
$\tau_{e'}' \le \tau_{e'} + \sum_{e:  \tau_e < 0} |\tau_e|$ for all $e'$. 
If the graph is acyclic, $\tau'$ can be obtained without knowledge of the delay functions.  
\label{clm:nonnegativetolls}
\end{clm}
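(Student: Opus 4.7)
The plan is to construct $\tau'$ by a potential-function transformation: fix $\phi(s) = 0$, choose a suitable $\phi : V \to \R$, and set $\tau'_{uv} := \tau_{uv} + \phi(u) - \phi(v)$. Such a transformation shifts the toll of every $s$-$t$ path by the same constant $-\phi(t)$, leaving the single-commodity Wardrop equilibrium unchanged, so $\ef(\tau) = \ef(\tau')$ is automatic. The two required properties of $\tau'$ then translate into constraints on $\phi$: $\tau'_{uv} \ge 0$ becomes $\phi(v) - \phi(u) \le \tau_{uv}$ for every $(u,v) \in E$, and $\tau'_{e'} \le \tau_{e'} + S$, where $S := \sum_{e:\tau_e<0}|\tau_e|$, becomes $\phi(u) - \phi(v) \le S$ for each such edge.

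For the choice of $\phi$, I would take $\phi(v)$ to be the minimum, over all $s$-$v$ paths $P$, of $\sum_{e \in P : \tau_e < 0} \tau_e$---equivalently, the shortest-path distance from $s$ to $v$ in $G$ under the (nonpositive) edge weights $\min(\tau_e, 0)$. When $G$ is acyclic, these distances are well-defined and lie in $[-S, 0]$, since any simple $s$-$v$ path uses each negative-toll edge at most once; they can be computed by a single topological pass, and---crucially---this uses only $G$ and $\tau$, with no appeal to the latency functions $l$. Two routine verifications then finish the acyclic case: (i) concatenating the $\phi$-optimal $s$-$u$ path with $(u,v)$ gives $\phi(v) \le \phi(u) + \min(\tau_{uv}, 0) \le \phi(u) + \tau_{uv}$, yielding $\tau'_{uv} \ge 0$; and (ii) $\phi(u), \phi(v) \in [-S, 0]$ gives $\phi(u) - \phi(v) \le S$, hence $\tau'_{uv} \le \tau_{uv} + S$.

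The main obstacle is the general, possibly cyclic, case: a directed cycle in $G$ with negative total toll would drive $\phi$ to $-\infty$ and break the construction above. The plan here is to exploit the fact that, for strictly increasing latencies, the equilibrium flow $\ef(\tau)$ is itself acyclic, so its edge-support induces a DAG on $V$ that carries every $s$-$t$ path used by the equilibrium. I would run the potential construction on this support DAG to obtain $\phi$ and define $\tau'$ there, and then extend $\tau'$ to the remaining edges using the equilibrium characterization---unused $s$-$t$ paths have toll-plus-latency no smaller than the common value on used paths---which provides enough slack to choose $\tau'_e \in [0, \tau_e + S]$ on each such edge without undercutting the equilibrium. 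Identifying this support requires access to $\ef(\tau)$, which is precisely why the clause ``without knowledge of the delay functions'' in the statement is conditioned on $G$ being acyclic.
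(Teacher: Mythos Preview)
Your approach is correct and genuinely different from the paper's. The paper proceeds one negative-toll edge $e'=(u,w)$ at a time: it fixes a topological order $\sigma$ of $G$ (if acyclic) or of the positive-flow support of $\ef(\tau)$ (otherwise), takes the cut $S=\{v:\sigma(v)\le\sigma(u)\}$, adds $|\tau_{e'}|$ to every support edge crossing the cut and to every off-support edge, and then iterates over the remaining negative edges. This is essentially your potential trick with the cruder two-valued potential $\phi\in\{0,-|\tau_{e'}|\}$ determined by which side of the cut a vertex lies on; your global shortest-path potential $\phi(v)=\mathrm{dist}_{\min(\tau,0)}(s,v)$ handles all negative edges in a single pass and makes both $\tau'\ge 0$ and $\tau'\le\tau+S$ drop out of standard reduced-cost inequalities, whereas the paper's per-edge cut argument is more elementary but needs the iteration bookkeeping to recover the additive bound. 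Your general-case sketch is slightly underspecified but completes readily: extend $\phi$ to off-support vertices by $\phi=0$ (any value in $[-S,0]$ works), and on each off-support edge $e=(u,v)$ set $\tau'_e=\max\{0,\tau_e+\phi(u)-\phi(v)\}$; then every $s$-$t$ path satisfies $\tau'(P)\ge\tau(P)-\phi(t)$, with equality on flow-carrying paths, so $\ef(\tau)$ remains the Wardrop flow under $\tau'$, and $\phi(u)-\phi(v)\le S$ together with $\tau_e+S\ge 0$ give the required upper bound.
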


\paragraph{Proof outline.} We show that if the support of the equilibrium flow remains
fixed, the equilibrium flow is a linear function of the tolls. Thus if we can obtain tolls
$\tau$ so that the support of $f(\tau)$ is the same as $f^*$, we can solve a linear system
of equations to obtain tolls that enforce $f^*$. Accordingly, our algorithm consists of
the following two steps. 

\smallskip

\noindent \textbf{Step 1: Enforcing the correct support.} We first obtain tolls $\tau$ so that $f_e(\tau) > 0 \Leftrightarrow f_e^* > 0$. By suitably large tolls on edges $e$ for which $f_e^* = 0$, we already have tolls that satisfy one direction of the implication. The other direction is roughly by binary search, described in Lemma~\ref{lem:linearsupport}: we pick an edge $r$ that does not yet have flow, and impose increasingly negative tolls on this edge until it has positive flow at the equilibrium. The difficulty here is in maintaining monotonicity of the support of the equilibrium flow. Increasing the flow on edge $r$ decreases flow on the other edges. We use a number of results regarding the sensitivity of equilibrium flow for this step. In fact, this step has quadratic query complexity, while the second step that actually obtains tolls that enforce $f^*$ has linear query complexity.

\smallskip

\noindent \textbf{Step 2: Obtaining the target flow $f^*$.} We now use Lemma~\ref{lem:lineartolls} which establishes the linearity of equilibrium flow as a function of tolls, if the support of the equilibrium flow does not change. Obtaining the coefficients of this linear map requires us to query the oracle with a small toll on each edge. The query complexity of this step is thus linear.

\smallskip

We start with some results about the continuity, monotonicity, and sensitivity of equilibrium flow as a function of tolls. Theorem~\ref{lem:linearcont}(ii) was earlier proved in~\cite{DafermosN84}.
Let $\mone_e\in\R^E$ be the vector with value 1 in coordinate $e$, and 0 everywhere else.

\begin{theorem} 
Let $\vGm$ be a single-commodity routing game with standard linear delay
functions. Then,
\begin{enumerate}[(i)]
\item $\|f(0)-f(\mone_e \kappa(\epsilon))\|_\infty \le \epsilon$,
\item $f(\tau)$ is continuous,
\item for edge $r$ and $\delta > 0$, $f_r(\mone_r \delta) \le f_r(0)$, and
\item for edge $r$ and $\delta > 0$,  $|f_r(-\mone_r \delta) - f_r(0)| \ge \|f(-\mone_r \delta) - f(0)\|_\infty$.
\end{enumerate}
\label{lem:linearcont}
\end{theorem}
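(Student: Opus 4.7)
The plan is to use the variational-inequality characterization of Wardrop equilibria noted in the preliminaries --- $\htf = f(\tau)$ satisfies $\sum_e (\htf_e - g_e)(l^*_e(\htf_e) + \tau_e) \le 0$ for every feasible $g$ --- applied to two equilibria simultaneously. Writing this inequality for $\htf = f(\tau_1)$ and $\tf = f(\tau_2)$, each tested against the other, and adding yields the fundamental estimate
\[
\sum_e (\tf_e - \htf_e)\bigl(l^*_e(\tf_e) - l^*_e(\htf_e)\bigr) \;\le\; \sum_e (\htf_e - \tf_e)(\tau_{2,e} - \tau_{1,e}).
\]
For standard linear latencies $l^*_e(x) = a^*_e x + b^*_e$ with $a^*_e > 0$, the left-hand side equals $\sum_e a^*_e (\tf_e - \htf_e)^2$, a sum of nonnegative terms. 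This single estimate drives parts (i)--(iii); part (iv) needs an additional structural argument.

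For (i), I take $\tau_1 = 0$ and $\tau_2 = \mone_e \kappa(\epsilon)$; the right-hand side collapses to $(\htf_e - \tf_e)\kappa(\epsilon) \le d\cdot \kappa(\epsilon) = \epsilon^2/K$, so each nonnegative summand on the left is bounded by $\epsilon^2/K$, and inverse-$K$-continuity \eqref{invkcont} delivers $|\tf_{e'} - \htf_{e'}| \le \epsilon$ for every edge $e'$. For (iii), setting $\tau_1 = 0, \tau_2 = \mone_r \delta$ with $\delta > 0$, the right-hand side is $(\htf_r - \tf_r)\delta$, and nonnegativity of the left forces $\tf_r \le \htf_r$. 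Part (ii) I will cite directly from~\cite{DafermosN84}; alternatively it follows from the same estimate applied to arbitrary small perturbations together with~\eqref{invkcont}.

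Part (iv) is the main obstacle; my plan is to combine the Beckmann potential formulation with a signed cycle decomposition of the flow difference. Let $\htf = f(0), \tf = f(-\delta \mone_r)$, and $\Delta = \tf - \htf$. Applying the estimate with $\tau_2 = -\delta \mone_r$ gives $\sum_e a^*_e \Delta_e^2 \le \delta \Delta_r$, which yields $\Delta_r \ge 0$; the boundary case $\Delta_r = 0$ immediately forces $\Delta = 0$, so I may assume $\Delta_r > 0$. Since $\Delta$ is a circulation, it admits a decomposition $\Delta = \sum_i \alpha_i C_i$ with $\alpha_i > 0$, where each $C_i$ is a directed cycle in the auxiliary graph obtained from $G$ by reversing every edge $e$ with $\Delta_e < 0$; back in $G$, the sign $C_{i,e}$ of each edge of $C_i$ matches the sign of $\Delta_e$.

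The crucial claim will be that every $C_i$ contains $r$ in its original orientation. Suppose some $C_i$ avoids $r$. Then $\htf + \eta C_i$ and $\tf - \eta C_i$ are feasible flows for small $\eta > 0$: the auxiliary-graph construction ensures $\htf_e > 0$ wherever $C_{i,e} = -1$ (for then $\Delta_e < 0$, forcing $\htf_e > \tf_e \ge 0$) and $\tf_e > 0$ wherever $C_{i,e} = +1$. Optimality of $\htf$ for the Beckmann potential $\Phi$ and of $\tf$ for $\Phi - \delta f_r$, together with $r \notin C_i$, yield respectively $\sum_e C_{i,e}\, l^*_e(\htf_e) \ge 0$ and $\sum_e C_{i,e}\, l^*_e(\tf_e) \le 0$. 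Subtracting and using linearity collapses to $\sum_{e \in C_i} a^*_e |\Delta_e| \le 0$ (using $C_{i,e}\Delta_e = |\Delta_e|$ on $C_i$), which is impossible since $|\Delta_e| \ge \alpha_i > 0$ for $e \in C_i$. Hence every $C_i$ traverses $r$ in its original direction, giving $\Delta_r = \sum_i \alpha_i$ and $|\Delta_{e'}| \le \sum_{i : e' \in C_i} \alpha_i \le \Delta_r$ for every edge $e'$, which is (iv). The main technical care will be in matching the sign of $C_{i,e}$ with positivity of the relevant endpoint flow to validate the perturbations, but this is exactly what the reversed-edge construction of the auxiliary graph is engineered to ensure.
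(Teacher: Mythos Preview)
Your proof is correct. Parts (i)--(iii) are essentially the same as the paper's: the paper derives (i) and (ii) from Corollary~\ref{cor:indiffclose} (which is exactly your fundamental two-equilibria variational inequality specialized to the $\e$-equilibrium setting), and proves (iii) by a potential-function contradiction that is the integrated form of your variational-inequality argument.

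For part (iv), however, your route is genuinely different and cleaner. The paper also decomposes $\Delta$ into signed cycles and shows each cycle must use $r$, but its contradiction argument relies on the \emph{equality} $\sum_{e\in C'} \pm\, l_e(f_e) = 0$ along cycles, which only holds when every edge of the cycle is a shortest-path edge. This forces a case split: first assume $\mcs(l,f(0)) = \mcs(l,f(-\delta\mone_r))$, then handle the general case by partitioning $[0,\delta]$ into up to $2^m$ subintervals on which the shortest-path edge set is constant, applying the special case on each, and passing to the limit via Corollary~\ref{cor:indiffclose}. Your argument replaces the equality with the two first-order optimality \emph{inequalities} for the Beckmann potentials at $\htf$ and $\tf$; subtracting these and using linearity yields $\sum_{e\in C_i} a^*_e|\Delta_e|\le 0$ directly, with no assumption on shortest-path edge sets and hence no case distinction or limiting step. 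The price is that you must check feasibility of the perturbations $\htf+\eta C_i$ and $\tf-\eta C_i$, but as you note, the sign convention in the auxiliary graph guarantees the needed strict positivity on the relevant edges. What your approach buys is a one-shot proof; what the paper's buys is perhaps a slightly more explicit connection to the combinatorics of shortest-path edge sets, at the cost of the $2^m$-interval limiting argument.
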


The proof of (i) and (ii) are straightforward from the following immediate Corollary of Lemma~\ref{eqbm-close}. We prove (iii) and (iv) in Appendix~\ref{append-tollslinear}.

\begin{corollary}
For a multicommodity routing game $\varGamma$, let $\hat{f}$ be the equilibrium flow. If $g$ is a valid multicommodity flow that satisfies for all $i$, $P \in \mcp^i$, $g_P > 0 \Rightarrow l_P(g) \le D^i(l,g) + \epsilon$, then $\Vert g-\hat{f} \Vert_\infty \le \sqrt{K \epsilon \sum_i d_i}$.
\label{cor:indiffclose}
\end{corollary}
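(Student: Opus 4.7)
The plan is to reduce the corollary to Lemma~\ref{eqbm-close}, by showing that the path-wise ``stringent'' hypothesis on $g$ implies the averaged $\varepsilon$-equilibrium condition of Definition~\ref{newe-eqbm}. This reduction is in fact the one already flagged in the remark right after Definition~\ref{newe-eqbm}, which states that the path-wise bound $l_P(g) \le D^i(l,g) + \varepsilon$ for every $i$ and every path $P \in \mcp^i$ with $g_P > 0$ implies that $g$ is an $\varepsilon$-equilibrium; Corollary~\ref{cor:indiffclose} is then just Lemma~\ref{eqbm-close} applied to this $g$.

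More concretely, the first step will be to take an arbitrary path-decomposition $(g^i_P)_{i,\, P \in \mcp^i}$ of $g$ respecting the commodity demands, so that $\sum_{P \in \mcp^i} g^i_P = d_i$ for every commodity $i$ and $\sum_i \sum_{P \ni e} g^i_P = g_e$ for every edge $e$. Using the equality $\sum_e g_e l_e(g_e) = \sum_i \sum_{P \in \mcp^i} g^i_P \, l_P(g)$, I would then apply the hypothesis termwise: for every $(i,P)$ with $g^i_P > 0$ we have $l_P(g) \le D^i(l,g) + \varepsilon$, while for $(i,P)$ with $g^i_P = 0$ the term contributes nothing. Summing yields
\[
\sum_e g_e l_e(g_e) \;\le\; \sum_i \sum_{P \in \mcp^i} g^i_P \bigl(D^i(l,g) + \varepsilon\bigr) \;=\; \sum_i d_i \bigl(D^i(l,g) + \varepsilon\bigr),
\]
which is exactly the defining inequality for $g$ being an $\varepsilon$-equilibrium in Definition~\ref{newe-eqbm}.

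The second step is immediate: invoking Lemma~\ref{eqbm-close} on the Wardrop equilibrium $\hat f$ and the $\varepsilon$-equilibrium $g$ of the given routing game yields $\|g - \hat f\|_\infty \le \sqrt{K \varepsilon \sum_i d_i}$, as required.

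There is really no main obstacle here; the only tiny subtlety is that the hypothesis of the corollary is stated path-wise while the hypothesis of Lemma~\ref{eqbm-close} (via Definition~\ref{newe-eqbm}) is stated as a single aggregate inequality, so one just has to verify that the path-wise statement integrates correctly against any valid path-decomposition of $g$. Since the bound $l_P(g) \le D^i(l,g) + \varepsilon$ holds on the support of every such decomposition, the implication goes through independently of which decomposition is chosen, and the corollary follows.
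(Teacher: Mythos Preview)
Your proposal is correct and matches the paper's approach exactly: the paper states this as an ``immediate Corollary of Lemma~\ref{eqbm-close}'' and, just after Definition~\ref{newe-eqbm}, already observes that the path-wise condition implies the aggregate $\varepsilon$-equilibrium inequality. Your write-up simply spells out that implication via a path decomposition and then invokes Lemma~\ref{eqbm-close}, which is precisely what is intended.
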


We now show a lemma that is used to prove the first step of our proof. Lemma~\ref{lem:linearsupport} shows that if edge $r$ has no flow or very little flow at equilibrium, then with a small number of queries we can obtain tolls so that the flow on edge $r$ increases, and the flow on the other edges does not change significantly.

\begin{lemma}
Let $\varGamma$ be a single-commodity nonatomic routing game, and let $\delta > 0$, $\delta \le d$. For tolls $\tau$, let $S := \{e: f_e(\tau) \ge \delta\}$, and edge $r \not \in S$. Then with $\log \left(-N/\kappa(\delta/3)\right)$ queries, we can determine tolls $\tau'$ so that $f_e(\tau') \ge \delta / 3$ for all $e \in S \cap \{r\}$, where
%
$N := \min_{P \in \mcp} \tau_P - \min_{P \in \mcp: r \in P} \tau_P - m l_{\max}(d) < 0\,$ .
\label{lem:linearsupport}
\end{lemma}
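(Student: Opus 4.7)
The plan is to binary-search a toll-decrement $x\in[0,-N]$ on edge $r$, i.e., consider tolls $\tau-\mone_r x$, and locate $x^*$ with $f_r(\tau-\mone_r x^*)\in[\delta/3, 2\delta/3)$. If $f_r(\tau)\ge\delta/3$ already, take $\tau'=\tau$ and no query is needed. Otherwise I first verify that the far endpoint $x=-N$ drives the flow on $r$ past $\delta/3$, so that the window is actually reached. The cheapest $r$-path then carries toll $\min_{P:r\in P}\tau(P)+N=\min_P\tau(P)-m l_{\max}(d)$, so its delay-plus-toll is at most $\min_P\tau(P)$, forcing the equilibrium cost $c^*\le\min_P\tau(P)$; any used non-$r$ path $Q$ would then need $\text{delay}(Q)\le 0$, i.e.\ $a_e f_e+b_e=0$ on every edge $e\in Q$, which is impossible since $a_e>0$ for standard linear delays and $f_e>0$ on a used path. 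Hence at $x=-N$ all flow routes through $r$, giving $f_r(\tau+\mone_r N)=d>\delta/3$.

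Monotonicity (Theorem~\ref{lem:linearcont}(iii)) and continuity (Theorem~\ref{lem:linearcont}(i)), each applied with a shifted baseline $\tau$ in place of $0$, now drive a standard binary search: maintain the invariant $f_r(\tau-\mone_r L)<\delta/3\le f_r(\tau-\mone_r U)$ with $L=0,\ U=-N$ initially, and halve $U-L$ by querying the midpoint and updating $L$ or $U$ according to the sign of $f_r-\delta/3$. After $\log\bigl((-N)/\kappa(\delta/3)\bigr)$ queries one has $U-L\le\kappa(\delta/3)$, and the continuity bound forces $|f_r(\tau-\mone_r U)-f_r(\tau-\mone_r L)|\le\delta/3$, which together with the invariant places $f_r(\tau-\mone_r U)$ in the target window $[\delta/3,2\delta/3)$. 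Set $\tau'=\tau-\mone_r U$; since $f_r(\tau)\ge 0$, the shift $f_r(\tau')-f_r(\tau)$ is strictly less than $2\delta/3$, so by Theorem~\ref{lem:linearcont}(iv) (again at baseline $\tau$), $|f_e(\tau')-f_e(\tau)|<2\delta/3$ for every edge $e$, and in particular $f_e(\tau')>f_e(\tau)-2\delta/3\ge \delta-2\delta/3=\delta/3$ for each $e\in S$, as required.

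The main technical point to check is the shift-invariant use of parts (i), (iii), (iv) of Theorem~\ref{lem:linearcont}, which are stated with baseline tolls $0$; these transfer to an arbitrary baseline $\tau$ by re-reading $(l,\tau)$ as a routing game with latencies $l_e+\tau_e$ and baseline tolls $0$. The shifted latencies retain the strictly positive slopes $a_e$ (their intercepts may turn negative, but this plays no role in the proofs), which is the only structural property the proofs of those parts actually use. A secondary minor point is the argument that all flow routes through $r$ at $x=-N$, and this again rests only on $a_e>0$.
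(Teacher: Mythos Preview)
Your proof is correct and follows essentially the same approach as the paper: decrease the toll on $r$ only, verify the endpoint $\alpha=N$ pushes $f_r$ all the way to $d$, binary-search for an $\alpha$ with $f_r\in[\delta/3,2\delta/3]$, and then invoke Theorem~\ref{lem:linearcont}(iv) to bound the drift on the other edges by $2\delta/3$. Your treatment is in fact slightly more explicit than the paper's in two places: you spell out the binary-search invariant and its termination via the continuity bound (the paper instead defines the boundary points $a,b$ and argues $a-b\ge\kappa(\delta/3)$, which amounts to the same thing), and you flag and justify the shift-invariance of parts (i), (iii), (iv) of Theorem~\ref{lem:linearcont}, which the paper uses implicitly.
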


\begin{proof}
To obtain tolls $\tau'$, we will only vary the tolls on edge $r$. We thus parametrize tolls $\tau'$ by $\alpha$, where $\tau' = \tau + \mone_r \alpha$.

If $f_r(\tau) \ge \delta / 3$, we are done. Otherwise, we claim that if $\alpha = N$, then $f_r(\tau') = d$. To see this, let $Q$ be the path that minimizes $\sum_{e \in P} a_e d + b_e + \tau_e$ over all paths $P \in \mcp$ with $r \in P$, and let $f$ be the flow that sends the entire demand along this path. Then the delay along this path with tolls $\tau'$ is

\begin{align*}
\sum_{e \in Q} \left( a_e d + b_e + \tau_e'\right) & = \sum_{e \in Q} \left( a_e d + b_e + \tau_e\right) + N \\
	& = \min_{P: r \in P} \sum_{e \in P} \left( a_e d + b_e + \tau_e \right) + \min_{P \in \mcp} \tau(P) - \min_{P \in \mcp: r \in P} \tau(P) - m l_{\max}(d) \\
	& \le \min_{P \in \mcp} \tau(P) \, ,
\end{align*}

\noindent while for any path $P$ with $r \not \in P$, the delay along path $P$ is at least this quantity. Hence $f$ is actually an equilibrium flow, and if $\alpha \le N$, then $f_r(\tau') = d$.

Define $a$, $b \in [N,0]$ as follows.

\begin{align*}
a & := \max \{\alpha \in [N,0]: f_r(\tau') = \delta/3\} \\
b & := \min \{\alpha \in [N,0]: f_r(\tau') = 2\delta/3\} \\
\end{align*}

By the continuity of equilibrium flow with respect to tolls (Theorem~\ref{lem:linearcont},
(ii)), $a$, $b$ exist. By the monotonicity of equilibrium flow, for any $\alpha \in
[b,a]$, $f_r(\tau') \in [\delta/3, 2\delta/3]$. Then by Theorem~\ref{lem:linearcont}, (iv),
for any edge $e \in S$ and $\alpha \in [b,a]$, $f_e(\tau') \in [\delta/3, \delta]$. Thus
our problem reduces to finding an $\alpha \in [b,a]$, which we can find by binary
search. We will show that $a-b \ge \kappa(\delta/3)$, which gives us the bound on the
number of queries required. To see this, let $\tau^a$ and $\tau^b$ be the tolls obtained
by setting $\alpha = a$ and $\alpha = b$ respectively. Then if $a-b \le \kappa(\delta/3)$,
then by Theorem~\ref{lem:linearcont} (i), $\|f(\tau^a) - f(\tau^b)\| \le \delta/3$. 
\end{proof}

{\small \vspace{8pt} \hrule 
\begin{labellist}[F]
\item Initialize $\tau_e \leftarrow 0$ for all $e$, $i \leftarrow 1$, and $S \leftarrow \{e: f_e(\tau) \le d/3^i\}$.
\item While $S \neq E$
\item \qquad Pick an edge $r \not \in S$
\item \qquad By Lemma~\ref{lem:linearsupport}, find $\alpha \in [N,0]$ so that if $\tau' = \tau + \mone_r \alpha$, then $f_e(\tau') \ge d/3^{i+1}$ for all $e \in S \cup \{r\}$.
\item \qquad $\tau \leftarrow \tau'$, $i \leftarrow i+1$, $S \leftarrow \{e: f_e(\tau) \le d/3^i\}$
\end{labellist}
\hrule }

\begin{lemma}
The stated algorithm terminates with tolls $\tau$ so that $f_e(\tau) \ge d/3^m$ on every edge, and requires $O(m^2 \log (3m l_{\max}(d))$ queries.
\label{lem:fullsupport}
\end{lemma}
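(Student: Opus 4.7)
The plan is to split the argument into (i) a correctness/termination part showing that $S$ grows by at least one edge per iteration (so the loop ends after at most $m$ rounds with every edge carrying flow $\ge d/3^m$), and (ii) a query-accounting part in which we sum the per-iteration bound of Lemma~\ref{lem:linearsupport} after controlling the growth of the accumulated tolls. Throughout, I read $S$ in step F1 as $\{e: f_e(\tau) \ge d/3^i\}$ (the ``$\le$'' appears to be a typo, since only this reading is consistent with the hypothesis of Lemma~\ref{lem:linearsupport} and with the condition $S \neq E$ in F2).

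For part (i), I would establish the invariant that at the top of the while-loop with counter $i$, the current tolls $\tau$ satisfy $f_e(\tau) \ge d/3^i$ for every $e \in S$. Initially this holds by the definition of $S$. In the body, pick $r \notin S$ and apply Lemma~\ref{lem:linearsupport} with $\delta = d/3^i$: the lemma outputs $\tau'$ with $f_e(\tau') \ge \delta/3 = d/3^{i+1}$ for every $e \in S \cup \{r\}$. After we increment $i$ and recompute $S \leftarrow \{e: f_e(\tau') \ge d/3^{i+1}\}$, the new $S$ therefore contains the old $S$ together with $r$, so $|S|$ strictly increases. Since $|S|\le m$, the loop halts after at most $m$ iterations; at termination $S = E$, so $f_e(\tau) \ge d/3^m$ on every edge (up to an off-by-one in the exponent, which can be absorbed by bumping $i$ by one).

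For part (ii), the key observation is that the algorithm modifies only $\tau_r$ in each iteration, by adding some $\alpha \in [N,0]$, where $N = \min_{P} \tau(P) - \min_{P: r \in P} \tau(P) - m l_{\max}(d)$. Consequently the tolls are always nonpositive and each edge's toll is updated at most once. Writing $T_i := \|\tau^{(i)}\|_\infty$, since path-toll differences are bounded by $m T_i$ we have $|N_i|\le m T_{i-1} + m l_{\max}(d)$, and the update gives $T_i \le T_{i-1} + |N_i| \le (m{+}1) T_{i-1} + m l_{\max}(d)$. Starting from $T_0 = 0$, this recurrence yields $T_i \le l_{\max}(d)\bigl((m{+}1)^i - 1\bigr)$ and hence $\log |N_i| = O\bigl(i\log m + \log(m l_{\max}(d))\bigr)$. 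Since $\kappa(d/3^{i+1}) = d/(K\cdot 9^{i+1})$, the per-iteration query cost from Lemma~\ref{lem:linearsupport} is
\[
O\!\left(\log\!\tfrac{|N_i|}{\kappa(d/3^{i+1})}\right) = O\bigl(i\log m + \log(m l_{\max}(d)) + \log K\bigr).
\]
Summing over $i=1,\dots,m$ gives the claimed $O\bigl(m^2 \log(3m l_{\max}(d))\bigr)$ bound (absorbing $\log K$ into the polylogarithmic terms).

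The main obstacle, as I see it, is the bookkeeping for the toll-growth recurrence: one must verify that $|N_i|$ grows only geometrically in $m$, which is what gives the extra $m$ factor beyond the linear number of iterations and lands the complexity at $\tilde O(m^2)$ rather than $\tilde O(m)$. Once the recurrence $T_i \le (m{+}1) T_{i-1} + m l_{\max}(d)$ is in hand, everything else is routine summation of logarithms.
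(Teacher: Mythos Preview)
Your proposal is correct and follows essentially the same two-part structure as the paper: termination via $|S|$ strictly increasing each round, then bounding the per-round query cost through an inductive estimate on $|N_i|$. The only substantive difference is the bookkeeping in part (ii): you track $T_i=\|\tau^{(i)}\|_\infty$ and use $|N_i|\le mT_{i-1}+ml_{\max}(d)$, yielding $T_i\le l_{\max}(d)\bigl((m{+}1)^i-1\bigr)$; the paper instead exploits directly that at iteration $i$ only the $i-1$ previously selected edges carry tolls, and bounds $|\min_P\tau(P)|\le\sum_{j<i}|N_j|$, giving the sharper recursion $|N_i|\le\sum_{j<i}|N_j|+ml_{\max}(d)$ and hence $|N_i|\le 2^{i-1}ml_{\max}(d)$. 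After taking logs your extra $\log m$ factor in $\log|N_i|$ is harmlessly absorbed by the stated $O\bigl(m^2\log(3ml_{\max}(d))\bigr)$ bound, so both routes suffice; the paper's is just a bit tighter at the intermediate step.
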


\begin{proofof}{Lemma~\ref{lem:fullsupport}}
Let $N(i)$ be the value of $N$ in the $i$th iteration of the while loop. Then by Lemma~\ref{lem:linearsupport}, the $i$th iteration requires $\log (-N(i)/\kappa(d/3^{i+1}))$ queries to complete, and adds at least one edge to the set $S$. Thus, there are at most $m$ iterations of the while loop. We will show that $|N(i)| \le m 2^{i-1} l_{\max}(d)$, thus proving the bound on the number of queries. Note that since all tolls are negative, $|\min_P \tau_P - \min_{P: r \in P} \tau_P|$ $\le |\min_P \tau_P|$.

The proof is by induction. In the first iteration since $\tau = 0$ initially, $N(1) \le m l_{\max}(d)$. In the $i$th iteration, there are at most $i-1$ other edges with tolls on them, and along any path the sum of the absolute values of these tolls is at most $\sum_{j \le i-1} 2^{j-1} m l_{\max}(d)$ $= (2^{i-1}-1) m l_{\max}(d)$, and hence $|N(i)| \le 2^{i-1} m l_{\max}(d)$.
\end{proofof}

This completes the first step of our proof. We now proceed with the second step. Lemma~\ref{lem:lineartolls} shows that the equilibrium flow is a linear function of the
tolls, as long as the set of edges with strictly positive flow remains constant. While a similar result on the linearity of the equilibrium flow was shown in~\cite{ColeDR06}, Lemma~\ref{lem:lineartolls} shows how to obtain the coefficients of the linear map.

\begin{lemma}
For any routing game $\varGamma$ and tolls $\tau^{(1)}$, let $f(\tau^{(1)}) > 0$. Then there exist coefficients
$(\beta_{e,e'})_{e,e' \in E}$ so that for any tolls $\tau$, 
\begin{enumerate}[(i)]
\item $f(\tau+\tau^{(1)}) > 0 \Rightarrow f(\tau+\tau^{(1)}) = f(\tau^{(1)}) + \beta \tau$, and
\item $f(\tau^{(1)}) + \beta \tau > 0 \Rightarrow f(\tau+\tau^{(1)}) = f(\tau^{(1)}) + \beta \tau$.
\end{enumerate}
\label{lem:lineartolls}
\end{lemma}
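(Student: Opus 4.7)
The plan is to exploit the linear structure of the equilibrium conditions for single-commodity routing games with standard linear delays $l_e(x) = a_e x + b_e$, when the equilibrium flow has full support. The idea is that, in this regime, the equilibrium is characterized by a linear system in the flow and node potentials, with tolls entering as an affine parameter; the coefficients $\beta_{e,e'}$ are then the entries of the corresponding solution map.

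First, I would characterize equilibrium flows with full support via potentials. If $f > 0$ is an equilibrium of $\varGamma$ under tolls $\sg$, then every $s$-$t$ path has the same cost under edge weights $a_e f_e + b_e + \sg_e$; equivalently, there exist node potentials $(\pi_v)_{v\in V}$ with $a_e f_e + b_e + \sg_e = \pi_w - \pi_v$ for every edge $e=(v,w)$. Write $g^{(1)} := f(\tau^{(1)})$ with potentials $\pi^{(1)}$. If $f(\tau + \tau^{(1)}) > 0$, with potentials $\pi$, then $\Delta f := f(\tau + \tau^{(1)}) - g^{(1)}$ and $\Delta\pi := \pi - \pi^{(1)}$ satisfy (a) $\Delta f$ is a circulation (flow conservation with zero net demand), and (b) $a_e \Delta f_e + \tau_e = \Delta\pi_w - \Delta\pi_v$ for every edge $e=(v,w)$. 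Thus $\Delta f$ is determined by a linear system whose coefficients depend only on $G$ and $(a_e)_e$, with $\tau$ entering linearly.

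The step I expect to be the main obstacle is showing this linear system has a unique solution for each $\tau$, so that $\Delta f = \beta\tau$ for a well-defined matrix $\beta$. I would establish uniqueness via the quadratic form $\Phi(x) := \tfrac{1}{2}\sum_e a_e x_e^2$. If $x$ is the difference of two solutions, then $x$ is a circulation satisfying $a_e x_e = q_w - q_v$ for some potentials $q$; summing,
\[
\sum_e a_e x_e^2 \;=\; \sum_{e=(v,w)} x_e(q_w - q_v) \;=\; \sum_v q_v \cdot \bigl(\text{net outflow of $x$ at $v$}\bigr) \;=\; 0,
\]
so $x = 0$ since every $a_e > 0$. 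This yields the linear map $\beta$ and proves (i): whenever $f(\tau + \tau^{(1)}) > 0$, we have $f(\tau + \tau^{(1)}) - g^{(1)} = \beta\tau$.

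Finally, for (ii), suppose $g^{(1)} + \beta\tau > 0$. I would verify directly that $g^{(1)} + \beta\tau$ is an equilibrium flow of $\vGm$ under tolls $\tau + \tau^{(1)}$: flow conservation holds because $g^{(1)}$ is feasible and $\beta\tau$ is a circulation by construction; and the potentials $\pi^{(1)} + \Delta\pi$ (with $\Delta\pi$ from the defining system of $\beta\tau$) witness equal cost on every $s$-$t$ path, which together with the positivity assumption gives the Wardrop condition. Uniqueness of the equilibrium flow for strictly increasing delays then implies $f(\tau + \tau^{(1)}) = g^{(1)} + \beta\tau$, completing the proof.
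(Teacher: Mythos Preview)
Your proof is correct and arguably cleaner than the paper's, but it takes a genuinely different route. You characterize the full-support equilibrium via node potentials, reduce to a square linear system in $(\Delta f,\Delta\pi)$, and establish invertibility by the energy argument $\sum_e a_e x_e^2 = 0$. The paper instead \emph{constructs} the coefficients $\beta_{e,e'}$ operationally: for each $e'$ it perturbs the toll on $e'$ by a small $\kappa(f_{\min}/2)$, reads off the resulting equilibrium shift, and sets $\beta_{e,e'}$ to be the finite-difference quotient. It then shows directly that $g := f(\tau^{(1)}) + \beta\tau$ equalizes delays on all $s$-$t$ paths under $\tau^{(1)}+\tau$; part (ii) follows immediately, and for part (i) it argues via a convex-combination $h(\lambda) = f(\tau^{(1)}+\tau) + \lambda(g - f(\tau^{(1)}+\tau))$ and uniqueness.

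The practical difference matters for how the lemma is used. In step L2 of the algorithm the coefficients $\beta$ must actually be computed \emph{from oracle queries}, since the $a_e$'s are unknown; the paper's finite-difference construction shows this can be done with exactly $m$ additional queries. Your argument proves existence of $\beta$ (and even gives a closed form in terms of the unknown $(a_e)_e$), but does not by itself yield a query procedure. If you want your proof to slot into Section~\ref{tolls-linear}, you should add a sentence noting that since the map $\tau\mapsto\beta\tau$ is linear and equals $f(\tau^{(1)}+\tau)-f(\tau^{(1)})$ whenever the latter stays positive, the columns of $\beta$ can be recovered by $m$ queries with sufficiently small single-edge perturbations (Corollary~\ref{cor:indiffclose} guarantees positivity is preserved).
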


\begin{proof}
We first show how to obtain the coefficients $(\beta_{e,e'})_{e,e'}$. Define $f_{\min}$ $= \min_e f(0) > 0$. For each edge $e'$, 
let $\alpha^{e'} := \mone_{e'} \kappa(f_{\min}/2)$. By Corollary~\ref{cor:indiffclose}, $f(\tau^{(1)} + \alpha^{e'}) > f_{\min}/2$ for each edge $e'$. Then for
each edge $e \in E$, define $\beta_{e,e'} = \left( f_e(\tau^{(1)}+\alpha^{e'}) - f_e(\tau^{(1)}) \right) / \kappa(f_{\min}/2)$. 

Given tolls $\tau$, let $g := f(\tau^{(1)}) + \sum_{e'} \beta_{e,e'} \tau_{e'}$. In general, $g$
may be negative on some edges. However, we show that $g$ is an $s$-$t$ pseudoflow of value
$d$: it satisfies all the conditions for being a flow except nonnegativity. Further, we
show that $g$ is a minimizer of~(\ref{eqn:wardrop}) if we allow each $f$ to be a
pseudoflow, rather than a flow. 

To see the first claim, note that for a fixed edge $e'$ since $\beta_{e,e'}$ is the
difference of two (scaled) flows of the same value, it is a circulation. Then $g$ is the
sum of a flow and a set of circulations, and is hence a pseudoflow.  

To show that $g$ equalizes the delay on every $s$-$t$ path with tolls $\tau$, for any
$s$-$t$ path $p$, 

\begin{align}
\sum_{e \in p} l_e(g) - l_e(f(\tau^{1})) & = \sum_{e \in p} a_e \left( g_e - f_e(\tau^{1}) \right) = \sum_{e \in p} a_e \sum_{e'} \beta_{e,e'} \tau_{e'}  \nonumber \\
	& = \sum_{e'} \frac{\tau_{e'}}{\kappa(f_{\min}/2)} \sum_{e \in p} a_e \left( f_e(\tau^{(1)} + \alpha^{e'}) - f_e(\tau^{(1)}) \right) \label{eqn:gpathequal}
\end{align}

\noindent Since $f(\tau^{(1)} + \alpha^{e'})$ and $f(\tau^{(1)})$ are equilibrium flows with tolls $\tau^{(1)} + \alpha^{e'}$
and $\tau^{(1)}$ respectively, and both are strictly positive on every edge, it follows
from~(\ref{eqn:gpathequal}) that 

\begin{align*}
\sum_{e \in p} l_e(g) - l_e(f(\tau^{(1)})) & = \sum_{e'} \frac{\tau_{e'}}{\kappa(f_{\min}/2)} \left( D(f(\tau^{(1)}+\alpha^{e'})) - D(f(\tau^{(1)}))- \sum_{e \in p} \alpha_e^{e'} \right) \nonumber
\end{align*}

\noindent and since $\alpha_e^{e'} = 0$ for $e \neq e'$,

\begin{align*}
\sum_{e \in p} l_e(g) - l_e(f(\tau^{(1)})) & = \sum_{e'} \frac{\tau_{e'}}{\kappa(f_{\min}/2)} \left( D(f(\tau^{(1)}+\alpha^{e'})) - D(f(\tau^{(1)})) \right) - \sum_{e \in p} \tau_e \, .\nonumber
\end{align*}

\noindent Thus for any path $p$, 

\begin{align*}
\sum_{e \in p} l_e(g) + \tau_e & = \sum_{e \in p} l_e(f(\tau^{(1)}))  + \sum_{e'} \frac{\tau_{e'}}{\kappa(f_{\min}/2)} \left( D(f(\tau^{(1)}+\alpha^{e'})) - D(f(\tau^{(1)})) \right) \, .
\end{align*}

\noindent Further, for any path $p$, $\sum_{e \in p} l_e(f(\tau^{(1)})) = D(f(\tau^{(1)} - \sum_{e \in p} \tau^{(1)}_e$. Hence for any path $p$, $\sum_{e \in p} l_e(g) + \tau_e + \tau^{(1)}_e$ is equal. It follows immediately that if the second condition in the lemma is true, i.e., if $g >
0$, then $g$ must be an equilibrium flow with tolls $\tau^{(1)} + \tau$, and since the equilibrium is
unique, $f(\tau + \tau^{(1)}) = g$. This completes the proof of the second statement. 

For the first statement, for $0 \le \lambda \le 1$ define $h(\lambda) = f(\tau^{(1)} + \tau) +
\lambda(g -f(\tau^{(1)} + \tau))$. Since on any path $p$ as shown earlier $\sum_{e \in p} l_e(g) + \tau_e + \tau_e^{(1)}$ is equal, and $f(\tau^{(1)} + \tau) > 0$ by assumption,
this is also true for $h(\lambda)$. Further since $f(\tau^{(1)} + \tau) > 0$, there exists $\lambda >
0$ so that $h(\lambda) > 0$. Then $h(\lambda)$ must also be an equilibrium flow with tolls
$\tau$. By the uniqueness of equilibria, this is only possible if $f(\tau + \tau^{(1)}) = g$. 
\end{proof}

{\small \vspace{8pt} \hrule 
\begin{labellist}[L]
\item Use the earlier algorithm to get tolls $\tau^{(1)}$ so that $f(\tau^{(1)}) \ge d/2^m$.
\item Obtain the coefficients $(\beta_{e,e'})_{e,e'}$ as in Lemma~\ref{lem:lineartolls}
\item Solve the linear equations $\beta \tau^{(2)} = f^* - f(\tau^{(1)})$ for tolls $\tau^{(2)}$. Then $f(\tau^{(2)} + \tau^{(1)}) = f^*$.
\end{labellist}
\hrule }

\begin{proofof}{Theorem~\ref{thm:linearmain}.} We will show that the algorithm is correct, and requires $O(m^2 \log (3m l_{\max}(d))$ queries. The correctness of the first step follows from Lemma~\ref{lem:fullsupport}. To use Lemma~\ref{lem:lineartolls}, since $f_{\min} \ge d/{2^m}$, to obtain the coefficients $(\beta_{e,e'})_{e,e'}$, we require an additional $m$ queries, each of which applies an additional toll (relative to $\tau^{(1)}$) of $\kappa(d/2^{m+1})$ on individual edges.

Let $\tau^*$ be tolls such that $f(\tau^*) = f^*$. By the
first part of Lemma~\ref{lem:lineartolls}, then $f^* = f(\tau^{(1)}) + \beta (\tau^* - \tau^{(1)})$. Now $\tau^{(2)}$ is a solution to the system of linear equalities $\beta \tau^{(2)} = f^* - f(\tau^{(1)})$; since
$\tau^* - \tau^{(1)}$ satisfies this, we know a solution exists. Further, by the second part of the
Lemma since $f(\tau^{(1)}) + \beta \tau^{(2)} = f^* > 0$, in fact $f(\tau^{(1)} + \tau^{(2)}) = f^*$. Hence $\tau^{(1)} + \tau^{(2)}$ are the
tolls required to obtain the target flow. 
\end{proofof}


\section{Inducing target flows via Stackelberg routing on series-parallel graphs} 
\label{stackelberg} 
Recall that here we have a single-commodity routing game $\vGm^*=(G,l^*,(s,t,d))$. 
We are given a parameter $\al\in[0,1]$ and a target flow $f^*$, and we seek an $s$-$t$
flow $g$ of value of at most $\al d$ such that $g+f(l^*,g)=f^*$, if one exists.
We abbreviate $f(l^*,g)$ to $f(g)$.
We consider the setting where $G$ is a directed sepa graph with terminals $s$ and $t$, and 
devise an efficient algorithm 
that computes a Stackelberg routing inducing $f^*$ using at most $m$ queries to an oracle
that returns the equilibrium flow. 
The flow $g$ we compute is in fact of minimum value among  
all Stackelberg flows that induce $f^*$. (So either $g$ is the desired Stackelberg flow,
or none exists if $\val{g}>\al d$.)
Our algorithm works for arbitrary increasing delay functions provided, as in
Section~\ref{tolls-sepa}, we have an oracle that returns the correct sign of
$((f(g)+g\bigr)_e-f^*)_e$ given a Stackelberg routing $g$. In particular, the algorithm
works for increasing linear latencies.

As before, we fix a decomposition tree for $G$, and a subgraph
refers to a subgraph corresponding to a node of this tree. 
For a flow $f$ and subgraph $H$, let $f_H$ denote $(f_e)_{e\in E(H)}$.
We again leverage the concept of a good pair of subgraphs, which becomes much simpler to
state in the single-commodity setting.

\begin{definition}[specialization of Definition~\ref{goodgraphs}]
Given $s$-$t$ flows $f$, $\tf$, we call a pair of subgraphs $H_1$, $H_2$ 
{\em $(f,\tf)$-good} if: 
\begin{enumerate}[(i)]
\item the parallel-join of $H_1$, $H_2$ is a subgraph; 
\item $f_e\geq\tf_e$ for all $e\in E(H_1)$ and $f_e\leq\tf_e$ for all $e\in E(H_2)$; and 
\item $\val{f_{H_1}}>\val{\tf_{H_1}}$ and $\val{f_{H_2}}<\val{\tf_{H_2}}$.
\end{enumerate}
\label{defn:goodsubgraphs}
\end{definition}

\begin{lemma}
Let $g$ be any Stackelberg routing. If $f(g) + g \neq f^*$, there exists an
$(f(g)+g,f^*)$-good pair of subgraphs. 
\label{lem:gooduseful}
\end{lemma}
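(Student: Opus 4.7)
The plan is to proceed by induction on the decomposition tree of $G$, applied to the $s$-$t$ flows $f := f(g)+g$ and $\tilde f := f^*$, both of value $d$. Since the induction will need to recurse into subgraphs on which the two flows have different totals, I will prove a slightly strengthened statement: for any $s$-$t$ flows $f,\tilde f$ on a sepa graph $G$ with $|f|\geq|\tilde f|$, if there exists an edge $e$ with $f_e<\tilde f_e$, then an $(f,\tilde f)$-good pair of subgraphs exists in $G$. Lemma~\ref{lem:gooduseful} follows from this strengthening: when $|f|=|\tilde f|=d$ and $f\neq\tilde f$, a standard $s$-$t$ cut argument produces an edge with $f_e<\tilde f_e$ strictly, so the strengthened claim applies.

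The base case (single edge) is vacuous since $|f|\geq|\tilde f|$ forces $f_e\geq\tilde f_e$. For the series case $G=A\oplus_s B$, flow conservation gives $|f_A|=|f_B|=|f|\geq|\tilde f|=|\tilde f_A|=|\tilde f_B|$, and I apply the inductive hypothesis to whichever of $A,B$ contains the strictly negative edge. For the parallel case $G=A\oplus_p B$, I assume without loss of generality that the strictly negative edge lies in $A$. If $|f_A|\geq|\tilde f_A|$, I apply the IH directly to $A$. Otherwise $|f_A|<|\tilde f_A|$, which forces $|f_B|>|\tilde f_B|$, and I split on whether $A$ has a positive edge (one with $f_e>\tilde f_e$). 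If $A$ has no positive edge and $B$ has no strictly negative edge, then $(B,A)$ is already a good pair; if $A$ has no positive edge but $B$ has a strictly negative edge, then $|f_B|\geq|\tilde f_B|$ lets me apply the IH to $B$.

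The main obstacle is the remaining subcase: $|f_A|<|\tilde f_A|$ and $A$ contains both a positive and a strictly negative edge. I pick the minimal subgraph $L$ in $A$'s decomposition tree that contains both types of edges. By minimality, neither child of $L$ contains both types, so if $L=L_1\star L_2$ then without loss of generality $L_1$ has only non-negative edges and holds the positive edge, while $L_2$ has only non-positive edges and holds the strictly negative edge. The key technical ingredient is a flow-propagation lemma: if $h,\tilde h$ are $s_K$-$t_K$ flows on a sepa graph $K$ with $h\geq\tilde h$ edgewise and $|h_K|=|\tilde h_K|$, then edgewise equality on the $s_K$-cut (forced by the value equality) propagates via flow conservation to give $h_e=\tilde h_e$ on every edge of $K$. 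This rules out $L$ being a series join, because $|f_{L_1}|=|f_{L_2}|$ and $|\tilde f_{L_1}|=|\tilde f_{L_2}|$ would combine with $|f_{L_1}|\geq|\tilde f_{L_1}|$ and $|f_{L_2}|\leq|\tilde f_{L_2}|$ to force $|f_{L_1}|=|\tilde f_{L_1}|$, and then the propagation lemma would make $L_1$ all-neutral, contradicting its positive edge. Hence $L=L_1\oplus_p L_2$, and two more applications of the propagation lemma (contrapositively) give the strict inequalities $|f_{L_1}|>|\tilde f_{L_1}|$ and $|f_{L_2}|<|\tilde f_{L_2}|$, so $(L_1,L_2)$ is the desired good pair.
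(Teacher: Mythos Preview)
Your proof is correct. Both you and the paper proceed by induction on the sepa decomposition, but the paper packages the inductive hypothesis as a symmetric trichotomy (Claim~\ref{clm:gooduseful}): either a good pair exists, or the sign of $|f|-|\tilde f|$ dictates the edgewise sign of $f-\tilde f$ throughout $G$. With this two-sided IH the parallel case needs almost no case analysis: if neither child contains a good pair, the IH immediately yields edgewise inequalities on both children, and these either make $(G_1,G_2)$ or $(G_2,G_1)$ a good pair, or agree in sign and establish the trichotomy for $G$. Your one-sided strengthening is the contrapositive of two of the three cases; it suffices for the lemma but forces a longer parallel case. In particular, your subcase with $|f_A|<|\tilde f_A|$ and $A$ containing both a positive and a strictly negative edge can be shortened: apply your own strengthening to $(\tilde f_A,f_A)$ on $A$ (you have $|\tilde f_A|>|f_A|$ and an edge with $\tilde f_e<f_e$), then swap $H_1$ and $H_2$ in the resulting $(\tilde f,f)$-good pair. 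Your minimal-subgraph-with-mixed-signs argument is valid, but it is the device the paper reserves for the stronger \emph{discriminating} pair of Section~\ref{tolls-sepa}, where strict edgewise inequalities on $H_1$ are required and the trick is genuinely needed.
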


Lemma~\ref{lem:gooduseful} follows from a more general result proved in
Lemma~\ref{helper:sepagood} for multicommodity flows. The proof in the single-commodity
setting becomes much simpler, and follows immediately from Claim~\ref{clm:gooduseful} since $\val{f(g) + g} = \val{f^*}$.

\begin{clm}
For any two $s$-$t$ flows $f$, $\tf$ in a sepa graph $G$, either there is an $(f,\tf)$-good pair of subgraphs,
or one of the following holds:
\begin{enumerate}[(i)]
\item If $\val{f}=\val{\tf}$ then $f=\tf$.
\item If $\val{f}>\val{\tf}$ then $f\geq\tf$.
\item If $\val{f}<\val{\tf}$ then $f\leq\tf$.
\end{enumerate}
\label{clm:gooduseful}
\end{clm}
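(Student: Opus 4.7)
The plan is to induct on the structure of the series-parallel decomposition tree of $G$. The base case is when $G$ is a single edge: any two $s$-$t$ flows are scalars of total value $\val{f}$ and $\val{\tf}$ respectively, so the three dichotomies in (i)--(iii) hold trivially and there can be no parallel-join, hence no good pair is needed.

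For the inductive step, suppose $G$ is the series-join of sepa subgraphs $H_1$ and $H_2$. Then by flow conservation, $\val{f_{H_j}}=\val{f}$ and $\val{\tf_{H_j}}=\val{\tf}$ for $j=1,2$. Any $(f_{H_j},\tf_{H_j})$-good pair inside $H_j$ is immediately an $(f,\tf)$-good pair in $G$, so we may assume none exists. Applying the inductive hypothesis to $f_{H_j},\tf_{H_j}$ in $H_j$ for $j=1,2$ independently (all three cases have matching ordering on $H_1$ and $H_2$ since $\val{f_{H_j}},\val{\tf_{H_j}}$ mirror $\val{f},\val{\tf}$) gives the desired edgewise comparison on all of $E(G)$.

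Now suppose $G$ is the parallel-join of $H_1, H_2$, so $\val{f}=\val{f_{H_1}}+\val{f_{H_2}}$ and similarly for $\tf$. Compare $\val{f_{H_1}}$ with $\val{\tf_{H_1}}$. If they differ, say WLOG $\val{f_{H_1}}>\val{\tf_{H_1}}$, then necessarily $\val{f_{H_2}}<\val{\tf_{H_2}}$ (this holds irrespective of whether $\val{f}$ equals, exceeds, or falls short of $\val{\tf}$). By the inductive hypothesis applied to $H_1$, either there is an $(f_{H_1},\tf_{H_1})$-good pair in $H_1$ (which is also good in $G$ and we are done), or $f_{H_1}\geq\tf_{H_1}$ edgewise. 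Symmetrically on $H_2$, we either obtain a good pair or $f_{H_2}\leq\tf_{H_2}$ edgewise. Combining the two edgewise inequalities with the strict value inequalities, the pair $H_1,H_2$ itself satisfies the three conditions of Definition~\ref{defn:goodsubgraphs} and is therefore $(f,\tf)$-good, so the claim holds.

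The remaining parallel sub-case is $\val{f_{H_1}}=\val{\tf_{H_1}}$ and $\val{f_{H_2}}=\val{\tf_{H_2}}$, which forces $\val{f}=\val{\tf}$, so we only need to verify (i). As in the series case, any good pair inside $H_1$ or $H_2$ is good in $G$, so we may assume none exists and invoke the inductive hypothesis on each $H_j$ with equal total values, yielding $f_{H_j}=\tf_{H_j}$ for $j=1,2$, hence $f=\tf$. The main conceptual step---and the only one that required any thought---is recognizing that in the unbalanced parallel sub-case, the flow-value gap across $H_1$ versus $H_2$ automatically promotes the inductively obtained edgewise inequalities into a good pair at the top level; everything else is bookkeeping along the decomposition tree.
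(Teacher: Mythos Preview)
Your induction structure and the series case are fine, and they match the paper's argument. The gap is in the parallel-join case. You write that if $\val{f_{H_1}}>\val{\tf_{H_1}}$ then ``necessarily $\val{f_{H_2}}<\val{\tf_{H_2}}$ (this holds irrespective of whether $\val{f}$ equals, exceeds, or falls short of $\val{\tf}$).'' That implication is false: it only follows when $\val{f}\leq\val{\tf}$. If, say, $\val{f}>\val{\tf}$, it is perfectly possible that $\val{f_{H_1}}>\val{\tf_{H_1}}$ and $\val{f_{H_2}}\geq\val{\tf_{H_2}}$ simultaneously, so no good pair is produced at the top level and your argument never establishes $f\geq\tf$ edgewise. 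Your ``remaining sub-case'' (both values equal on $H_1$ and on $H_2$) therefore does not exhaust what is left, and cases (ii) and (iii) of the claim are never handled in the parallel join.

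The fix is exactly what the paper does: in the parallel join, first dispose of the \emph{crossing} configurations $\val{f_{H_1}}>\val{\tf_{H_1}},\ \val{f_{H_2}}<\val{\tf_{H_2}}$ (and its mirror), where induction on each piece gives the edgewise inequalities making $(H_1,H_2)$ itself a good pair. Once neither crossing case holds, the signs of $\val{f_{H_j}}-\val{\tf_{H_j}}$ are the same for $j=1,2$ and hence agree with the sign of $\val{f}-\val{\tf}$; now apply the inductive hypothesis to each $H_j$ separately to get $f_{H_j}=\tf_{H_j}$, $f_{H_j}\geq\tf_{H_j}$, or $f_{H_j}\leq\tf_{H_j}$ as appropriate, and combine.
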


\begin{proof}
The proof is by induction on the size of the graph. For a single
edge, there is no good pair of subgraphs, but one of the three cases must hold. 
For the induction step, let $G$ be the join of subgraphs $G_1$ and $G_2$. 
Let $f_1=f_{G_1}$, $\tf_1=\tf_{G_1}$, and $f_2=f_{G_2}$, $\tf_2=\tf_{G_2}$. Clearly, 
$f_1,\tf_1$ are $s_{G_1}$-$t_{G_1}$ flows, and $f_2,\tf_2$ are $s_{G_2}$-$t_{G_2}$ flows.
If $G_1$ contains an $(f_{1},\tf_{1})$-good pair, or $G_2$ contains an $(f_2,\tf_2)$-good
pair, then the same pair is an $(f,\tf)$-good pair, and we are done. So assume otherwise. 

Suppose $G_1$ and $G_2$ are in series. 
Then, $\val{f_1}=\val{f}=\val{f_2}$, and $\val{\tf_1}=\val{\tf}=\val{\tf_2}$. So whichever  
case applies to $f$ and $\tf$, the same applies to $f_1, \tf_1$, and $f_2, \tf_2$. By the
induction hypothesis, we have the desired relationship between $f_1,\tf_1$ and
$f_2,\tf_2$, and hence between $f$ and $\tf$. So the statement holds for $G$. 
 
Suppose $G_1$ and $G_2$ are in parallel. 
If $\val{f_1}>\val{\tf_1}$ and $\val{f_2}<\val{\tf_2}$, then by the induction hypothesis, 
$f_1\geq\tf_1$, $f_2\leq\tf_2$, so $G_1$, $G_2$ is an $(f,\tf)$-good pair. Similarly, if
$\val{f_1}<\val{\tf_1}$ and $\val{f_2}>\val{\tf_2}$, then $G_2$, $G_1$ is an
$(f,\tf)$-good pair. So assume neither case holds.

Now if $\val{f}=\val{\tf}$, then (after eliminating the above cases)
$\val{f_1}=\val{\tf_1}$, $\val{f_2}=\val{\tf_2}$. Hence, by the induction hypothesis, we
have $f_1=\tf_1,\ f_2=\tf_2$, and so $f=\tf$.

If $\val{f}>\val{\tf}$, then it must be that $\val{f_1}\geq\val{\tf_1}$ and
$\val{f_2}\geq\val{\tf_2}$. Therefore, $f_1\geq\tf_1$, $f_2\geq\tf_2$, and so $f\geq\tf$.   

Finally, if $\val{f}<\val{\tf}$, then it must be that $\val{f_1}\leq\val{\tf_1}$,
$\val{f_2}\leq\val{\tf_2}$. Hence, $f_1\leq\tf_1$, $f_2\leq\tf_2$, and so $f\leq\tf$.
This completes the induction step, and hence, the proof. 
\end{proof}

Our algorithm is now quite simple to describe. 
We keep track of the set $\bar{S}$, initialized to $\es$, of 
edges not on any shortest $s$-$t$ path under the edge costs $(l^*_e(f^*_e))_e$.
By Lemma~\ref{stack-basic}, $\bar{S}$ must be saturated by any Stackelberg routing that
induces $f^*$. 
We repeatedly do the following.

{\small \vspace{8pt} \hrule 
\begin{list}{S\arabic{enumi}.}{\usecounter{enumi} \topsep=0ex \itemsep=0ex
    \addtolength{\leftmargin}{-0.5ex}} 
\item Find the flow $g$ of minimum value that saturates every edge in $\bar{S}$ and
satisfies $g_e \le f_e^*$ for all $e$. 
\item Query the oracle with $g$ as the Stackelberg flow. 
If $f^* = f(g) + g$, exit and return $g$. 
Otherwise, 
find an $(f(g)+g,f^*)$-good pair of subgraphs $H_1$, $H_2$. Add every edge in $H_2$ to $\bar{S}$ (and
repeat the process). 
\end{list}
\hrule}


\begin{theorem}
The above algorithm computes a Stackelberg flow $g$ of minimum value that induces $f^*$
in at most $m$ queries. 
\end{theorem}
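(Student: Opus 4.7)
The plan is to split the proof into termination and minimum-value correctness, both of which hinge on a single invariant, $\bar S\cap\mcs(l^*,f^*)=\es$, maintained throughout the loop.

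\textbf{Termination in $\le m$ queries.} First I would argue that each pass through the loop either exits (when $f(g)+g=f^*$) or strictly enlarges $\bar S$. The key observation is that when we find an $(f(g)+g,f^*)$-good pair $H_1,H_2$, we cannot have $E(H_2)\sse\bar S$: if we did, the LP of step S1 would force $g_e=f^*_e$ on $E(H_2)$, whence $\bigl(f(g)+g\bigr)_e\ge f^*_e$ on $E(H_2)$, which combined with $\bigl(f(g)+g\bigr)_e\le f^*_e$ from Definition~\ref{defn:goodsubgraphs}(ii) gives edge-wise equality on $H_2$ and hence $\val{(f(g)+g)_{H_2}}=\val{f^*_{H_2}}$, contradicting Definition~\ref{defn:goodsubgraphs}(iii). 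So each non-terminating iteration appends at least one new edge to $\bar S$; after at most $m$ iterations $\bar S=E$, the LP then returns $g=f^*$, $f(g)=0$, and the termination test succeeds.

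\textbf{Minimum value, given the invariant.} If $\bar S\cap\mcs(l^*,f^*)=\es$ holds throughout, then by Lemma~\ref{stack-basic} every Stackelberg routing $g^*$ that induces $f^*$ satisfies $g^*\le f^*$ pointwise and $g^*_e=f^*_e$ for all $e\in\bar S$, so $g^*$ is feasible for the min-value LP of step S1. Therefore $\val{g}\le\val{g^*}$ at every iteration, and the $g$ eventually returned is a Stackelberg routing of minimum value that induces $f^*$.

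\textbf{Establishing the invariant (the hard part).} I would proceed by induction on iterations. The base case $\bar S=\es$ is trivial, and the inductive step is where the delicate work lies: it requires extracting strict inequality under $f^*$-delays from the good-pair structure. Here is the plan. The strict inequality $\val{(f(g)+g)_{H_1}}>\val{f^*_{H_1}}$, together with the edge-wise bound $\bigl(f(g)+g\bigr)_e\ge f^*_e$ on $H_1$, forces some $e_*\in E(H_1)$ to satisfy $\bigl(f(g)+g\bigr)_{e_*}>f^*_{e_*}$ strictly. Since the LP enforces $g_{e_*}\le f^*_{e_*}$, this yields $f(g)_{e_*}>0$; an acyclic path-decomposition of the Wardrop flow $f(g)$ then exhibits a full $s$-$t$ path $R$ with $f(g)_R>0$ passing through $e_*$, whose segment in the parallel-join of $H_1,H_2$ is a path $P^*\in\mcp(H_1)$ containing $e_*$. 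By the Wardrop condition $R$ is a shortest $s$-$t$ path under delays $l^*_e\bigl((f(g)+g)_e\bigr)$, and swapping $P^*$ for any $Q\in\mcp(H_2)$ in $R$ gives $l^*_{P^*}(f(g)+g)\le l^*_Q(f(g)+g)$. Combining this with strict monotonicity of $l^*_{e_*}$ at $e_*\in P^*$ and $\bigl(f(g)+g\bigr)_e\le f^*_e$ on $E(H_2)\supseteq Q$ yields the chain
\[
l^*_{P^*}(f^*)<l^*_{P^*}(f(g)+g)\le l^*_Q(f(g)+g)\le l^*_Q(f^*).
\]
Hence every full $s$-$t$ path through an edge of $H_2$, whose parallel-join segment is some such $Q$, is strictly shortened under $f^*$-delays by replacing $Q$ with $P^*$, so no edge of $H_2$ lies in $\mcs(l^*,f^*)$ and the invariant is preserved.
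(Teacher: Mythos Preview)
Your proof is correct and follows essentially the same approach as the paper: establish termination by showing $\bar S$ strictly grows, and establish minimality by maintaining the invariant $\bar S\cap\mcs(l^*,f^*)=\es$ via the chain $l^*_{P^*}(f^*)<l^*_{P^*}(h)\le l^*_Q(h)\le l^*_Q(f^*)$. The only real difference is how you pin down the comparison path: the paper takes an $s_H$-$t_H$ path $P$ in $H_1$ with $(h-f^*)_P>0$ on every edge (which exists since $h\ge f^*$ on $H_1$ and $\val{h_{H_1}}>\val{f^*_{H_1}}$) and then asserts $P$ lies on a shortest $s$-$t$ path, whereas you locate a single strict edge $e_*$, path-decompose the Wardrop flow $f(g)$ through it, and read off $P^*$ as the $H_1$-segment of the resulting shortest $s$-$t$ path $R$. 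Your route is arguably cleaner, since it makes the ``$P^*$ is a shortest-path segment'' step immediate rather than relying implicitly on sepa structure.
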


\begin{proof}
In every iteration, $|\bar{S}|$ increases by at least 1: since 
$\val{f_{H_2}(g)+g_{H_2}}<\val{f^*_{H_2}}$ and $g$ saturates every edge in
$\bar{S}$, we know that at least one edge in $H_2$ is not in the current set $\bar{S}$. 
When $\bar{S} = E$, we have $g = f^*$. So the algorithm terminates in at most $m$
iterations with some flow $g$ that induces $f^*$. To complete the proof, we only need to show
that any edge added to $\bar{S}$ is indeed a non-shortest-path edge. 
Let $h=f(g)+g$. Let $s'=s_{H_1}=s_{H_2}$, $t'=t_{H_1}=t_{H_2}$. 
Since $\val{h_{H_1}}>\val{f^*_{H_1}}$, there is some $s'$-$t'$ path $P$ in $H_1$ such that
$(h-f^*)_P>0$. So $P$ belongs to a shortest $s$-$t$ path under edge costs
$(l^*_e(h_e))_e$. So for every $s'$-$t'$ path $Q$ in $H_2$, we have 
$l^*_P(f^*)<l^*_P(h)\leq l^*_Q(h)\leq l^*_Q(f^*)$. So every edge of $H_2$ is a
non-shortest-path edge under edge costs $(l^*_e(f^*_e))_e$.
\end{proof}

\section{Query- and computational- complexity lower bounds} \label{lbounds}

\subsection{A linear lower bound for query complexity with tolls} \label{lb-tolls}
We show a lower bound of $\Omega(m)$ on the number of queries required to
obtain tolls that give the target flow.  

\begin{theorem}
Any deterministic algorithm that computes tolls required to enforce a target flow requires
$\Omega(m)$ queries, even for a single commodity instance on parallel links with linear
delay functions. 
\label{thm:tolllb}
\end{theorem}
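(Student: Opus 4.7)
We will establish the lower bound via an adversary argument. Consider the single-commodity instance with $m$ parallel $s$-$t$ edges, demand $d=2$, target flow $f^{*}_{e} = 2/m$ for every edge $e$, and linear latencies $l^{*}_{e}(x) = x + b_{e}$ parameterized by a hidden vector $b=(b_{e})_{e}\in[0,U]^{m}$ with $U$ polynomially large in $m$. Any tolls $\tau^{*}$ that induce $f^{*}$ must satisfy $\tau^{*}_{e} = C^{*} - 2/m - b_{e}$ for a common constant $C^{*}\ge 2/m + \max_{e} b_{e}$. In particular, correct tolls determine $b$ up to a single additive constant, which is equivalent to specifying all $m-1$ independent pairwise differences $b_{e}-b_{e'}$. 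The plan is to exhibit an adversary that reveals at most one such difference per query, forcing $\Omega(m)$ queries in total.

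The adversary maintains a set $\mathcal{B}$ of latency vectors consistent with all responses so far. For every query $\tau$, it returns an equilibrium flow in which exactly two edges $e_{1}, e_{2}$ (chosen adaptively) are used, with $f_{e_{1}} = f_{e_{2}} = 1$ and $f_{e}=0$ otherwise. This response is consistent with any $b \in \mathcal{B}$ satisfying $b_{e_{1}}+\tau_{e_{1}} = b_{e_{2}}+\tau_{e_{2}}$ and $b_{e}+\tau_{e}\ge 1 + b_{e_{1}}+\tau_{e_{1}}$ for $e\notin\{e_{1},e_{2}\}$; the adversary commits small values to $b_{e_{1}}, b_{e_{2}}$ and merely records the latter as one-sided lower bounds. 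The only new linear information released is the relation $b_{e_{1}} - b_{e_{2}} = \tau_{e_{2}} - \tau_{e_{1}}$. Collect these relations into a \emph{relation graph} $H$ on vertex set $[m]$, adding one edge per query; after $q$ queries $H$ has at most $q$ edges. Whenever $H$ is disconnected, the consistent set still contains two vectors differing by a non-constant vector (obtained by shifting the ``offset'' of one component of $H$ relative to another) that require distinct tolls. Hence no single toll vector output by the algorithm can be correct for all consistent instances. Since a graph on $m$ vertices needs at least $m-1$ edges to be connected, the algorithm must make at least $m-1$ queries.

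The main technical obstacle is to show that the adversary's ``$|S|=2$'' strategy is always realizable and that each query genuinely contributes only one new edge to $H$. This breaks into two concerns. First, for any query $\tau$ the adversary must be able to pick a pair $(e_{1},e_{2})$ and values $b_{e_{1}},b_{e_{2}}$ meeting both the equilibrium condition and the lower bounds accumulated from prior queries: this is handled by taking $U$ polynomially large in $m$ and keeping committed $b$-values near their lower bounds so that subsequent lower bounds grow slowly. Second, the one-sided inequality constraints on the unused edges must not, in combination, pin down a difference $b_{e}-b_{e_{1}}$; the adversary prevents this by varying $e_{1}$ across queries so that no single pair's difference is simultaneously upper- and lower-bounded to the same value. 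Preferring pairs involving fresh edges (not yet in $H$) whenever possible keeps $H$ from merging components prematurely, and when all edges are already incident to $H$, any forced pair still adds only one edge to $H$ and reduces the component count by at most one. A careful bookkeeping then yields the desired $\Omega(m)$ lower bound.
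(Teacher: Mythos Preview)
Your adversary strategy has a concrete failure mode that the bookkeeping paragraph does not fix: the ``always return a two-edge support'' move is \emph{not} always consistent with the accumulated constraints, and you cannot simply ``prefer fresh edges''.  Take $m=3$, $d=2$, latencies $x+b_e$.  On the first query $\tau^{(1)}=(0,0,0)$ the adversary answers with support $\{1,2\}$, so $\mathcal{B}=\{b: b_1=b_2,\ b_3\ge b_1+1\}$.  Now the algorithm asks $\tau^{(2)}=(0,10,0)$.  No two-edge support is consistent: $\{1,2\}$ forces $b_1=b_2+10$ (contradiction), $\{1,3\}$ forces $b_1=b_3$ (contradicts $b_3\ge b_1+1$), and $\{2,3\}$ forces $b_3=b_2+10$ together with $b_1\ge b_2+11$ (contradicts $b_1=b_2$).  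So your stated strategy is infeasible already at the second query.  You can patch this particular example with a one-edge support, but then your relation-graph accounting changes (zero new edges), and in the other direction a clever algorithm may force a large support once several components of $H$ carry committed values, adding several edges to $H$ at once.  The proposal gestures at both issues but does not resolve either; as written, the argument that $H$ gains at most one edge per query is not established.

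A further gap is the treatment of the inequality constraints.  You assert they are ``one-sided lower bounds'' that never pin down a cross-component difference, but this needs proof: across queries the same pair $(e,e')$ may be bounded from opposite sides, and you must show the adversary can keep strictly positive slack in at least one cross-component direction throughout.  This is exactly the place where a clean instance design pays off.  The paper sidesteps all of the above by working over a \emph{finite} family of latency vectors rather than a continuum: it parameterizes by a permutation $\pi^*$ via $l_{e_i}(x)=x/m+2(\pi^*(i)-1)$ with $d=m$, so the constant terms are spaced by $2$ while the maximum contribution of flow to any edge cost is $1$.  The oracle, on the $j$th query, assigns the smallest-toll unassigned edge to position $j$; one then checks that for \emph{every} completion of the partial permutation any still-unassigned edge carries zero equilibrium flow, so the response is independent of the completion and reveals exactly one value of $\pi^*$.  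After $m-1$ queries some edge still has $f_e=0$ in every response, hence $f^*$ (which is positive everywhere) has not been enforced.  If you want to salvage your approach, the cleanest fix is analogous: discretize---e.g.\ take $b_e\in\{0,2,4,\dots,2(m-1)\}$ a permutation---so that a single minimum-toll edge can always absorb all the demand regardless of the unknown coordinates, and argue that each query pins down at most one coordinate.
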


Our example for the lower bound consists of a single commodity on $m$ parallel links, with
the demand $d = m$ and the target flow $f_e^* = 1$ on each edge. In fact, our lower bound is actually for the problem of obtaining tolls $\tau$ with the right support: $f_e(\tau) > 0$ iff $f_e^* > 0$.

For our lower bound example, our delay functions are defined
by a permutation $\pi^*:[m] \rightarrow [m]$. The delay function on the parallel edge $e_i$ is be given
by $(x/m) + 2(\pi^*(i)-1)$. Thus, we use the notation $f(\pi, \tau)$ for the equilibrium flow, where the permutation $\pi$ identifies the delay functions. We show that any algorithm that computes the correct tolls to
enforce $f^*$ must obtain the correct permutation $\pi^*$, and we design an oracle that
after $k$ queries has only revealed information about ${\pi^*}^{-1}(1), \cdots,
{\pi^*}^{-1}(k)$. Thus, in order to compute the correct tolls, any algorithm requires $m-1$
queries. 

Our oracle works as follows. Initially, let $A^0 = \emptyset$ be the set of assigned edges
in the partial permutation $\pi^*$. For the $j$th query $\tau^j = (\tau_e^j)_{e \in E}$, our oracle
returns the equilibrium flow described below. 

\paragraph{Oracle:} Pick an arbitrary edge $e$ that with minimum toll $\tau_e^j$, so that $e$ is not in $A^{j-1}$. Let $\pi^*(e) = j$ and $A^j = A^{j-1} \cup
\{e\}$. Let $\pi_{(j)}$ be a complete permutation that extends the partial permutation $\pi^*$, and return
$f(\pi_{(j)},\tau^j)$ as the equilibrium flow in response to tolls $\tau^j$. 

\begin{clm}
For $j \in [m]$, let $\sigma$ be any permutation that satisfies $\sigma(e) = \pi_{(j)}(e)$
for all edges $e \in A^j$. Then for any edge $e \not \in A^j$, $f_e(\sigma,\tau^j) = 0$ 
\label{clm:tolllb}
\end{clm}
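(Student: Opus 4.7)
The plan is to show that under any extension $\sigma$ of $\pi_{(j)}|_{A^j}$, the equilibrium delay $D$ of $f(\sigma,\tau^j)$ is strictly less than the zero-flow delay $l_e(0)+\tau^j_e$ on every edge $e\notin A^j$; by the Wardrop characterization this forces $f_e(\sigma,\tau^j)=0$, which is exactly the claim.

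First I would lower-bound the zero-flow delay outside $A^j$ using the oracle's selection rule. Because $e_j$ (the edge added in query $j$) minimizes $\tau^j_e$ over $e\notin A^{j-1}$, we have $\tau^j_e\ge\tau^j_{e_j}$ for every $e\notin A^{j-1}$, and in particular for every $e\notin A^j$. Any extension $\sigma$ must satisfy $\sigma(e)\ge j+1$ for $e\notin A^j$, so the delay intercept $2(\sigma(e)-1)\ge 2j$. Combining,
\[
l_e(0)+\tau^j_e \;=\; 2(\sigma(e)-1)+\tau^j_e \;\ge\; 2j+\tau^j_{e_j} \qquad\text{for every }e\notin A^j.
\]

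Next I would upper-bound $D$ using $e_j$ itself as the witness inside $A^j$. Since $e_j$ has intercept $2(j-1)$, the Wardrop characterization gives $D=f_{e_j}/m+2(j-1)+\tau^j_{e_j}\le 1+2(j-1)+\tau^j_{e_j}$ whenever $f_{e_j}>0$ (using $f_{e_j}\le d=m$), and gives the even stronger $D\le l_{e_j}(0)+\tau^j_{e_j}=2(j-1)+\tau^j_{e_j}$ when $f_{e_j}=0$. In either case $D\le 2j-1+\tau^j_{e_j}$. Combining with the earlier lower bound, for every $e\notin A^j$,
\[
l_e(0)+\tau^j_e \;\ge\; 2j+\tau^j_{e_j} \;>\; 2j-1+\tau^j_{e_j} \;\ge\; D,
\]
so the Wardrop condition rules out $f_e(\sigma,\tau^j)>0$.

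The main subtlety is identifying the correct witness in $A^j$: an arbitrary edge of $A^{j-1}$ would not work because its current toll $\tau^j_e$ is unconstrained by the oracle's history, and using a small-toll edge of $A^{j-1}$ need not help because its intercept could be much smaller than $2(j-1)$. The new edge $e_j$ is exactly right because its toll is bounded above by the oracle's selection rule, while its intercept $2(j-1)$ is exactly one step below the minimum intercept $2j$ available outside $A^j$; this gap of $2$ in intercept comfortably dominates the slack $1$ contributed by the bound $f_{e_j}\le m$, giving the strict separation. Crucially, nothing in the argument depends on how $\sigma$ distributes the values $j+1,\ldots,m$ among the edges outside $A^j$, which is exactly what makes the adversary's responses consistent across all such extensions.
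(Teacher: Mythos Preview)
Your proof is correct and follows essentially the same approach as the paper. Both arguments use the freshly-added edge $e_j$ as the witness inside $A^j$: its toll is at most that of any $e\notin A^j$ by the oracle's selection rule, its intercept $2(j-1)$ is strictly below the minimum intercept $2j$ outside $A^j$, and the gap of $2$ absorbs the maximum slope contribution $d/m=1$, yielding $l_e(0)+\tau^j_e>l_{e_j}(d)+\tau^j_{e_j}\ge D$ and hence $f_e(\sigma,\tau^j)=0$.
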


\begin{proof}
By description of the oracle and since $e \not \in A^j$, $e$ is not the unique edge with
minimum toll in $\tau_j$. That is, there exists an edge $h \in A^j$ with $\tau_h^j \le
\tau_e^j$. Since $h \in A^j$, $\sigma(h) \le j < \sigma(e)$. Further since $\tau_h^j \le
\tau_e^j$, by description of the delay functions, $l_e(0) > l_h(d)$, and hence edge $e$
cannot have flow at equilibrium. 
\end{proof}

We now show that the equilibrium flows returned by our oracle are consistent.

\begin{lemma}
There exists a permutation $\sigma$ so that for every $j \in [m]$, $f(\pi_{(j)},\tau^j) = f(\sigma,\tau^j)$.
\label{lem:tolllb}
\end{lemma}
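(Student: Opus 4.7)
The plan is to construct a single permutation $\sigma$ by extending the partial permutation $\pi^*$ that the oracle has built up over all queries made by the algorithm. Let $T$ denote the total number of queries; if $T<m$, extend the partial map on $A^T$ to a full permutation $\sigma$ on $[m]$ arbitrarily, and if $T\geq m$, take $\sigma=\pi^*$ directly (since $A^m=E$). By construction of the oracle, the sets $A^j$ are nested with $A^{j-1}\subseteq A^j$, and each $\pi_{(j)}$ is a completion of the partial $\pi^*$ defined on $A^j$. Hence for every $j\leq T$ and every $e\in A^j$ we have $\sigma(e)=\pi^*(e)=\pi_{(j)}(e)$.

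Fix any $j\in[T]$. I would apply Claim~\ref{clm:tolllb} twice. First, using $\sigma$ as the ``$\sigma$'' of the claim---which is legal because $\sigma$ agrees with $\pi_{(j)}$ on $A^j$---we obtain $f_e(\sigma,\tau^j)=0$ for all $e\notin A^j$. Second, taking the ``$\sigma$'' of the claim to be $\pi_{(j)}$ itself, we obtain $f_e(\pi_{(j)},\tau^j)=0$ for all $e\notin A^j$. Thus both equilibria have their support in $A^j$, and each routes the full demand $d$ across the edges of $A^j$.

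Now on $A^j$ the delay functions coincide: $l_e(x)=x/m+2(\sigma(e)-1)=x/m+2(\pi_{(j)}(e)-1)$. Hence the restrictions of $f(\sigma,\tau^j)$ and $f(\pi_{(j)},\tau^j)$ to $A^j$ are both Wardrop equilibria of the same sub-instance consisting of the parallel links $A^j$ with demand $d$ and edge costs $l_e(x)+\tau^j_e$. (Verifying that each restriction is a Wardrop equilibrium of the sub-instance is immediate on parallel links: the equilibrium conditions in the full game for edges carrying positive flow, which all lie in $A^j$, are exactly the equilibrium conditions in the sub-game.) Since the latencies are strictly increasing, the equilibrium of the sub-instance is unique by the standard variational-inequality characterization in Section~\ref{prelim}, so the two restrictions agree. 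Combined with the vanishing outside $A^j$, this yields $f(\sigma,\tau^j)=f(\pi_{(j)},\tau^j)$, as required.

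The main obstacle is really just the bookkeeping involved in coordinating the growing family $\{\pi_{(j)}\}_j$ with a single permutation $\sigma$; the monotonicity $A^{j-1}\subseteq A^j$ of the oracle's commitments is what makes this coordination possible. Once that alignment is in place, Claim~\ref{clm:tolllb} carries almost all of the analytic weight, and uniqueness of the equilibrium finishes the argument.
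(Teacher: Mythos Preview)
Your proof is correct and follows essentially the same approach as the paper: take $\sigma$ to be any completion of the oracle's partial permutation $\pi^*$, use Claim~\ref{clm:tolllb} to show both equilibria vanish outside $A^j$, and then use that the delay functions agree on $A^j$ to conclude the equilibria coincide. You are simply more explicit than the paper in applying the claim separately to $\sigma$ and to $\pi_{(j)}$, and in spelling out the uniqueness-of-equilibrium step on the sub-instance.
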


\begin{proof}
Fix $j \le m$, and let $\sigma$ be a complete permutation that extends $\pi^*$. The image of every edge $e \in A^j$ is the same in $\pi^*$ and $\pi_{(j)}$. Thus the delay functions on both edges is the same. By Claim~\ref{clm:tolllb}, the equilibrium is zero for any edge not in $A^j$. Since edges in $A^j$ have the same delay function, the equilibrium flow must be the same for permutations $\sigma$ and $\pi_{(j)}$.
\end{proof}

\noindent \emph{Proof of Theorem~\ref{thm:tolllb}.} From Lemma~\ref{lem:tolllb}, for any
sequence of $m-1$ toll queries, the oracle returns a consistent sequence of equilibrium
responses. Further, from Claim~\ref{clm:tolllb}, since $|A^j| \le m-1$ for $j \le m-1$,
there is an edge $e$ with no flow in the equilibrium returned by the oracle. Hence, since
$f^*$ has positive flow on every edge, any deterministic algorithm requires at least $m$
queries to compute tolls that obtain $f^*$. \qed

\subsection{Lower bounds for determining equivalence with Stackelberg routing}
\label{sec:stackelberglb} \label{lb-stackelberg}

Given the ability to query a routing game and obtain the equilibrium flow, a natural
question is if we can in fact obtain the delay functions on the edges. It is obvious that
the exact delay functions cannot be obtained, even for a single edge. However, is it
possible to obtain delay functions that are equivalent, in the sense that any 
Stackelberg routing would yield almost the same equilibrium flow as in the routing game? 

\begin{definition}
Given a graph $G$ with demand $d$ between $s$ and $t$ and a Stackelberg demand fraction $\alpha$, two sets of delay functions on the edges $l^1$ and $l^2$ are $\epsilon$-equivalent if for
every Stackelberg routing $g$ with $|g| \le \alpha d$, $\|f(l^1,g) - f(l^2,g)\|_\infty \le
\epsilon$. 
\label{defn:equivalentdelays}
\end{definition}

We prove strong lower bounds for this problem, both for the query complexity and the computational complexity. In fact, for the query complexity, the lower bound instance is a graph of constant size. The size of the input is determined by the demand $d$, and we show
that although the size of the input is $O(\log d)$ any deterministic algorithm that
determines $\epsilon$-equivalence for a fixed $\epsilon$ must make $\Omega(\sqrt{d})$
queries. For computational hardness, we show that even if we are
explicitly given affine delay functions $l^1$ and $l^2$, determining $1/2$-equivalence is \nphard.  Our proof for computational hardness builds upon a reduction given by Roughgarden~\cite{Roughgarden06}.


\paragraph{Query complexity.} We are now given a graph $G$ with demand $d$ between $s$ and $t$, a Stackelberg demand fraction $\alpha$, and a set of delay functions $l^1$ on the edges of $G$. In addition, we are given query access to a second set of delay functions $l^2$. As before, each query consists of a Stackelberg routing $g$, and the response is the equilibrium flow $f(l^2,g)$. We show the following result.

\begin{figure}[h]
\centering
\psfrag{e1}[bc]{$x^2+x$}
\psfrag{e2}[tr]{$ax$}
\psfrag{e3}{$b$}
\psfrag{e4}[bl]{$ax$}
\psfrag{e5}[tl]{$x^2+x$}
\psfrag{s}{$s$}
\psfrag{t}{$t$}
\psfrag{u}{$u$}
\psfrag{v}{$v$}
\psfrag{d}{$d$}
\includegraphics[scale=0.25]{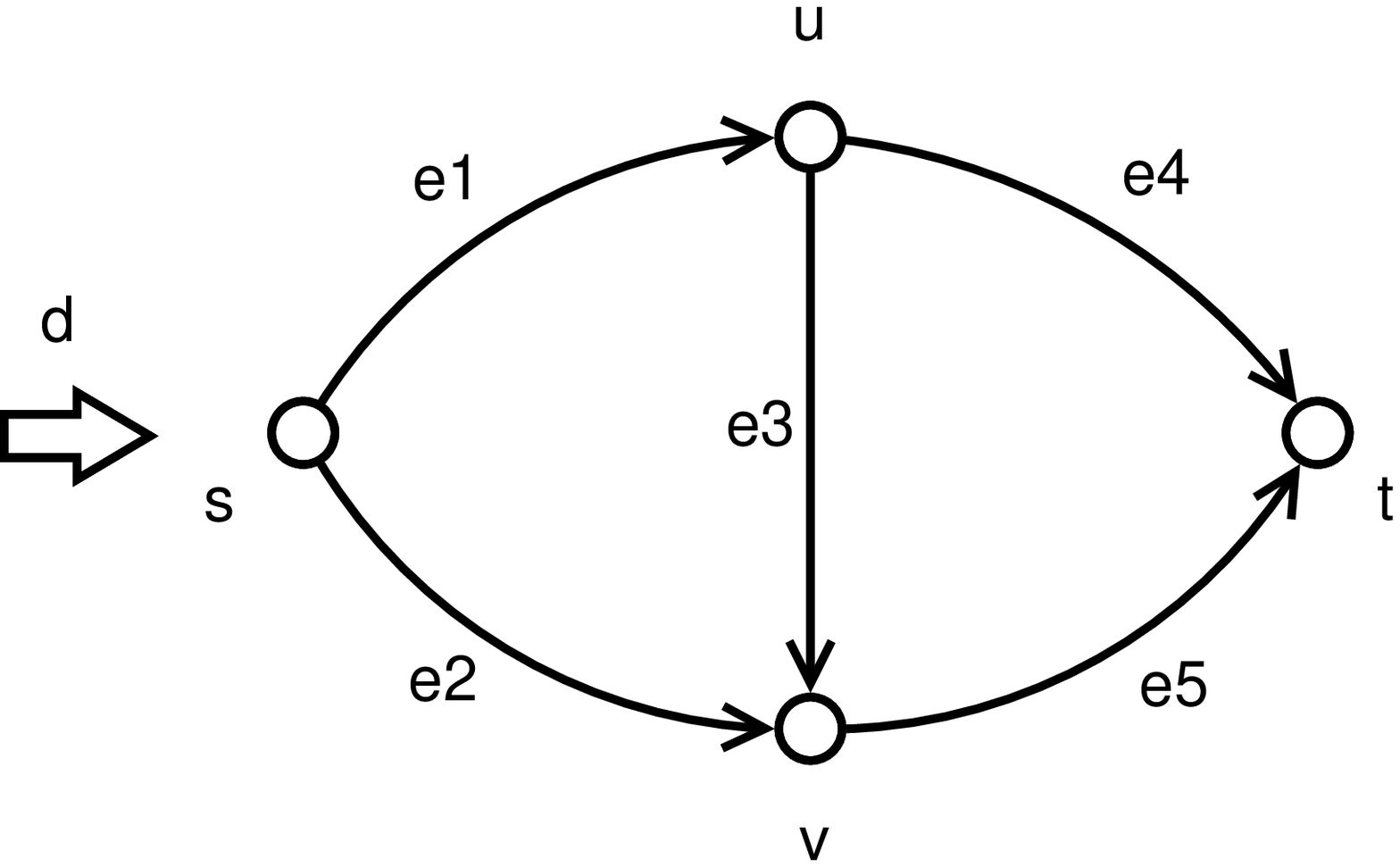}
\caption{Braess graph instance for proving hardness of equivalence determination.}
\label{fig:braessstackelberg2}
\end{figure}

\begin{theorem}
Any deterministic algorithm that determines $\epsilon$-equivalence for $\epsilon \le 1/16$ requires an exponential number of queries. 
\label{thm:stackelbergequivalence}
\end{theorem}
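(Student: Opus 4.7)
The plan is an adversary argument on the Braess graph of Figure~\ref{fig:braessstackelberg2}. I fix the reference latencies $l^1$ to be those shown, with constants $a$ and $b$ chosen so that the instance sits right at (or near) its Braess-paradox threshold --- meaning that a tiny change to the shortcut latency tips the Wardrop equilibrium between its ``all-through-$u\to v$'' regime and its ``split across the two outer paths'' regime. I then construct an exponentially large family $\{l^2_\beta\}_{\beta\in B}$ of candidate latencies, with $|B|=\Omega(\sqrt d)$ --- exponential in the input size $O(\log d)$ --- by planting a narrow convex bump at flow-value $\beta$ into one edge's delay function; for example, the $(u,v)$ edge's constant $b$ is modified to a standard convex function that agrees with $b$ outside a small window around $\beta$. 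The $\Omega(\sqrt d)$ bumps are placed at pairwise-disjoint flow-locations that all lie within the range reachable under the Stackelberg budget $\alpha d$.

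Next I design the adversary. It answers each query $g_i$ with $f(l^1, g_i)$ and maintains the version set $V\subseteq B$ of candidates still consistent with its responses. Because distinct bumps have disjoint detection windows, each query can ``detect'' (i.e., distinguish $l^2_\beta$ from $l^1$) at most $O(1)$ values of $\beta$, so $|V|$ shrinks by only $O(1)$ per query. I would then exhibit inside $B$ an \emph{equivalent} witness $\beta$ whose bump is too small (or placed in a flow-region unreachable from any admissible Stackelberg routing on the edge of interest) to move the equilibrium by more than $\epsilon$, together with a \emph{non-equivalent} witness $\beta'$ for which some admissible Stackelberg routing $g^\ast$ triggers the bump exactly at the Braess threshold, switching the equilibrium regime and producing an $\Omega(1)$ per-edge difference between $f(l^1, g^\ast)$ and $f(l^2_{\beta'}, g^\ast)$, violating $\epsilon$-equivalence for any $\epsilon\le 1/16$. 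After fewer than $|B|/C$ queries the set $V$ contains both witnesses, so any deterministic algorithm is forced to err on one of them.

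The principal obstacle is calibrating the family so that three properties hold simultaneously: (i) each $l^2_\beta$ is a valid standard delay function (continuous, strictly increasing, convex, bounded coefficients and slopes as in Section~\ref{prelim}); (ii) the detection windows are narrow enough that $\Omega(\sqrt d)$ disjoint positions fit within the feasible flow range and each query eliminates only $O(1)$ of them; and (iii) a bump triggered at the Braess threshold generates an $\Omega(1)$ per-edge equilibrium change, comfortably exceeding $2\epsilon$ for $\epsilon\le 1/16$ so that the equivalent-versus-non-equivalent dichotomy is clean. The Braess graph provides the amplification needed for (iii) --- the equilibrium is highly sensitive to the shortcut latency near threshold --- but tuning the bumps to remain both valid and sharp is the delicate step. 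Once the family and the adversary are in place, the counting argument bounding $|V|$ and the verification of the dichotomy are routine.
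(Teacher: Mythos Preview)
Your high-level strategy—an adversary on a Braess instance with an exponential family of hidden latencies, each query ruling out few—is the right shape, but the construction has a real gap. The bump idea fails your own constraint (i): a convex function that agrees with the constant $b$ outside a window must be identically $b$, since its derivative is $0$ on both outer intervals and a nondecreasing derivative forces it to be $0$ throughout; the same rigidity holds for the affine $ax$ edges. Your specific example is therefore infeasible. You could only perturb the strictly convex $x^2+x$ edges, but there the perturbation is pinned between the function and its secant, and you give no argument that such a change triggers an $\Omega(1)$ regime switch. Separately, the ``each query eliminates $O(1)$ candidates'' claim is unjustified: you assume $f(l^2_\beta,g)=f(l^1,g)$ whenever the $l^1$-equilibrium flow misses the $\beta$-window, but the $l^2_\beta$-equilibrium is a different fixed point and could land in the window even when the $l^1$-equilibrium does not.

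The paper avoids both problems with a different construction. It adds a dummy $(s,t)$ edge of infinite delay, so a Stackelberg routing effectively selects the residual demand $d'=d-g_{st}$ on the Braess subgraph. The reference $l^1$ has $b=\infty$ (the Braess edge is never used), and the hidden $l^2$ keeps the same quadratic outer delays but has \emph{global} parameters $a,b$ chosen only after the queries. Claim~\ref{clm:braessclaim} shows that for any interval $(d_1,d_2)$ of length about $\sqrt d$ one can pick $a,b$ so the Braess edge carries positive flow iff $d'\in(d_1,d_2)$, with $f_{uv}\ge 1/12$ at the midpoint; outside that interval both $l^1$ and $l^2$ split the flow evenly with $f_{uv}=0$. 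After at most $\sqrt d$ queries the residual demands $d-g_{st}^i$ occupy at most $\sqrt d$ points of $[0,d]$, so pigeonhole yields an unvisited interval where the adversary plants $(d_1,d_2)$. No local bumps and no fixed-point stability argument are needed—just two global parameters and a pigeonhole step on the queried residual demands.
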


Our proof of the theorem is
based on a particular property exhibited by the Braess graph shown in
Figure~\ref{fig:braessstackelberg2}: there exist demands $d_1 \le d_2$ that depend on the parameters $a$ and $b$ so that for any demand $d < d_1$ and $d > d_2$ the
set of shortest-path edges is the same, and differs from the set of shortest-path edges for any demand $d_1 \le d \le d_2$. This is formalized by the following claim.

\begin{clm}
For the routing game depicted in Figure~\ref{fig:braessstackelberg2}, and any $d_1$, $d_2 \in \mathbb{R}_+$ with $d_2 > d_1 \ge 1$, there exist parameters $a$ and $b$ so that the equilibrium flow $f$ on
the Braess edge is strictly positive iff $d_1 < d < d_2$, where $d$ is the
demand being routed. Further, if $d_2 - d_1 \ge \sqrt{2(d_1 + d_2)}$, then $f_{uv} \ge
1/12$ for demand $d = (d_1 + d_2)/2$. 
\label{clm:braessclaim}
\end{clm}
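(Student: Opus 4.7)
The plan is to choose $a = 1 + (d_1+d_2)/2$ and $b = d_1 d_2 / 4$, and then verify both parts of the claim directly. The Braess graph has three $s$-$t$ paths: $P_1 = s \to u \to t$ and $P_2 = s \to v \to t$, and the Braess path $P_3 = s \to u \to v \to t$. First I would observe that at equilibrium $f_{P_1} = f_{P_2}$: imposing $\ell(P_1) = \ell(P_2)$ (which holds whenever both paths carry flow, and they do at any equilibrium) and factoring shows the difference equals $(f_{P_1}-f_{P_2})\cdot(\text{positive})$. Thus the equilibrium is parameterized by the single scalar $f := f_{P_3} = f_{uv}$.

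For the iff statement, I would examine the candidate $f=0$. Symmetry then gives $f_{P_1} = f_{P_2} = d/2$, and a direct computation yields $\ell(P_1) = (d/2)^2 + d/2 + a(d/2)$ and $\ell(P_3) = 2[(d/2)^2 + d/2] + b$. So $f=0$ is an equilibrium iff $\ell(P_3) \ge \ell(P_1)$, which simplifies to
\[
d^2 - 2(a-1)d + 4b \,\ge\, 0.
\]
Under the chosen parameters this quadratic in $d$ has roots exactly $d_1$ and $d_2$ (check: sum $= 2(a-1) = d_1+d_2$ and product $= 4b = d_1 d_2$), so $f=0$ is the equilibrium iff $d \le d_1$ or $d \ge d_2$. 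Uniqueness of the equilibrium flow vector then gives $f_{uv} > 0$ iff $d_1 < d < d_2$.

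For the quantitative bound, I would set $d = (d_1+d_2)/2 = a-1$ and let $\psi := f_{P_1} = f_{P_2}$, so $f = d - 2\psi$. The equilibrium condition $\ell(P_1) = \ell(P_3)$ simplifies to $a\psi = (a-1-\psi)^2 + (a-1-\psi) + b$. Substituting $w := a-1-\psi$ and solving the resulting quadratic $w^2 + (1+a)w + (b - a(a-1)) = 0$ via the positive root gives, after bookkeeping,
\[
f \;=\; -2a \,+\, \sqrt{\,5a^2 - 2a + 1 - 4b\,}.
\]
The target $f \ge 1/12$ becomes $5a^2 - 2a + 1 - 4b \ge (2a + 1/12)^2$, which rearranges to $(d_2-d_1)^2/4 \ge \mu/3 + 1/3 + 1/144$, where $\mu := (d_1+d_2)/2$ and I use the identity $\mu^2 - d_1 d_2 = (d_2-d_1)^2/4$. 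The hypothesis $(d_2-d_1)^2 \ge 2(d_1+d_2) = 4\mu$ reduces this to $\mu \ge \mu/3 + 1/3 + 1/144$, i.e.\ $\mu \ge 49/96$, which holds because $d_1 \ge 1$ forces $\mu \ge 1$.

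The main obstacle is the final algebra: the equilibrium quadratic for $f$ and the ``Braess-edge-not-beneficial'' quadratic whose roots are $d_1, d_2$ have superficially different shapes, so the implication from the hypothesis to $f \ge 1/12$ is where mistakes are most likely. Extracting the closed form $\sqrt{5a^2-2a+1-4b}$ and then collapsing everything via the identity $\mu^2 - d_1 d_2 = (d_2-d_1)^2/4$ appears to be the cleanest route, and the fact that the hypothesis $d_1 \ge 1$ supplies exactly the slack needed makes the final inequality hold comfortably.
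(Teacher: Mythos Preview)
Your proof is correct and takes essentially the same approach as the paper: the same choice $a=1+(d_1+d_2)/2$, $b=d_1d_2/4$, the same verification that the split flow is an equilibrium iff $d\notin(d_1,d_2)$, and the same closed form for $f_{uv}$ at $d=(d_1+d_2)/2$ (your $\sqrt{5a^2-2a+1-4b}-2a$ equals the paper's $\sqrt{(2+\sigma)^2+\delta^2/4}-(2+\sigma)$ since $2+\sigma=2a$ and $(a-1)^2-4b=\delta^2/4$). The only difference is the final estimate: the paper applies $\sqrt{1+x}\ge 1+x/3$ to get $f_{uv}\ge\delta^2/(12(2+\sigma))$ and then uses $\delta^2\ge 2\sigma\ge 2+\sigma$, whereas you square directly and reduce to $\mu\ge 49/96$; both are valid and of comparable length.
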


\begin{proof}
We choose $a =1+(d_1 + d_2)/2$ and $b = (d_1 d_2)/4$. Then 
for any $d$, consider the flow that routes $d/2$ on the $s$-$u$-$t$ path and $d/2$ on the
$s$-$v$-$t$ path. It is easy to verify that this is the equilibrium flow if and only if $d
\le d_1$ or $d \ge d_2$. Given the symmetric delay functions, it is then
apparent that for $d \in (d_1, d_2)$ the $(u,v)$ edge must have strictly
positive flow. 

For the second part of the proof, let $\sigma = d_1 + d_2$, $\delta = d_2 -
d_1$. Thus $a = 1 + \sigma/2$, $b = (\sigma^2 - \delta^2)/16$, and $d = \sigma/2 =
a-1$. Let $f_{su} = x$. Then by the symmetry of the delay functions $f_{sv} = f_{ut} =
d-x$ and $f_{uv} = 2x-d$. Since we know for this demand $f_{uv} > 0$, and edge $(s,v)$ has
zero delay if $f_{sv} = 0$, $l_{su}(f) + l_{uv}(f) - l_{sv}(f) = 0$. Hence 

\[
0 ~ = x^2 + x + b - a(d-x) 
\]

\noindent and solving for $f_{uv} = 2x-d$, and substituting the values of $a$, $b$ and $d$ yields

\begin{align*}
2x-d & = \sqrt{(a+1)^2 +4(ad-b)} - (a+1+d) ~ = x^2 + (a+1)x + (b-ad) \\
	& = \sqrt{(2+\sigma)^2 + \delta^2/4} - (2+\sigma) \, .
\end{align*}

\noindent Using the fact that $\sqrt{1+x} \ge 1 + x/3$ for $|x| \le 1$ by the Taylor expansion, we get

\begin{align*}
2x-d \ge \frac{\delta^2}{12(2+\sigma)}
\end{align*}

\noindent Since $\delta \ge \sqrt{2\sigma} \ge \sqrt{2+\sigma}$ by assumption, this completes the proof.
\end{proof}

\begin{proofof}{Theorem~\ref{thm:stackelbergequivalence}}
We demonstrate that on the Braess graph in Figure~\ref{fig:braessstackelberg2} with an
additional $(s,t)$ edge, demand $d > 8$, and where $a$, $b$ have value $O(d^2)$, any
algorithm requires $\Omega(\sqrt{d})$ queries to determine if two sets of delay functions
$l^1$, $l^2$ are equivalent. Since the size of the input is $O(\log d)$, this would prove
the lemma. 

For delay functions $l^1$ that are explicitly given, $a=1$, $b=
\infty$ and $l^1_{st} = \infty$. Delay functions $l^2$ also have $l^2_{st} = \infty$ but
different values for $a$ and $b$, which are determined after seeing the queries. Let
$g^i$, $i \le k$ be the set of queries. We will show that if $k \le \sqrt{d}$ then there
exist $a$, $b$ so that $f(l^1,g^i) = f(l^2,g^i)$ for all $i \le k$, but there exists $g$
so that $f_{uv}(l^2,g^i) \ge 1/12$. Since $l_{uv}^1 = \infty$, it must be that $f_{uv}(l^1,g^i) =
0$, and hence the two delay functions are distinct. Thus any algorithm that makes less
than $\sqrt{d}$ must fail to distinguish between these delay functions. 

For any query $g^i$, our oracle returns the equilibrium flow $f(l^1, g^i)$. Now given
$g^i$ for $i \le k \le \sqrt{d}$, let $\alpha_1$, $\alpha_2 \in [1,d]$ be such that
$\alpha_2 - \alpha_1 \ge \sqrt{d}$, and for all $i$, $d-g_{st}^i \not \in [\alpha_1,
  \alpha_2]$. Since edge $(s,t)$ has infinite delay, any flow on this edge must be
Stackelberg flow. Hence we require $\alpha_1$ and $\alpha_2$ so that the total flow on the
Braess graph is always outside the interval $[\alpha_1, \alpha_2]$, and $\alpha_2 -
\alpha_1 \ge \sqrt{d}$. Since $k \le \sqrt{d}$, such an interval must exist. We then
select $a$ and $b$ as in Claim~\ref{clm:braessclaim} to complete our definition of delay
function $l^2$. 

It remains to show that for all $g^i$, $f(l^2, g^i) = f(l^1, g^i)$ for correctness of the
oracle. Fix $i$, and let $d' = d-g_{st}^i$. Note that $d' \not \in [\alpha_1,
  \alpha_2]$. Let $g_1$, $g_2$ and $g_3$ be the Stackelberg flow on paths $s$-$u$-$t$,
$s$-$u$-$v$-$t$, and $s$-$v$-$t$ respectively. By our choice of $l^2$, if $g_1=g_2=g_3 =
0$, then the equilibrium flow would split demand $d'$ equally between the $s$-$u$-$t$ and
$s$-$v$-$t$ paths, and hence  

\begin{align}
l^2_{sv}(d'/2) & \le b + l^2_{su}(d'/2) \, .
\label{eqn:flowsplit}
\end{align}

\noindent We consider the following cases.

\noindent \textbf{Case 1: Either $g_1$ or $g_3$ is strictly greater than
  $(d'-g_2)/2$}. Suppose $g_1 > (d'-g_2/2)$. We claim that at equilibrium, the non-Stackelberg demand is entirely routed on the $s$-$v$-$t$ path, i.e., $f(g) = h$ where $h_{sv}(g) = h_{vt}(g) = d'-(g_1+g_2+g_3)$. To see this, note that $h_{sv} + g_{sv} < d'/2$, hence comparing with~\eqref{eqn:flowsplit}, delay on the $s$-$v$-$t$ path is less than the delay on the $s$-$u$-$v$-$t$. Further, $h_{sv} + g_{sv} < g_{ut}$, and $h_{vt} + g_{vt} < g_{su}$. By the symmetry of delay functions, the $s$-$v$-$t$ path is therefore the shortest path, and hence $h = f(g)$.

\noindent \textbf{Case 2: Both $g_1$ and $g_3$ are at most $(d' - g_2)/2$}. We claim that
at equilibrium, $f_{su}(g) = f_{ut}(g)$ $= d'/2 - (g_1 + g_2)$ and $f_{sv}(g) = f_{vt}(g)$
$= d'/2 - (g_3 + g_2)$. Thus at equilibrium the remaining flow $d'-g_2$ is divided equally
between the $s$-$u$-$t$ and $s$-$v$-$t$ paths, and again the edge $(u,v)$ has no flow at
equilibrium. To verify the claim, note that  

\[
f_{sv}(g) + g_{sv} \le f_{su}(g) + g_{su} = d'/2 \mbox{ and } f_{ut}(g) + g_{ut} \le f_{vt}(g) + g_{vt} = d'/2
\]

\noindent and hence, comparing with~(\ref{eqn:flowsplit}),

\[
l^2_{sv}(f(g) + g) \le l^2_{sv}(f(g) + g) + b \mbox{ and } l^2_{ut}(f(g) + g) \le l^2_{vt}(f(g) + g) + b \, .
\]

\noindent It is further easy to see that, since the total flow on edges $(s,u)$, $(v,t)$
is equal, and the total flow on edges $(s,v)$, $(u,t)$ is equal, 

\[
l^2_{sv}(f(g) + g) + l^2_{vt}(f(g) + g) = l^2_{su}(f(g) + g) + l^2_{ut}(f(g) + g) \, .
\]

\noindent Hence paths $s$-$u$-$t$ and $s$-$v$-$t$ are shortest paths with the described
flow, and since $f_P(g) > 0$ only on these paths, it is an equilibrium. 

As noted earlier, if the Stackelberg flow is rational, then so is $f(g)$. In fact as shown the equilibrium flow in all cases is very simple and can be computed directly from $g$.
\end{proofof}

We note that in our example, the equilibrium flow returned by the oracle is particularly simple and in fact does not depend on the delay functions. E.g., in the simpler case in the proof sketch, the oracle always returns $f_e(g) = (d-g_{st})/2$ for all $e \neq (s,t)$, $(u,v)$.

\paragraph{Computational complexity.} We now show that even if delay functions $l^1$ and $l^2$ are given explicitly, determining if they are $\epsilon$-equivalent is
computationally hard for $\epsilon \le 1/2$. This is true even if all delay functions are
affine. Our proof uses properties of the Braess graph together with ideas from a reduction from 2-Directed Disjoint Paths shown by Roughgarden~\cite{Roughgarden06}.

\begin{definition}[2-Directed Disjoint Paths (2DDP)]
Given a directed graph $G=(V,E)$ and two pairs of terminals $s_1$,$t_1$ and $s_2$, $t_2$,
determine if there exist $s_i$-$t_i$ paths $p_i$ so that $p_1$ and $p_2$ are
vertex-disjoint. 
\end{definition}

\begin{figure}[h]
\centering
\psfrag{e1}[br]{$x$}
\psfrag{e2}[tr]{$1$}
\psfrag{e3}{$0$}
\psfrag{e4}[bl]{$1$}
\psfrag{e5}[tl]{$x$}
\psfrag{s}{$s$}
\psfrag{t}{$t$}
\psfrag{u}{$u$}
\psfrag{v}{$v$}
\psfrag{d}{$d$}
\includegraphics[scale=0.25]{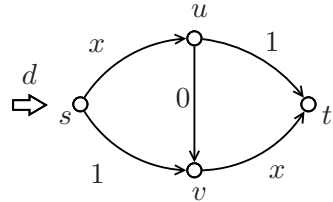}
\caption{Braess graph instance for proving hardness of equivalence determination with respect to Stackelberg routing.}
\label{fig:braessstackelberg}
\end{figure}

\begin{figure}[h]
\centering
\psfrag{s}{$s$}
\psfrag{sp}{$s'$}
\psfrag{s1}{$s_1$}
\psfrag{s2}{$s_2$}
\psfrag{t1}{$t_1$}
\psfrag{t2}{$t_2$}
\psfrag{t}{$t$}
\psfrag{x}{$x$}
\psfrag{y}{$y$}
\psfrag{18}{$1/8$}
\psfrag{infty}{$\infty$}
\psfrag{x}{$x$}
\psfrag{1}{$1$}
\psfrag{xy}{$x/m^2$}
\includegraphics[scale=0.5]{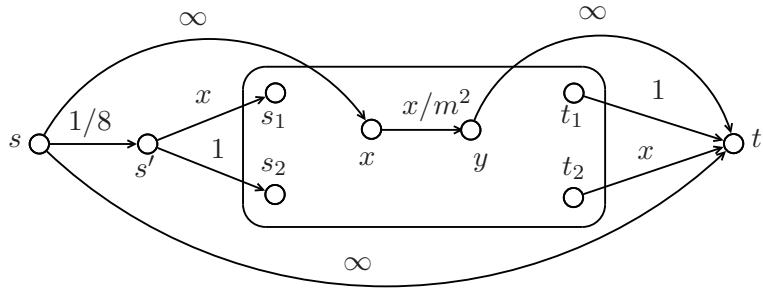}
\caption{2DDP instance with additional edges for proving hardness of equivalence determination with respect to Stackelberg routing.}
\label{fig:ddp2}
\end{figure}

\begin{theorem}
The problem of determining the $\epsilon$-equivalence of delay functions for $\epsilon \le 1/2$ is \nphard. 
\label{thm:stackelbergeq}
\end{theorem}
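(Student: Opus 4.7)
The plan is to reduce from the \nphard{} 2-Directed Disjoint Paths (2DDP) problem. Given a 2DDP instance $(G, s_1, t_1, s_2, t_2)$, I would construct a single-commodity Stackelberg routing instance on an augmented graph $G'$ (cf.~Figure~\ref{fig:ddp2}), with some demand $d$, Stackelberg fraction $\alpha$, and two sets of affine delay functions $l^1, l^2$ that differ only on a single ``Braess-gadget'' edge, so that $l^1, l^2$ fail to be $\frac{1}{2}$-equivalent if and only if the 2DDP instance has a solution. Since $\frac{1}{2}$-equivalence would then be coNP-hard, and hence NP-hard, this gives the theorem.

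The augmented graph $G'$ embeds $G$ between its four terminals, with internal edges carrying small linear delays $x/m^2$, and adds super-terminals $s, t$ together with an auxiliary vertex $s'$, connected to $\{s_1, s_2, t_1, t_2\}$ by affine edges of delay $x$, $1$, or $\infty$ (as in the figure) plus a single shortcut edge of constant delay $\frac{1}{8}$. The delay functions $l^1$ and $l^2$ agree on every edge except the shortcut edge, where $l^2$ sets the delay to $\infty$ while $l^1$ keeps it at $\frac{1}{8}$. The design, inspired by Roughgarden's reduction~\cite{Roughgarden06} and in the spirit of Claim~\ref{clm:braessclaim}, is that the shortcut is useful only when a flow can traverse it along a route of the form $s\to s_1\to \cdots \to t_1 \to s'\to s_2 \to \cdots \to t_2 \to t$, which requires a pair of vertex-disjoint $s_i$-$t_i$ paths in $G$.

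For the forward direction, given vertex-disjoint paths $P_1, P_2$ from 2DDP, I exhibit a Stackelberg routing $g$ of value at most $\alpha d$ that places just enough flow on $P_1 \cup P_2$ to make the Braess shortcut attractive to the non-Stackelberg demand under $l^1$, while having no effect under $l^2$ (where the shortcut is blocked). Tuning $d$ and the gadget parameters as in Claim~\ref{clm:braessclaim}, the equilibria $f(l^1,g)$ and $f(l^2,g)$ can be forced to differ by at least $1$ on the shortcut edge (or on one of the $s$-$s_i$ or $t_i$-$t$ edges), which certifies $\|f(l^1,g)-f(l^2,g)\|_\infty > \frac{1}{2}$. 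For the reverse direction, if no disjoint-paths solution exists, then any Stackelberg flow that tries to load the shortcut forces the non-Stackelberg equilibrium through a shared vertex of $G$, and I would show that this shared-vertex bottleneck neutralizes the Braess mechanism, so that $f(l^1,g)=f(l^2,g)$ for every feasible Stackelberg routing $g$.

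The main obstacle, as in any Roughgarden-style reduction, lies in the reverse direction: one must rule out ``clever'' Stackelberg routings that fake disjointness by exploiting the small $x/m^2$ slopes on internal edges, or by saturating auxiliary edges in unexpected ways. I plan to handle this by an exhaustive case analysis on the support of the non-Stackelberg equilibrium: the slow slope $x/m^2$ bounds the delay any flow can accumulate on internal edges by $O(1/m)$, which, combined with the $\frac{1}{8}$ constant shortcut cost and the constant gaps between the outer edge delays, lets one show that the only best responses not routed through disjoint paths reproduce the same equilibrium structure under $l^1$ and under $l^2$. Uniqueness of the Wardrop equilibrium for strictly increasing affine delays then yields $f(l^1,g) = f(l^2,g)$ edge-wise, completing the reduction.
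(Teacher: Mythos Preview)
Your high-level plan---reduce from 2DDP and use a Braess-type gadget so that $l^1$ and $l^2$ diverge exactly when disjoint paths exist---matches the paper's, but the construction you describe misses two essential ingredients and would not work as stated.

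First, the paper's instance is the \emph{parallel composition} of two graphs: a standard Braess graph $H_1$ (Figure~\ref{fig:braessstackelberg}) and the augmented 2DDP graph $H_2$ (Figure~\ref{fig:ddp2}). The delays $l^1, l^2$ differ on the Braess edge $(u,v)$ in $H_1$ ($l^1_{uv}=0$, $l^2_{uv}=\infty$), not on the $1/8$ edge. The vertex $s'$ in Figure~\ref{fig:ddp2} sits between $s$ and $\{s_1,s_2\}$, not between $t_1$ and $s_2$, so there is no zigzag route $s\to s_1\to\cdots\to t_1\to s'\to s_2\to\cdots\to t_2\to t$ through $H_2$. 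The $1/8$ edge's role is only to calibrate the equilibrium delay through $H_2$ so that the right amount of flow spills into the parallel Braess graph $H_1$ and activates $(u,v)$ there (via Claim~\ref{clm:braessedge}).

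Second, and more critically, the Stackelberg routing in the YES case does the opposite of what you describe. The paper adds infinite-delay auxiliary edges $(s,x)$ and $(y,t)$ for every original edge $(x,y)\in E(G)$, plus an infinite-delay $(s,t)$ edge to absorb leftover Stackelberg demand. The Stackelberg router then pushes $m^3$ units through each edge of $G$ \emph{not} on the disjoint paths (via these auxiliary edges), making those edges prohibitively expensive for the equilibrium flow; this reduces $H_2$ effectively to the two disjoint paths. Your proposal to ``place flow on $P_1\cup P_2$'' would only \emph{raise} delays on the paths you want the equilibrium to use, and cannot trigger a Braess effect. Without the auxiliary injection edges you also have no mechanism for the Stackelberg router to control which internal edges of $G$ the equilibrium sees, so the reverse-direction case analysis you sketch has no foothold; the paper instead argues that $f_{uv}(l^1,g)>0$ forces equilibrium delay at most $2$ in $H_2$, which by the argument of Theorem~\ref{thm:eqinapprox} is impossible without disjoint paths.
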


We use the following claim about Stackelberg routing in the Braess graph.

\begin{clm}
In any Stackelberg routing instance on the graph with delay functions $l$ as in
Figure~\ref{fig:braessstackelberg} and Stackelberg routing $g$, if $f_{uv}(g) > 0$, then
$d < 2$ and $D(l,g) \le 2$.  Further, if $g_e = 0$ for every edge in the Braess graph and the demand $d \in [1/2, 3/2]$, then $f_{uv}(g) \ge 1/2$.
\label{clm:braessedge}
\end{clm}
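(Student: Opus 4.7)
The plan is to analyze the Wardrop condition on the three $s$-$t$ paths $P_1 = s\text{-}u\text{-}t$, $P_2 = s\text{-}u\text{-}v\text{-}t$, $P_3 = s\text{-}v\text{-}t$ of the Braess graph, writing $F := f(g) + g$ for the total-flow vector. Then $F$ is an $s$-$t$ flow of value $d$, and by the Wardrop condition applied to the non-Stackelberg traffic $f(g)$ with the modified delays $\tilde{l}_e(g;\cdot)$, any path $P$ carrying positive non-Stackelberg flow is shortest in the sense that $l_P(F) = D(l,g) = \min_{P' \in \Pc} l_{P'}(F)$.

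For the first assertion, the edge $(u,v)$ lies only on $P_2$, so $f_{uv}(g) > 0$ forces $P_2$ to be a shortest path at equilibrium. Comparing $l_{P_2}(F) = F_{su} + F_{vt}$ to $l_{P_1}(F) = F_{su} + 1$ and $l_{P_3}(F) = 1 + F_{vt}$ gives $F_{vt} \le 1$ and $F_{su} \le 1$. Flow conservation at $v$ yields $F_{sv} = F_{vt} - F_{uv}$, so $d = F_{su} + F_{sv} = F_{su} + F_{vt} - F_{uv} < F_{su} + F_{vt} \le 2$, where strictness uses $F_{uv} \ge f_{uv}(g) > 0$. Hence also $D(l,g) = l_{P_2}(F) = F_{su} + F_{vt} \le 2$.

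For the second assertion, $g = 0$ on the Braess edges reduces the problem to a pure Wardrop equilibrium computation with demand $d \in [1/2, 3/2]$. For $d \in [1/2, 1]$, routing all of $d$ along $P_2$ gives $l_{P_2}(F) = 2d \le d + 1 = l_{P_1}(F) = l_{P_3}(F)$, so this is an equilibrium and $f_{uv} = d \ge 1/2$. For $d \in [1, 3/2]$, I exploit the symmetry $l_{su} = l_{vt}$, $l_{ut} = l_{sv}$ and try the symmetric assignment that sends $a$ on $P_2$ and $b$ on each of $P_1, P_3$ with $a + 2b = d$; equalizing the three path delays gives $2(a+b) = a+b+1$, hence $a = 2-d$ and $b = d-1$, both nonnegative on $[1,3/2]$, and the Wardrop condition is satisfied since all three paths are used with equal delay $2$. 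Thus $f_{uv} = 2-d \ge 1/2$. The main (minor) care is in checking that the candidate flows in part 2 indeed satisfy the Wardrop condition, which is immediate from the computations above; no substantial obstacle arises, since once the strictly-increasing edges $(s,u)$ and $(v,t)$ pin down $F_{su}$ and $F_{vt}$, flow conservation determines $f_{uv}$ uniquely.
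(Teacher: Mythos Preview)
Your proof is correct and follows essentially the same approach as the paper's: use the Wardrop condition on the path $s\text{-}u\text{-}v\text{-}t$ to bound $F_{su},F_{vt}\le 1$, deduce $d<2$ and $D(l,g)\le 2$ from flow conservation, and for the second part exhibit the explicit equilibria on $[1/2,1]$ and $[1,3/2]$. Your write-up is in fact slightly more detailed than the paper's (which is rather terse on the flow-conservation step yielding $d<2$), but the underlying argument is identical.
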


\begin{proof}
For the first part of the claim, het $h = f(l,g)+g$. If $(u,v) \in \mcs(l,g)$, then the path $p=(s,u,v,t)$ must be a
shortest path for flow $h$. Then $l_{su}(h) + l_{uv}(h) \le l_{sv}(h)$, and hence
$l_{su}(h) \le 1$. Thus $h_{su} \le 1$. Similarly, $h_{vt} \le 1$. The first part follows. For the second part, if $d \le 1$ it is easy to see that the equilibrium flow routes the entire demand on the $s$-$u$-$v$-$t$ path. If $d \in [1,2]$ then consider the flow $h_{su} = h_{vt} = 1$, $h_{uv} = 2-d$ and $h_{sv} = h_{ut} = d-1$. It can be verified that $h$ is the equilibrium flow.
\end{proof}

\begin{proofof}{Theorem~\ref{thm:stackelbergeq}} We show a reduction from 2DDP. Given an instance of the 2DDP problem, after the addition of a
source $s$ and a sink $t$ and additional edges described next (and shown in
Figure~\ref{fig:ddp2}), we add this graph in parallel with a standard Braess graph
(Figure~\ref{fig:braessstackelberg}). The delay functions $l^1$, $l^2$ will differ only on
edge $(u,v)$ in the Braess graph. We use $H_1$ to refer to the Braess graph and $H_2$ to
refer to the graph in the 2DDP instance with vertices $s$ and $t$ and the additional
edges, and $H$ to refer to their parallel composition. For a flow $f$, $|f_{H_i}|$ is the
value of the flow in subgraph $H_i$.

The specifics of the construction are as follows. Let $m=|E|$ be the number of edges in
the given instance of 2DDP. All of these edges have delay function $x/m^2$. We add a
source $s$, vertex $s'$ and a sink $t$. We add an edge $(s,s')$ with constant delay
function $1/8$, and edges $(s',s_i)$ and $(t_i,t)$ for $i \in \{1,2\}$. Edges $(s,s_1)$
and $(t_2,t)$ have delay function $x$, while edges $(s,s_2)$ and $(t_1,t)$ have delay
function 1. Further, there is an $(s,t)$ edge with delay function $\infty$, and for every
edge $e=(x,y)$ in the original instance, the new instance additionally contains edges
$(s,x)$ and $(y,t)$ with delay function $\infty$. This constitutes the graph $H_2$. Graph
$H_1$ consists of the Braess graph instance in Figure~\ref{fig:braessstackelberg}, and
graph $H$ is obtained by a parallel composition of $H_1$ and $H_2$. The Stackelberg
instance has demand $m^4 + 3$, and $\alpha = m^4/(3+m^4)$. The delay functions $l^1$,
$l^2$ are as described, except $l^2_{uv} = \infty$ on the Braess edge. 

Since $l^1$, $l^2$ differ only
on the delay function on edge $(u,v)$, it is easy to see that for any Stackelberg routing
$g$, if $f_{uv}(l^1,g) = 0$ then $f_{uv}(l^1,g) = f_{uv}(l^2,g)$. Further, since $l^2_{uv}
= \infty$, if $f_{uv}(l^1,g) = f_{uv}(l^2,g)$ then in fact $f_{uv}(l^1,g) = 0$. Hence the
delay functions are equivalent iff $f_{uv}(l^1,g) = 0$ for every Stackelberg routing
$g$. For the proof of the theorem, we will show that if the instance of 2DDP is a positive
instance, then there exists a Stackelberg routing $g$ so that $f_{uv}(l^1,g) \ge 1/2$,
otherwise for any Stackelberg routing, $f_{uv}(l^1,g) = 0$. 

In the remainder of the proof we focus on delay functions $l^1$. Suppose that the instance
is a positive instance. Then the Stackelberg routing $g$ sends $m^3$ units of flow on
every edge $e=(x,y)$ in the original instance that is not on the vertex-disjoint paths
$p_i$, using the additional edges $(s,v)$, $(v,w)$. Any remaining Stackelberg flow is
routed on the $(s,t)$ edge. Thus every edge that is not on the vertex-disjoint paths now
has delay at least $m$, while $g_e = 0$ for  every edge on the vertex-disjoint
paths. Further, $g_e = 0$ for every edge $e \in H_1$. 

We claim that for the equilibrium flow, $1/2 \le |f_{H_1}| \le 3/2$. To see this, if
$|f_{H_1}| < 1/2$, then the delay at equilibrium in $H_1$ is at most $1$. However
$|f_{H_2}| > 5/2$, hence at least one of the two $s$-$t$ parallel paths has delay at
equilibrium greater than 1. If $|f_{H_1}| > 3/2$ then the delay at equilibrium in $H_1$ is
2. However, $|f_{H_2}| \le 3/2$, hence at least one of the two $s$-$t$ parallel paths has
delay at equilibrium less than $1 + 1/8 + 3/4 \times (1+1/m)$ $< 2$. Thus at equilibrium,
$1/2 \le |f_{H_1}| \le 3/2$, and by Claim~\ref{clm:braessedge}, $f_{uv}(g) \ge 1/2$. 

Now suppose that for some Stackelberg routing $g$, $f_{uv}(g) > 0$. Then by
Claim~\ref{clm:braessedge}, $|f_{H_1}| \le 2$ and the delay at equilibrium is at most
2. However, then $|f_{H_2}| \ge 1$ and the delay at equilibrium is at most 2. Since there
is an $(s,s')$ edge with constant delay $1/8$, following the proof of
Theorem~\ref{thm:eqinapprox}, this is only possible if the instance of DDP is a positive
instance.
\end{proofof}

In fact, using very similar ideas, we can show that the problem of minimizing $D(f(g))$ over all Stackelberg strategies is $(4/3-\epsilon)$-inapproximable, even with linear delays. Roughgarden has shown that finding the Stackelberg routing that minimizes the
average delay of the \emph{total flow} $g+f(g)$ is \nphard, even in parallel links with
affine delays~\cite{Roughgarden04}. Despite considerable interest in Stackelberg routing,
nothing stronger than NP-hardness is known for this problem. Our result thus shows that a closely related problem is APX-hard.

\begin{definition}[Stackelberg Equilibrium Delay Minimization (SEDM)]
Given a Stackelberg routing instance $(G,l,(d,s,t),\alpha)$, find the Stackelberg routing
$g$ that minimizes the average delay for the equilibrium flow $f(g)$. 
\end{definition}

\begin{theorem}
The SEDP problem is $(4/3-\epsilon)$-inapproximable, for any fixed $\epsilon > 0$. 
\label{thm:eqinapprox}
\end{theorem}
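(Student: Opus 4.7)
\emph{Approach.} The plan is to reduce from 2-Directed Disjoint Paths (2DDP) and amplify the binary ``$f_{uv}(g)>0$ vs $f_{uv}(g)=0$'' dichotomy exhibited in the proof of Theorem~\ref{thm:stackelbergeq} into a quantitative $4/3$ gap in the average equilibrium delay. Given a 2DDP instance with $m$ edges, I form the same parallel-composition gadget $H=H_1\parallel H_2$ as in Theorem~\ref{thm:stackelbergeq}: $H_1$ is the Braess graph of Figure~\ref{fig:braessstackelberg}, and $H_2$ is the 2DDP-derived graph of Figure~\ref{fig:ddp2} (with its constant-delay $(s,s')$ edge, linear-delay crossing edges $(s,s_i),(t_i,t)$, and infinite-delay edges on all other crossings and on the direct $(s,t)$ edge). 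I will choose the total demand $d$ and Stackelberg fraction $\alpha$ so that (i) in a yes-instance the leader has enough mass both to saturate all non-disjoint crossings of $H_2$ \emph{and} to play the classical Braess-Stackelberg move of placing $1/2$ unit on the $H_1$ sub-path $s$-$v$-$t$ with non-Stackelberg Braess demand equal to $1$, and (ii) in a no-instance the leader cannot relieve the Braess paradox.

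\emph{Yes-case.} The leader uses the bulk of its Stackelberg mass to block all non-disjoint crossings of $H_2$ via high-delay flow (exactly as in Theorem~\ref{thm:stackelbergeq}), leaving only the two vertex-disjoint paths $p_1,p_2$ as cheap $H_2$ routes, and additionally places $1/2$ unit on $s$-$v$-$t$ in $H_1$. The standard Braess-Stackelberg calculation then shows that the unique non-Stackelberg equilibrium routes the critical unit of demand along $s$-$u$-$v$-$t$ of $H_1$ at per-user delay exactly $3/2$; by tuning the $(s,s')$ and the linear crossing delays, the delay on the cheap $H_2$ routes can also be forced to equal $3/2$, so non-Stackelberg flow may split between $H_1$ and $H_2$ at per-user delay $3/2$. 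The average equilibrium delay in a yes-instance is thus $3/2+o(1)$.

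\emph{No-case.} The no-case analysis of Theorem~\ref{thm:stackelbergeq} shows that any Stackelberg strategy $g$ that blocks every non-disjoint $H_2$ path must place flow on an infinite-delay edge, which is tantamount to wasting that Stackelberg capacity. I will argue that this forces one of two bad outcomes: either (a) the Braess part of $H_1$ is left with the full non-Stackelberg demand and no Stackelberg help on $s$-$v$-$t$, so by Claim~\ref{clm:braessedge} the equilibrium has $f_{uv}(g)>0$ and per-user delay $\geq 2$ on $H_1$; or (b) some non-Stackelberg flow escapes into $H_2$ along a non-disjoint path, but since no two $s_i$-$t_i$-traversals can be vertex-disjoint, any two such escaping flows share a vertex, inflating the linear delay on some used $H_2$ path to at least $2-o(1)$. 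In either case, the average equilibrium delay is $\geq 2-o(1)$, so a $(4/3-\epsilon)$-approximation to SEDM would distinguish yes- from no-instances of 2DDP, giving $(4/3-\epsilon)$-inapproximability (even with linear/affine delays, since the gadget uses only such delays).

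\emph{Main obstacle.} The hardest step is the no-case analysis: ruling out every Stackelberg strategy $g$, including those that split mass between $H_1$ and $H_2$ in unintended ways. I will parametrize $g$ by $|g_{H_1}|$ and the mass $g$ wastes on infinite-delay edges of $H_2$, and prove via a global potential argument, mirroring the conflict-counting step in Roughgarden's~\cite{Roughgarden06} NP-hardness reduction for Stackelberg total-delay minimization, that no strategy can simultaneously relieve the Braess paradox on $H_1$ and keep every used $H_2$ path at per-user delay below $2$. A secondary difficulty is the constant tuning in the yes-case: making the per-user delays on $H_1$ and on the cheap $H_2$ paths both equal $3/2$, since even an $o(1)$ mismatch would let non-Stackelberg migrate to the cheaper side and break the tight $3/2$ upper bound needed for the $4/3$ gap.
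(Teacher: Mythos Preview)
Your route is substantially more complicated than the paper's, and the added machinery creates a gap in the no-case that your case split does not close.

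The paper's proof does \emph{not} use the Braess graph at all. It works with the 2DDP-derived gadget alone (not the Figure~\ref{fig:ddp2} version with the $(s,s')$ edge, but a simpler one): the original 2DDP edges get delay $x/m^2$; one attaches a fresh source $s$ and sink $t$ via connector edges $(s,s_1),(t_2,t)$ of delay $x$ and $(s,s_2),(t_1,t)$ of constant delay $1$; and one adds infinite-delay ``dump'' edges $(s,v),(w,t)$ for each original edge $(v,w)$ plus an infinite-delay $(s,t)$ edge. Demand is $m^4+1$ with $\alpha=m^4/(m^4+1)$. The $4/3$ gap comes directly from the four connector edges. In a yes-instance the leader floods every non-disjoint-path edge with $m^3$ units via the dump edges; the single follower unit then splits roughly $1/2$--$1/2$ between the $s$--$s_1$--$p_1$--$t_1$--$t$ and $s$--$s_2$--$p_2$--$t_2$--$t$ routes, each of which carries one $x$-edge and one $1$-edge, giving equilibrium delay $\le 3/2+1/m$. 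In a no-instance the paper gives a short direct argument: any equilibrium with $D(f(g))<2$ must put positive flow on all four connectors, cannot use any $s$--$s_2$--$\cdots$--$t_1$--$t$ path (that alone has delay $\ge 2$), hence uses both an $s_1$--$t_1$ and an $s_2$--$t_2$ flow-path; in a no-instance these share a vertex $v$, and the constant-$1$ connectors force the $s$--$v$ and $v$--$t$ equilibrium delays each to be $\ge 1$, so $D(f(g))\ge 2$. No potential argument or parametrization over leader strategies is needed.

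By placing the Braess graph $H_1$ in parallel you have made the no-case genuinely harder, and your dichotomy (a)/(b) does not cover all leader strategies. Nothing in a no-instance prevents the leader from playing \emph{exactly} the Braess-helping move you use in the yes-case (routing $1/2$ on $s$--$v$--$t$ in $H_1$ and dumping the rest on the infinite-delay $(s,t)$ edge). Then case~(a) fails because $H_1$ \emph{does} receive Stackelberg help, while case~(b) concerns only $H_2$; you still owe an argument that the equilibrium delay on the \emph{combined} graph is $\ge 2$. This is not immediate: if only $1/2$ unit of follower flow lands in $H_1$, its delay there is $3/2$, so you must show the remaining follower mass cannot be absorbed by $H_2$ at delay $<2$ while keeping $D_{H_1}=D_{H_2}$ consistent. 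Working that out essentially reduces to the paper's direct $H_2$-only argument, at which point the Braess graph has contributed nothing. (A minor side issue: your yes-case arithmetic is off --- with $1/2$ Stackelberg on $s$--$v$--$t$ in the standard Braess graph, it is $1/2$ unit of follower flow, not $1$, that equilibrates at delay $3/2$; with a full follower unit the delay is $2$.)
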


\begin{proof}
Given an instance of the 2DDP problem, we modify it to obtain a Stackelberg routing
instance as follows. Let $m=|E|$ be the number of edges in the original instance. All of
these edges have delay function $x/m^2$. We add a source $s$ and a sink $t$, and edges
$(s,s_i)$ and $(t_i,t)$. Edges $(s,s_1)$ and $(t_2,t)$ have delay function $x$, while
edges $(s,s_2)$ and $(t_1,t)$ have delay function 1. Further, there is an $(s,t)$ edge
with delay function $\infty$, and for every edge $e=(v,w)$ in the original instance, the
new instance additionally contains edges $(s,v)$ and $(w,t)$ with delay function
$\infty$. The Stackelberg instance has demand $m^4 + 1$, and $\alpha = m^4/(1+m^4)$. 

We claim that if the instance of 2DDP is a positive instance, then there exists a
Stackelberg routing $g$ with $D(l,f(g)) \le 3/2 + 1/m$, otherwise for any Stackelberg
routing, $D(f(g)) \ge 2$. Suppose that the instance is a positive instance. Then the
Stackelberg routing $g$ sends $m^3$ units of flow on every edge $e=(v,w)$ in the original
instance that is not on the vertex-disjoint paths $p_i$, using the additional edges
$(s,v)$, $(v,w)$. Thus every edge that is not on the vertex-disjoint paths now has delay
at least $m$, while $g_e = 0$ for  every edge on the vertex-disjoint paths. Any remaining
Stackelberg flow is routed on the $(s,t)$ edge. It is now easy to verify that the
equilibrium flow $f(g)$ splits one unit of demand approximately equally between the two
paths $p_1$ and $p_2$, and has a delay at equilibrium of at most $3/2 + 1/m$. 

Now suppose the given instance does not contain two vertex-disjoint paths between $s_1$,
$t_1$ and $s_2$, $t_2$. Following the argument in~\cite{BhaskarLS13}, for a contradiction let $g$ be a Stackelberg routing for which $D(f(g)) < 2$. Let $F$ be the set
of edges with positive flow at equilibrium. Then $F$ must contain all four edges
$(s,s_1)$, $(s,s_2)$, $(t_1,t)$, $(t_2,t)$; the absence of any of these edges would give
a delay of at least 2. Further, $F$ cannot contain an $s$-$s_2$-$t_1$-$t$ path
since again this would given delay of at least 2. Hence $F$ must contain an
$s$-$s_1$-$t_1$-$t$ path and an $s$-$s_2$-$t_2$-$t$ path. These paths cannot be vertex
disjoint; let $v$ be the common vertex. Then the delay on any $s$-$v$ path must be
at least 1, and the delay on any $v$-$t$ path must be at least 1. Hence the total
delay in any instance that does not contain two vertex-disjoint paths is at least 2, which gives us a contradiction. The
hardness of determining the existence of these paths thus shows that minimizing the delay
of the equilibrium flow with Stackelberg routing is $(4/3-\epsilon)$ inapproximable, for any
$\epsilon > 0$. 
\end{proof}

\paragraph{\large Acknowledgment.} 
We thank \'{E}va Tardos for useful discussions.

\bibliographystyle{plain}
\bibliography{ir-biblio}

\begin{thebibliography}{10}

\bibitem{BabichenkoBP14}
Yakov Babichenko, Siddharth Barman, and Ron Peretz.
\newblock Simple approximate equilibria in large games.
\newblock In {\em EC}, pages 753--770, 2014.

\bibitem{BeckmannMW56}
Martin Beckman, CB~McGuire, and Christopher~B Winsten.
\newblock Studies in the economics of transportation.
\newblock Technical report, 1956.

\bibitem{BhaskarFHH09}
Umang Bhaskar, Lisa Fleischer, Darrell Hoy, and Chien-Chung Huang.
\newblock Equilibria of atomic flow games are not unique.
\newblock In {\em Proceedings of the twentieth Annual ACM-SIAM Symposium on
  Discrete Algorithms}, pages 748--757. Society for Industrial and Applied
  Mathematics, 2009.

\bibitem{BhaskarLS13}
Umang Bhaskar, Katrina Ligett, and Leonard~J Schulman.
\newblock The network improvement problem for equilibrium routing.
\newblock {\em arXiv preprint arXiv:1307.3794}, 2013.

\bibitem{BonifaciHS10}
Vincenzo Bonifaci, Tobias Harks, and Guido Sch{\"a}fer.
\newblock {S}tackelberg routing in arbitrary networks.
\newblock {\em Math. Oper. Res.}, 35(2):330--346, 2010.

\bibitem{ColeDR03}
R.~Cole, Y.~Dodis, and T.~Roughgarden.
\newblock Pricing network edges for heterogeneous selfish users.
\newblock In {\em Proceedings of the 35th Annual ACM Symposium on Theory of
  Computing}, pages 521--530, 2003.

\bibitem{ColeDR06}
Richard Cole, Yevgeniy Dodis, and Tim Roughgarden.
\newblock How much can taxes help selfish routing?
\newblock {\em J. Comput. Syst. Sci.}, 72(3):444--467, 2006.

\bibitem{DafermosN84}
Stella Dafermos and Anna Nagurney.
\newblock Sensitivity analysis for the asymmetric network equilibrium problem.
\newblock {\em Mathematical programming}, 28(2):174--184, 1984.

\bibitem{FearnleyGGS13}
John Fearnley, Martin Gairing, Paul~W. Goldberg, and Rahul Savani.
\newblock Learning equilibria of games via payoff queries.
\newblock In {\em ACM Conference on Electronic Commerce}, pages 397--414, 2013.

\bibitem{Fleischer05}
Lisa Fleischer.
\newblock Linear tolls suffice: New bounds and algorithms for tolls in single
  source networks.
\newblock {\em Theor. Comput. Sci.}, 348(2-3):217--225, 2005.

\bibitem{FleischerJM04}
Lisa Fleischer, Kamal Jain, and Mohammad Mahdian.
\newblock Tolls for heterogeneous selfish users in multicommodity networks and
  generalized congestion games.
\newblock In {\em FOCS}, pages 277--285, 2004.

\bibitem{ellipsoidbook}
Martin Gr{\"o}tschel, L{\'a}szl{\'o} Lov{\'a}sz, and Lex Schrijver.
\newblock Geometric algorithms and combinatorial optimization.
\newblock {\em Algorithms and Combinatorics}, 2:1--362, 1993.

\bibitem{Harks11}
Tobias Harks.
\newblock {S}tackelberg strategies and collusion in network games with
  splittable flow.
\newblock {\em Theory Comput. Syst.}, 48(4):781--802, 2011.

\bibitem{HartN13}
S.~Hart and N.~Nisan.
\newblock The query complexity of correlated equilibria.
\newblock {CS} ar{X}iv, 2013.

\bibitem{JiangB13}
Albert Jiang and Kevin Leyton-Brown.
\newblock Polynomial-time computation of exact correlated equilibria in compact
  games.
\newblock {\em Games and Economic Behavior}, 2013.
\newblock To appear.

\bibitem{KaporisS09}
A.~Kaporis and P.~Spirakis.
\newblock The price of optimum in {S}tackelberg games on arbitrary single
  commodity networks and latency functions.
\newblock {\em Theoretical Computer Science}, 410:745--755, 2009.

\bibitem{KarakostasK04}
George Karakostas and Stavros~G. Kolliopoulos.
\newblock Edge pricing of multicommodity networks for heterogeneous users.
\newblock In {\em FOCS}, pages 268--276, 2004.

\bibitem{KoutsoupiasP99}
Elias Koutsoupias and C.~Papadimitriou.
\newblock Worst-case equilibria.
\newblock In {\em Proceedings of 16th STACS}, pages 404--413, 1999.

\bibitem{KumarM02}
VS~Anil Kumar and Madhav~V Marathe.
\newblock Improved results for {S}tackelberg scheduling strategies.
\newblock In {\em Automata, Languages and Programming}, pages 776--787.
  Springer, 2002.

\bibitem{Papadimitriou01}
C.~H. Papadimitriou.
\newblock Algorithms, games, and the internet, 2001.

\bibitem{Pigou20}
A.~C. Pigou.
\newblock {\em The Economics of Welfare}.
\newblock Macmillan, 1920.

\bibitem{Rosen65}
J~Ben Rosen.
\newblock Existence and uniqueness of equilibrium points for concave n-person
  games.
\newblock {\em Econometrica: Journal of the Econometric Society}, pages
  520--534, 1965.

\bibitem{Roughgarden05}
T.~Roughgarden.
\newblock {\em Selfish Routing and the Price of Anarchy}.
\newblock MIT Press, 2005.

\bibitem{Roughgarden08}
T.~Roughgarden.
\newblock Routing games.
\newblock In N.~Nisan, T.~Roughgarden, \'{E}. Tardos, and V.~Vazirani, editors,
  {\em Algorithmic Game Theory}. Cambridge University Press, 2007.

\bibitem{Roughgarden03}
Tim Roughgarden.
\newblock The price of anarchy is independent of the network topology.
\newblock {\em J. Comput. Syst. Sci.}, 67(2):341--364, 2003.

\bibitem{Roughgarden04}
Tim Roughgarden.
\newblock {S}tackelberg scheduling strategies.
\newblock {\em SIAM J. Comput.}, 33(2):332--350, 2004.

\bibitem{Roughgarden06}
Tim Roughgarden.
\newblock On the severity of {B}raess's paradox: Designing networks for selfish
  users is hard.
\newblock {\em J. Comput. Syst. Sci.}, 72(5):922--953, 2006.

\bibitem{PapadimitriouR08}
Tim Roughgarden and Christos Papadimitriou.
\newblock Computing correlated equilibria in multi-player games.
\newblock {\em J. ACM}, 55(3):14, 2008.

\bibitem{RoughgardenS11}
Tim Roughgarden and Florian Schoppmann.
\newblock Local smoothness and the price of anarchy in atomic splittable
  congestion games.
\newblock In {\em SODA}, pages 255--267, 2011.

\bibitem{RoughgardenT02}
Tim Roughgarden and {\'E}va Tardos.
\newblock How bad is selfish routing?
\newblock {\em J. ACM}, 49(2):236--259, 2002.

\bibitem{ShmoysS06}
D.~Shmoys and C.~Swamy.
\newblock An approximation scheme for stochastic linear programming and its
  application to stochastic integer programs.
\newblock {\em Journal of the ACM}, 58:25, 2011.

\bibitem{SurekaW05}
Ashish Sureka and Peter~R. Wurman.
\newblock Using {T}abu best-response search to find pure strategy {N}ash
  equilibria in normal form games.
\newblock In {\em Proceedings of the Fourth International Joint Conference on
  Autonomous Agents and Multiagent Systems}, AAMAS '05, pages 1023--1029, New
  York, NY, USA, 2005. ACM.

\bibitem{Swamy12}
Chaitanya Swamy.
\newblock The effectiveness of {S}tackelberg strategies and tolls for network
  congestion games.
\newblock {\em ACM Transactions on Algorithms}, 8(4):36, 2012.

\bibitem{Wardrop52}
John~Glen Wardrop.
\newblock Some theoretical aspects of road traffic research.
\newblock In {\em ICE Proceedings: Engineering Divisions}, volume~1, pages
  325--362. Thomas Telford, 1952.

\bibitem{Wellman06}
Michael~P. Wellman.
\newblock Methods for empirical game-theoretic analysis.
\newblock In {\em AAAI}, pages 1552--1556, 2006.

\bibitem{YangH04}
Hai Yang and Hai-Jun Huang.
\newblock The multi-class, multi-criteria traffic network equilibrium and
  systems optimum problem.
\newblock {\em Transportation Research Part B: Methodological}, 38(1):1--15,
  2004.

\bibitem{YangZ08}
Hai Yang and Xiaoning Zhang.
\newblock Existence of anonymous link tolls for system optimum on networks with
  mixed equilibrium behaviors.
\newblock {\em Transportation Research Part B: Methodological}, 42(2):99--112,
  2008.

\end{thebibliography}


\appendix

\section{Proofs from Section~\ref{prelim}} \label{append-prelim}

\begin{proofof}{Lemma~\ref{stack-basic}}
The necessity of the first condition, that $g_e \le f_e^*$ on every edge, is obvious. For
the necessity of the second condition, assume $f^*$ is an equilibrium flow and on some
edge $e \not \in S$, $g_e < f_e^*$. Then $f_e(g) > 0$ since $f(g) + g = f^*$. By
definition of Wardrop equilibrium, then there must exist a path $P$ with $e \in P$ and
$l_P(f^*) \le l_Q(f^*)$ for any path $Q$. This contradicts that $e \not \in S$. 

For the sufficiency of the conditions, consider the flow $f^* - g$. This is strictly
positive only on shortest-path edges, and hence satisfies the conditions for Wardrop
equilibrium with Stackelberg flow $g$. Since the equilibrium is unique, $f(g) = f^* - g$. 
\end{proofof}

\section{Proofs from Sections~\ref{tolls-ellipsoid} and~\ref{tolls-extn}} 
\label{append-tollsextn}

\begin{proofof}{Lemma~\ref{e-eqbmcomp}}
Let $(G,l,(s_i,t_i,d_i)_{i\leq k})$ be the given routing game.
Recall that we assume that the $l_e$s satisfy
\eqref{invkcont}--\eqref{kgrowth} with $\log K=\poly(\I)$. 
Recall the convex program \eqref{eqn:wardrop} used to compute the Wardrop equilibrium. 
\begin{equation*}
\min \ \Phi(f):=\sum_e \int_0^{f_e} l_e(x) \, dx \quad \text{s.t.} \quad 
f = \sum_{i=1}^k f^i, \quad f^i \text{ is an $s_i$-$t_i$ flow of value $d_i$} \ \
\forall i=1,\ldots,k. \tag{\ref{eqn:wardrop}}
\end{equation*}
Set $\dt=\frac{\e}{4mK}$ and 
$\ve=\min\bigl\{\frac{\e(\sum_i d_i)}{2},\frac{\dt^2}{2K^2}\bigr\}$. 
Let $\htf$ be the Wardrop equilibrium, and $g$ be a feasible flow such that
$\Phi(g)\leq\Phi(\htf)+\ve$ that we compute in time 
$\poly\bigl(\I,\log(\frac{1}{\ve})\bigr)=
\poly\bigl(\I,\log(\frac{1}{\e})\bigr)$. 
(We will later require that $g$ is computed via a specific algorithm for solving
\eqref{eqn:wardrop}.)  

First, we note that given any feasible flow $g$, one can always obtain an acyclic feasible
flow $g'\leq g$ by simply canceling flow along flow-carrying cycles (of each
commodity). So in the sequel, we ignore the acyclicity condition and concentrate on
obtaining an approximate equilibrium.

Observe that for any feasible flows $h, f$, we have $\Phi(h)-\Phi(f)\geq v_f\cdot(h-f)$, 
$v_f=(l_e(f_e))_e$; $v_f$ is called the {\em subgradient} of $\Phi$ at $f$.
So we have 
$$
\sum_e g_el_e(\htf_e)-\sum_i d_iD^i(l,\htf)=
\sum_e(g_e-\htf_e)l_e(\htf_e)\leq\Phi(g)-\Phi(\htf)\leq\ve. \label{ineq1}
$$
We show below that $\sum_e(g_e-\htf_e)l_e(g_e)\leq\frac{\dt^2}{K}$.
Since $\htf$ is an equilibrium, we also have 
$\sum_e (\htf_e-g_e)l_e(\htf_e)\leq 0$. Adding the two inequalities gives
$\sum_e(g_e-\htf_e)\bigl(l_e(g_e)-l_e(\htf_e)\bigr)\leq\frac{\dt^2}{K}$. Each term in this  
sum is nonnegative and hence is at most $\frac{\dt^2}{K}$, and therefore we have 
$|g_e-\htf_e|\leq \dt$ for every edge $e$ (due to inverse-$K$-continuity).
Given this, we have that $l_P(g)\leq l_P(\htf)+mK\dt$ due to the $K$-Lipschitz
condition, and so $D^i(l,g)\leq D^i(l,\htf)+mK\dt$ for every commodity $i$.
Therefore, 
\begin{equation*}
\begin{split} 
\sum_e g_el_e(g_e) & \leq \sum_e g_el_e(\htf_e)+mK\dt\Bigl(\sum_i d_i\Bigr)
\leq\sum_i d_i\bigl(D^i(l,\htf)+mK\dt\bigr)+\ve \\
& \leq\sum_i d_i\bigl(D^i(l,g)+2mK\dt\bigr)+\ve 
\leq\sum_i d_i\bigl(D^i(l,g)+\e\bigr).
\end{split}
\end{equation*}

We now show that $\sum_e (g_e-\htf_e)l_e(g_e)\leq\frac{\dt^2}{K}$.
Suppose we obtain the near-optimal solution to \eqref{eqn:wardrop} by running the 
ellipsoid method with error parameter $\w=\frac{\ve}{mK\sum_i d_i}$. 
This takes time $\poly\bigl(\I,\log(\frac{1}{\w})\bigr)
=\poly\bigl(\I,\log(\frac{1}{\ve})\bigr)$.)
The near-optimality of $g$ then follows from the fact that there exists another feasible
flow $h$ satisfying:  
(i) $\|h-\htf\|_\infty\leq\w$, and so
$\Phi(h)-\Phi(\htf)\leq\sum_e(h_e-\htf_e)l_e(h_e)\leq\w m(\max_e l_e(h_e))\leq\w mK\sum_i d_i=\ve$;  
(ii) $\sum_e (h_e-g_e)l_e(g_e)=0$; 
see, e.g., Sections 3 and 4 and in particular, Lemma 4.5, in~\cite{ShmoysS06}. 
Thus, we have 
$\sum_e(\htf_e-g_e)l_e(g_e)\geq 0-\w m(\max_e l_e(g_e))\geq-\w mK\sum_i d_i\geq-\frac{\dt^2}{K}$.  
\end{proofof}


\paragraph{Definition of general nonatomic congestion games.}
This is the following generalization of nonatomic routing games. The edge set is now
replaced by a set $E$ of resources, and there are $k$ player-types. 
Each resource $e$ has a nonnegative, continuous, and strictly increasing delay function,  
$l_e:\R_+\mapsto\R_+$. Each player-type $i$ is described by a player-volume $d_i$ and an
explicitly-given non-empty strategy set $\Pc^i\sse 2^E$. 
The combined strategy-choices of the infinitely-many infinitesimal players of each type
$i$ can be described by an assignment $f=(f^1,\ldots,f^k)$, where $f^i:\Pc^i\mapsto\R_+$
satisfies $\sum_{P\in\Pc^i}f^i_P=d_i$; the cost incurred by a strategy
$Q\in\bigcup_i\Pc^i$ is then $l_Q(f):=\sum_{e\in Q}l_e(f_e)$, where
$f_e=\sum_{P\in\bigcup_i\Pc^i:e\in P}f^i_P$.  
We define $D^i(l,f)$ and an $\e$-equilibrium as before: so
$D^i(l,f)=\min_{P\in\Pc^i}l_P(f)$, and $f$ is an $\e$-equilibrium if 
$\sum_ef_el_e(f_e)\leq\sum_i d_i(D^i(l,f)+\e)$.
A Nash equilibrium or Nash assignment is a $0$-equilibrium, and 
is known to be unique.

The question with tolls is whether one can impose tolls $\tau\in\R^E$ on
resources---which, as before, yield delay functions $(l^\tau_e(x):=l_e(x)+\tau_e)_e$%
---in order to achieve a target assignment $f^*$ as the Nash assignment, or ensure that
$(f^*_e)_e$ is component-wise close to the Nash assignment.

\begin{proofof}{Theorem~\ref{thm:atomic}}
We first recall the definition of a Nash equilibrium.
A Nash equilibrium of the atomic splittable routing game is a feasible flow
$f$ such that $\sum_e f^i_el_e(f_e)\leq\sum_e g^i_el_e(f_e-f^i_e+g^i_e)$ for every
$s_i$-$t_i$ flow $g^i$ of value $d_i$. 
Equivalently, defining the marginal latency function
$\overline{l_{i,e}}(f;x):=l_e(x)+f^i_el'_e(x)$, where $l'(x)$ is the derivative of $l$,
this means that if $f^i_P>0$ for $P\in\Pc^i$, then $P$ is a shortest $s_i$-$t_i$ path
under the edge costs $\bigl(\overline{l_{i,e}}(f;f_e)\bigr)_e$. 

We use the ellipsoid method and dovetail the proof of Theorem~\ref{thm:ellipsoidnonatomic}.
Given the current ellipsoid center $(\hl,\htau)$, we obtain a separating hyperplane as in
the proof of Theorem~\ref{thm:ellipsoidnonatomic}, except that we use the marginal delay
functions $\bigl(\overline{\hl^{\htau}_{i,e}}\bigr)_{i,e}$. 
Let $g=f(l^*,\htau)=(g^i)_{i\leq k}$ be the flow returned by the oracle. If $g^i=f^{*i}$
for all $i$, then we are done, so suppose otherwise. 
Suppose that $f(\hl,\htau)\neq f^*$, that is, there is some $i$ such that
$f(\hl,\htau)^i\neq f^{*i}$. Note that this can be efficiently determined.
We can find a player $j$ and paths $P, Q\in\Pc^j$ such that $f^{*j}_P>0$, but   
$\sum_{e\in P}\overline{\hl^{\htau}_{j,e}}(f^*;f^*_e)>\sum_{e\in Q}\overline{\hl^{\htau}_{j,e}}(f^*;f^*_e)$.
Thus, the inequality 
$$
\sum_{e\in P}\overline{l_{j,e}}(f^*;f^*_e)+\tau(P)\leq\sum_{e\in Q}\overline{l_{j,e}}(f^*;f^*_e)+\tau(Q) 
$$
where both $l$ and $\tau$ are variables is violated by $(\hl,\htau)$ but satisfied by
$(l^*,\tau^*)$ since $(l^*,\tau^*)$ induce $f^*$ (by definition). Notice that the above
inequality is indeed linear in $l$ and $\tau$.

Now suppose $f(\hl,\htau)^i=f^{*i}$ for all $i$. Then, $g\neq f^*$, we can again find a player $j$ and
paths $P, Q\in\Pc^j$ such that $g^{j}_P>0$, but    
$\sum_{e\in P}\overline{\hl^{\htau}_{j,e}}(g;g_e)>\sum_{e\in Q}\overline{\hl^{\htau}_{j,e}}(g;g_e)$.
So consider the inequality
$$
\sum_{e\in P}\overline{l_{j,e}}(g;g_e)+\htau(P)\leq\sum_{e\in Q}\overline{l_{j,e}}(g;g_e)+\htau(Q) 
$$
where now only the $l_e$s are variables. This is violated by $(\hl,\htau)$ but satisfied by
$(l^*,\tau^*)$ since $g=f(l^*,\htau)$.
\end{proofof}

\section{Proofs from Section~\ref{tolls-sepa}} \label{append-tollssepa}

\subsection{Proof of Lemma~\ref{lem:tollsepagood}}

As mentioned in the proof sketch, we first show a property that is weaker than having a
discriminating pair. To this end, we define a \emph{good pair of subgraphs} (Definition~\ref{goodgraphs}) 
and first show in Lemma~\ref{helper:sepagood} that a pair of subgraphs satisfying this
weaker property always exist.

Let $(G,\{(s_i,t_i,d_i)\}_{i\in\K})$ be a multicommodity flow instance on a sepa graph. Let $\Hc$ be the collection of parallel subgraphs of  
$G$ under a given sepa decomposition tree for $G$. For any subgraph $H \in \Hc$ we define the \emph{internal nodes of $H$} as $V^{\aint}(H) := V(H) \setminus \{s_H,t_H\}$. The \emph{internal commodities of $H$} are $\K^{\aint}(H) := \{i \in \K: \{s_i, t_i\} \cap V^{\aint}(H) \neq \emptyset\}$. The \emph{external commodities of $H$} are $\K^{\aext}(H) := \{i \in \K:\mbox{ $s_H$, $t_H$ lie on some $s_i$-$t_i$ path}\}$.

Let $f=(f^i)_{i\in\K}$ and $\tf=(\tf^i)_{i\in\K}$ be two feasible multicommodity flows. Define 

\begin{align*}
\val{f_H^i} & := \sum_{e=(s_H,v)\in E(H)}f^i_e \, , \mbox{ and } ~ \val{f_H} ~ := ~ \sum_{i\in\K^{\aext}(H)} \val{f_H^i} \, .
\end{align*}

\begin{definition} \label{goodgraphs}
Given feasible flows $f$, $\tf$ in $G$, subgraphs $H_1, H_2$ are {\em $(f,\tf,\Hc)$-good} if: 
\begin{list}{(\roman{enumi})}{\usecounter{enumi} \topsep=0.5ex \itemsep=0ex}
\item the parallel-join of $H_1$ and $H_2$ is a subgraph in $\Hc$; 
\item $f_e\geq\tf_e$ for all $e\in E(H_1)$ and $f_e\leq\tf_e$ for all $e\in E(H_2)$; and 
\item $\val{f_{H_1}}>\val{\tf_{H_1}}$ and $\val{f_{H_2}}<\val{\tf_{H_2}}$.
\end{list}
\end{definition}

\begin{lemma} \label{helper:sepagood}
For any subgraph $H$ of $G$, let $\Hc'$ be the set of subgraphs of $H$ obtained by parallel joins in a given decomposition tree of $G$, and let $f$, $\tf$ be feasible multicommodity flows in $G$. Either there exists an $(f,\tf,\Hc')$-good pair of subgraphs or one of the following
must hold. 
\begin{list}{\arabic{enumi}.}{\usecounter{enumi} \topsep=0.5ex \itemsep=0ex
    \addtolength{\leftmargin}{-2ex}} 
\item If $\val{f_H}=\val{\tf_H}$ then $f_e=\tf_e$ for all $e \in E(H)$.
\item If $\val{f_H}>\val{\tf_H}$ then $f_e\geq\tf_e$ for all $e \in E(H)$.
\item If $\val{f_H}<\val{\tf_H}$ then $f_e\leq\tf_e$ for all $e \in E(H)$.
\end{list}
\end{lemma}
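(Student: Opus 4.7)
I would prove the lemma by structural induction on the decomposition subtree rooted at $H$, in direct analogy with the single-commodity argument of Claim~\ref{clm:gooduseful} but with careful bookkeeping to track the distinction between internal and external commodities.

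\textbf{Base case.} If $H$ is a single edge $e$, then $\Hc' = \emptyset$ (no good pair is possible) and $V^{\aint}(H) = \emptyset$, so $\K^{\aint}(H) = \emptyset$ and every commodity routing through $e$ is external. Hence $\val{f_H} = f_e$ and $\val{\tf_H} = \tf_e$, and the three dichotomies reduce to tautologies.

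\textbf{Inductive step.} Write $H$ as the series or parallel join of $H_1$ and $H_2$, with $\Hc'_1, \Hc'_2 \subseteq \Hc'$. Apply the inductive hypothesis to $H_1$ and $H_2$: any good pair found inside either subgraph is already in $\Hc'$ and we are done, so we may assume that each $H_j$ instead satisfies one of the three dichotomies.

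\emph{Parallel join.} Here $s_{H_1} = s_{H_2} = s_H$ and $t_{H_1} = t_{H_2} = t_H$, so $\K^{\aext}(H) = \K^{\aext}(H_1) = \K^{\aext}(H_2)$; the edges leaving $s_H$ in $E(H)$ partition into those in $E(H_1)$ and those in $E(H_2)$, giving $\val{f_H^i} = \val{f_{H_1}^i} + \val{f_{H_2}^i}$ per commodity, and after summing, $\val{f_H} = \val{f_{H_1}} + \val{f_{H_2}}$ (with the same identity for $\tf$). If the two inductive dichotomies have opposite strict signs---say $\val{f_{H_1}} > \val{\tf_{H_1}}$ and $\val{f_{H_2}} < \val{\tf_{H_2}}$---then they yield $f_e \ge \tf_e$ on $E(H_1)$ and $f_e \le \tf_e$ on $E(H_2)$, and $(H_1, H_2)$ (after possibly swapping labels) is an $(f, \tf, \Hc')$-good pair because $H$ itself lies in $\Hc'$. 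Otherwise the sign patterns are compatible, and summing transfers the appropriate dichotomy to $H$.

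\emph{Series join.} Let $v := t_{H_1} = s_{H_2}$. Since $s_H = s_{H_1}$ and the only edges out of $s_H$ in $E(H)$ lie in $E(H_1)$, we immediately have $\val{f_H^i} = \val{f_{H_1}^i}$ for every $i$. The remaining contributions to $\val{f_{H_1}}$ and to $\val{f_{H_2}}$ come from commodities whose paths traverse the interior vertex $v$ but do not also have $t_H$ (respectively $s_H$) on their $s_i$-$t_i$ path; by the sepa structure every edge incident to $v$ in $G$ belongs to $E(H)$, so such commodities are necessarily internal to $H$, and by commodity-wise flow conservation at $v$ the same flow that enters $H_1$ at $s_H$ exits at $v$ and re-enters $H_2$ at $v$. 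The identical accounting holds for $\tf$, so $\val{f_{H_j}} - \val{\tf_{H_j}}$ has the same sign as $\val{f_H} - \val{\tf_H}$ for each $j \in \{1,2\}$. Thus the inductive dichotomies for $H_1$ and $H_2$ are sign-compatible, and concatenating their pointwise conclusions over $E(H_1) \cup E(H_2) = E(H)$ gives the dichotomy for $H$.

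\textbf{Main obstacle.} The chief technical hurdle is the series case: one must carefully relate $\K^{\aext}(H)$, $\K^{\aext}(H_1)$, and $\K^{\aext}(H_2)$ and verify that the ``hidden'' contributions to $\val{f_{H_j}}$ from commodities internal to $H$ but external to $H_j$ appear with matching multiplicities in the $f$ and $\tf$ sums. This uses commodity-wise flow conservation at $v$ together with the structural fact that the sepa decomposition forces every edge at $v$ into $E(H)$, precluding any ``external leakage'' at the series-join vertex. Once this accounting identity is in hand, the induction closes straightforwardly.
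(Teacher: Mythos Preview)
Your proposal is correct and follows essentially the same structural-induction argument as the paper: identical base case, identical parallel-join analysis (additivity of $\val{\cdot}$ over $H_1,H_2$ plus the sign dichotomy), and the same reduction in the series case to showing that commodities in $\K^{\aext}(H_j)\setminus\K^{\aext}(H)$ contribute equally to $\val{f_{H_j}}$ and $\val{\tf_{H_j}}$. The paper phrases that last point slightly more directly---it observes that for such $i$ every $s_i$-$t_i$ path contains $s_{H_1}$ (respectively $s_{H_2}$), forcing $\val{f_{H_j}^i}=\val{\tf_{H_j}^i}$ outright and hence $\val{f_{H_j}}-\val{\tf_{H_j}}=\val{f_H}-\val{\tf_H}$---whereas your ``flow conservation at $v$'' phrasing gestures at the right fact but does not by itself pin down that the per-commodity contribution is a constant independent of the flow; tightening that one sentence would make your series case match the paper's.
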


\begin{proof}
We proceed by induction on the size of $H$. In the base case, when $H$ is a single edge,
there is no good pair of subgraphs, but one of the three cases clearly holds.
For the induction step, suppose $H$ is the join of subgraphs $H_1$ and $H_2$. If $H$ is the parallel join of $H_1$ and $H_2$, then any external commodities of $H$ are external commodities of $H_1$ and $H_2$ as well; similarly, any external commodities of $H_1$ and $H_2$ are external commodities of $H$ as well. Hence $\val{f_H} = \val{f_{H_1}} + \val{f_{H_2}}$. Note that if $\val{f_{H_i}}>\val{\tf_{H_i}}$ and $\val{f_{H_j}}<\val{\tf_{H_j}}$ for $i\neq j$ and $i,j \in \{1,2\}$ then $H_1$ and $H_2$ form a good pair.

To verify the three cases, suppose $\val{f_H}=\val{\tf_H}$. If $\val{f_{H_i}}=\val{\tf_{H_i}}$ for $i \in \{1,2\}$, then by induction $f_e = \tf_e$ for $e \in E(H)$. Otherwise, by the induction hypothesis for $i\neq j$ and $i,j \in \{1,2\}$, $\val{f_{H_i}}>\val{\tf_{H_i}}$ and $\val{f_{H_j}}<\val{\tf_{H_j}}$ yielding a good pair. If $\val{f_H}>\val{\tf_H}$ then again, either $\val{f_{H_i}}>\val{\tf_{H_i}}$ and $\val{f_{H_j}}<\val{\tf_{H_j}}$ yielding a good pair, or $\val{f_{H_i}}>\val{\tf_{H_i}}$ and $\val{f_{H_j}}=\val{\tf_{H_j}}$. In this case, by induction, $f_e \ge \tf_e$ for all $e \in E(H)$.

Now suppose $H_1$ and $H_2$ are in series. In this case, note that $\K^{\aext}(H) \subseteq \K^{\aext}(H_i)$ for $i \in \{1,2\}$. Further, if commodity $i \in \K^{\aext}(H_1)$ but $i \not \in \K^{\aext}(H)$, then $t_i$ must be an internal node of $H_2$. Since every $s_i$-$t_i$ path contains $s_{H_1}$, and $f$, $\tf$ are feasible flows in $G$, $\val{f_{H_1}^i} = \val{\tf_{H_1}^i}$. Similarly, if commodity $i \in \K^{\aext}(H_2)$ but $i \not \in \K^{\aext}(H)$, then $s_i$ must be an internal node of $H_1$. Since every $s_i$-$t_i$ path contains $s_{H_2}$, $\val{f_{H_2}^i} = \val{\tf_{H_2}^i}$. Thus,

\begin{align*}
\val{f_{H_1}} - \val{\tf_{H_1}} & = \sum_{i \in \K^{\aext}(H_1) \cap \K^{\aext}(H)} \left(\val{f_{H_1}^i} - \val{\tf_{H_1}^i}\right) + \sum_{i \in \K^{\aext}(H_1) \setminus \K^{\aext}(H)} \left(\val{f_{H_1}^i} - \val{\tf_{H_1}^i}\right) \\
	& = \sum_{i \in \K^{\aext}(H_1) \cap \K^{\aext}(H)} \left(\val{f_{H_1}^i} - \val{\tf_{H_1}^i}\right) \\
	& = \sum_{i \in \K^{\aext}(H)}\left(\val{f_{H_1}^i} - \val{\tf_{H_1}^i}\right) ~ = \val{f_{H}} - \val{\tf_{H}} 
\end{align*}

Similarly, $\val{f_{H_2}} - \val{\tf_{H_2}}  = \val{f_{H}} - \val{\tf_{H}}$. By induction, either there is a good subgraph, or one of the three cases in the lemma must hold.
\end{proof}

\begin{proofof}{Lemma~\ref{lem:tollsepagood}} 
Since $f$ and $\tf$ are feasible multicommodity flows and $f\neq\tf$,
Lemma~\ref{helper:sepagood} implies that there is an $(f,\tf,\Hc)$-good pair of
subgraphs $H_1$, $H_2$. So
(a) $f_e\geq\tf_e$ for all $e\in E(H_1)$ and $f_e\leq\tf_e$ for all $e\in E(H_2)$, and 
(b) $\val{f_{H_1}}>\val{\tf_{H_1}}$ and $\val{f_{H_2}}<\val{\tf_{H_2}}$.
If $f_e>\tf_e$ for all $e\in E(H_1)$, then we are done. So suppose otherwise.

In the fixed decomposition tree of $G$, consider the subgraphs in the subtree rooted at
subgraph $H_1$. Let $K$ be a minimal subgraph that contains both $f_e>\tf_e$ edges and
$f_e=\tf_e$ edges; that is, every subgraph of $K$ only contains $f_e>\tf_e$ edges or
$f_e=\tf_e$ edges but not both. 
Let $K$ be the join of $K_1$ and $K_2$, where $K_1$ contains $f_e>\tf_e$ edges. 
If $K_1$, $K_2$ are in parallel, then $K_1, K_2$ is an $(f,\tf,\Hc)$-discriminating pair.

To complete the proof, we show that it cannot be that $K_1$ and $K_2$ are in series. 
Let $v$ be the node joining $K_1$ and $K_2$, so all edges with $v$ at their head lie in $E(K_1)$, and all edges with $v$ at their tail lie in $E(K_2)$.
Given a feasible multicommodity flow $h$, define
$b_v(h)=\sum_{(v,u)\in E}h_{v,u}-\sum_{(u,v)\in E}h_{u,v}$.
Observe that $b_v(h)$ is simply the node balance $\sum_{i:v=s_i}d_i-\sum_{i:v=t_i}d_i$, and is thus
independent of the multicommodity flow. 
Therefore, $b_v(f)=b_v(\tf)$. Rearranging, this gives
$\sum_{e\in E(K_1):e=(u,v)}(f_e-\tf_e)
=\sum_{e\in E(K_2):e=(v,u)}(f_e-\tf_e)$, 
which is a contradiction.
\end{proofof}


\section{Proofs from Section~\ref{tolls-linear}} \label{append-tollslinear}


\begin{proofof}{Claim~\ref{clm:nonnegativetolls}}
We assume that in $\tau$, there is a single edge $e' = (u,w)$ with negative tolls. If there are multiple such edges, simply repeating the procedure in this proof gives the required tolls $\tau'$. If $f_{e'}(\tau) = 0$, increasing the toll on this edge does no change the equilibrium flow. Hence we assume that $\ef_{e'}(l, \tau) > 0$.

Let $E^+$ be the edge set of the graph if it is acyclic; otherwise, let 
$E^+$ be the set of edges with strictly positive flow in $g = \ef(l, \tau)$. 
Since $g$ is an equilibrium flow, the set of edges $E^+$ is acyclic. Let $\sigma(v)$ be an
ordering of the vertices given by a topological sort of the graph $(V,E^+)$. Define $S =
\{v \in V: \sigma(v) \le \sigma(u)\}$, where $e'=(u,v)$ is the edge with negative
toll. Then $s \in S$ and $t \in V \setminus S$. Let $\tau'$ be the tolls obtained by
adding $-\tau_{e'}$ to every edge $e \not \in E^+$, and also to every edge $e=(x,y) \in
E^+$ across the cut $(S, V \setminus S)$. That is, 

\begin{align*}
\tau_{xy}' & = \left \{ \begin{array}{ll}
	\tau_{xy} - \tau_{e'} & \mbox{ if $x \in S$, $y \in V \setminus S$, or $(x,y) \not \in E^+$ } \\
	\tau_{xy} & \mbox{ otherwise. }
	\end{array} \right.
\end{align*}

\noindent By this procedure, the toll does not decrease on any edge and increases to zero on edge $e'$. We claim that the flow at equilibrium remains unchanged. Consider first a path $P$ with $g_P > 0$. All edges $e \in P$ are in $E^+$, and exactly one edge crosses the cut $(S, V \setminus S)$. Hence the delay on every such path increases by exactly $-\tau_{e'}$. On any other path, there is at least one edge $e \not \in E^+$, hence the delay these paths increases by at least $-\tau_{e'}$. The flow $g$ is thus a flow on shortest paths with tolls $\tau'$, and hence $g = \ef(l,\tau')$.
\end{proofof}


\begin{proofof}{Theorem~\ref{lem:linearcont}}
We first prove (iii). Let $\tau:= \mone_r \delta$. Let $\Phi$ be the potential function as defined in~\eqref{eqn:wardrop} for the delay functions in $\varGamma$, and $\Phi^\tau$ be the potential function with delay functions that include the toll $\tau$. Note that for any flow $f$, $\Phi^\tau(f) = \Phi(f) + \tau_r f_r$. Suppose for a contradiction that $f_r(\tau) > f_r(0)$. Then

\begin{align*}
\Phi^\tau(f(0)) & = \Phi(f(0)) + \tau_r f_r(0) < \Phi(f(0)) + \tau_r f_r(\tau) < \Phi(f(\tau)) + \tau_r f_r(\tau) = \Phi^\tau(f(\tau)) \, .
\end{align*}

\noindent But this is a contradiction, since $f(\tau)$ is the unique minimizer of $\Phi^\tau$.

We now prove part (iv) of the theorem. Let $\tau := -\mone_r \delta$. We first prove the lemma for the case that $\mcs(l,f(\tau)) = \mcs(l,f(0))$, and then extend it to the case when the set of shortest-path edges differ. For two flows $f$ and $g$ of the same value in $G$, the difference $h = f-g$ is a circulation and is possibly negative on some edges. If $h_{uv} > 0$ then $(u,v)$ is a forward edge, and if $h_{uv} < 0$ then $(u,v)$ is a backward. We use $E^+$ and $E^-$ for the set of forward and backward edges respectively.

We want to define a decomposition of $h$ along cycles. For this, let $D$ be the directed graph with the same vertex set as $G$, but with $(u,v) \in E(D)$ if $(u,v) \in E$ and $h_{uv} > 0$, and $(v,u) \in E(D)$ if $(u,v) \in E$ and $h_{uv} < 0$. Then $h$ defines a circulation $\tilde{h}$ in graph $D$, where $\tilde{h}_{uv} = h_{uv}$ if $(u,v)$ is a forward edge, and $\tilde{h}_{vu} = -h_{uv}$ if $(u,v)$ is a backward edge. Let $\{\tilde{h}_C\}_{C \in \mcc}$ be a decomposition of $\tilde{h}$ along directed cycles in $D$. Then for $(u,v) \in E^+$, $h_{uv} = \sum_{C: (u,v) \in C} \tilde{h}_C$, and for $(u,v) \in E^-$, $h_{uv} = - \sum_{C: (v,u) \in C} \tilde{h}_C$.

Let edge $r=(x,y)$. We will show that $(y,x)$ is in every cycle $C$. For a contradiction, suppose there exists $C' \in \mcc$ so that $(y,x) \not \in  C'$. For any edge $e \in E^+$, $f_e(\tau) > f_e(0)$, and for any edge $e \in E^-$, $f_e(\tau) < f_e(0)$. Further, since $\mcs(f(0)) = \mcs(f(\tau))$, the sum of latencies along cycle $C'$ must be zero for both flows $f(\tau)$ and $f(0)$. However,

\begin{align*}
\sum_{e \in C'} l_e(f_e(\tau)) & = \sum_{e \in E^+ \cap C'} \left(l_e(f_e(\tau)) + \tau_e\right)  - \sum_{e \in E^- \cap C'} \left(l_e(f_e(\tau)) + \tau_e\right) \\
	& > \sum_{e \in E^+ \cap C'} \left(l_e(f_e(0)) + \tau_e\right)  - \sum_{e \in E^- \cap C'} \left(l_e(f_e(0)) + \tau_e\right) \\
	& \ge \sum_{e \in E^+ \cap C'} l_e(f_e(0))  - \sum_{e \in E^- \cap C'} l_e(f_e(0))  ~ = 0 \, .
\end{align*}

\noindent where the second inequality is because $\tau_e = 0$ for $e \neq r$, and $r \not \in E^- \cap C'$. This is a contradiction, since the sum of latencies along cycle $C'$ must be zero for flow $f(\tau)$. Thus, for every cycle $C \in \mcc$, $(y,x)$ must be in $C$.

Then

\[
f_{uv}(\tau) ~=~ f_{uv}(0) + \sum_{C \in \mcc: (u,v) \in C} f_C - \sum_{C \in \mcc: (v,u) \in C} f_C \, .
\]

Since by the claim edge $r$ is a backward edge in every cycle $C \in \mcc$, $|f_r(\tau) - f_r(0)| = |\sum_{C \in \mcc} f_C|$, which is obviously an upper bound on the change in flow on any edge.

We now extend the lemma for the case where $\mcs(f(0)) \neq \mcs(f(\tau))$. In fact, we show that for any $\epsilon > 0$, $|f_r(\tau) - f_r(0)| \ge \|f(\tau) - f(0)\|_\infty - \epsilon$. Pick $\nu = \epsilon^2/(Kd 2^{2m})$, where $m$ is the number of edges. Let $a_0 = 0$. For any $a_i$ we define

\begin{align*}
b_i & = \sup \{x \in [a_i,\delta]: \mcs(f(-\mone_r x)) = \mcs(f(-\mone_r a_i)) \}\, .
\end{align*}

\noindent and $a_{i+1} = b_i + \nu$. Let $j$ be such that $\delta \in [a_j, a_{j+1}]$. By definition, either $\delta = b_j$ or $\delta \in [b_j, a_{j+1}]$. Since the number of possible sets of shortest-path edges is $2^m$, $j \le 2^m$. Also, for all $i$, by the first part of the lemma and by continuity of equilibrium flow, $|f_r(- \mone_r a_i) - f_r(- \mone_r b_i)| \ge \|f(-\mone_r a_i) - f(- \mone_r b_i)\|_\infty$. Further by Corollary~\ref{cor:indiffclose}, $\Vert f(- \mone_r b_i) - f_r(- \mone_r a_{i+1})\Vert \le \sqrt{K d\nu}$.  Summing up,

\begin{align*}
\Vert f(0) - f(-\mone_r \delta)\Vert_\infty & \le \sum_{i=0}^j \Vert f(-\mone_r a_i) - f(-\mone_r a_{i+1})\Vert_\infty \\
	& \le \sum_{i=0}^j \Vert f(-\mone_r a_i) - f(-\mone_r b_i)\Vert_\infty + \sum_{i=0}^j \Vert f(-\mone_r b_i) - f(-\mone_r a_{i+1})\Vert_\infty \\
	& \le \sum_{i=0}^j |f_r (-\mone_r a_i) - f_r(-\mone_r b_i)| + 2^m \sqrt{Kd\nu} \\
	& \le |f_r(0) - f_r(-\mone_r \delta) | + \epsilon
\end{align*}

\noindent where the last inequality follows be the monotonicity of $f_r$ as a function of the toll on edge $r$. By taking limits, $\Vert f(0) - f(\tau)\Vert_\infty \le |f_r(0) - f_r(\tau) |$.
\end{proofof}

\end{document}